\def\NAT@spacechar{~}%
\newenvironment{mathenum}{%
  \begin{enumerate}[(i)]%
  }{%
  \end{enumerate}%
}
\newcommand{\enuref}[1]{(\ref{#1})}
\newcommand{\prob}[5]{%
  \begin{center}
    \begin{quote}
      #1\nopagebreak
      \begin{compactdesc}
      \item[#2]#3
      \item[#4]#5
      \end{compactdesc}
    \end{quote}
  \end{center}
}
\newcommand{\decprob}[3]{\prob{#1}{Input:}{#2}{Question:}{#3}}
\newcommand{\wf}{\ensuremath{\lambda}}
\newcommand{\WF}{\ensuremath{\Lambda}}
\DeclareMathOperator{\suc}{succ}
\DeclareMathOperator{\poly}{poly}
\newcommand{\vcn}{\ensuremath{\tau}}
\newcommand{\tw}{\ensuremath{\omega}}
\newcommand{\cvsep}[2]
{\ensuremath{#1}\nobreakdash-\ensuremath{#2}\nobreakdash-separator}
\newcommand{\pth}[2]
{\ensuremath{#1}\nobreakdash-\ensuremath{#2}\nobreakdash-path}
\newcommand{\BB}{\textsc{Bal\-anced Par\-ti\-tion\-ing}}
\newcommand{\ClV}{\textsc{Cut\-ting $\ell$ Ver\-tices}}
\newcommand{\EWBS}{\textsc{Edge-Weighted Bi\-sec\-tion}}
\newcommand{\MaxCu}{\textsc{Max\-i\-mum Cut}}
\newcommand{\BP}{\textsc{Bi\-sec\-tion}}
\newcommand{\VBP}{\textsc{Ver\-tex Bi\-sec\-tion}}
\newcommand{\UBP}{\textsc{Unary Bin Packing}}
\theoremstyle{theorem}
\newtheorem{theorem}{Theorem}
\theoremstyle{definition}
\newtheorem{construction}{Construction}%
\newtheorem{definition}{Definition}
\newtheorem{corollary}{Corollary}
\newtheorem{lemma}{Lemma}
\newtheorem{remark}{Remark}
\newtheorem{proposition}{Proposition}
\date{}
\title{On the Parameterized Complexity of Computing Balanced
  Partitions in Graphs\thanks{An extended abstract of this article
    appeared at the \emph{39th International Workshop on
      Graph-Theoretic Concepts in Computer Science
      (WG~2013)}~\cite{BFSS13}. The extended abstract contains results
    regarding \BP{} and \VBP{}, while this article additionally
    provides full proof details as well as parameterized complexity
    analyses of \BB{}. This article is published in \emph{Theory of
      Computing Systems}, available at
    \href{http://dx.doi.org/10.1007/s00224-014-9557-5}{link.springer.com}.}}
\author[1]{René van Bevern}
\affil[1]{Institut für Softwaretechnik und Theoretische Informatik, TU Berlin,
  Germany, \texttt{\{rene.vanbevern,manuel.sorge\}@tu-berlin.de}}
\author{Andreas Emil Feldmann}
\affil{Combinatorics \& Optimization, University of Waterloo, Canada, \texttt{andreas.feldmann@uwaterloo.ca}}
\author[1]{Manuel Sorge}
\author{Ondřej Such\'y}
\affil{Faculty of Information Technology, Czech Technical University in 
 Prague, Czech Republic, \texttt{ondrej.suchy@fit.cvut.cz}}
\begin{document}

\maketitle

\begin{abstract}
\looseness=-1
A balanced partition is a clustering of a graph into a given number of 
equal-sized parts. For instance, the \BP{} problem asks to remove at most $k$ 
edges in order to partition the vertices into two equal-sized parts. 
We prove that \BP{} is FPT for the distance to constant cliquewidth if we are 
given the deletion set. This implies FPT algorithms for some well-studied 
parameters such as cluster vertex deletion number and feedback vertex set. 
However, we show that \BP{} does not admit polynomial-size kernels for these 
parameters. 

For the \VBP{} problem, vertices need to be removed in order to obtain two 
equal-sized parts. We show that this problem is FPT for the number of 
removed vertices $k$ if the solution cuts the graph into a constant number $c$
of connected components. The latter condition is unavoidable, since we also 
prove that \VBP{} is W[1]-hard w.r.t.~$(k,c)$. 

\looseness=-1
Our algorithms for finding bisections can easily be adapted to finding 
partitions into $d$~equal-sized parts, which entails additional running 
time factors of~$n^{O(d)}$. We show that a substantial speed-up is unlikely 
since the corresponding task is W[1]-hard w.r.t.~$d$, even on forests of 
maximum degree two. We can, however, show that it is FPT for the vertex cover 
number. 
\end{abstract}

\section{Introduction}
\label{sec:intro}
In this article we consider partitioning problems on graphs. These are 
clustering-type problems in which the clusters need to be equal-sized, and the 
number of edges connecting the clusters needs to be minimized. At the same time 
the desired number of clusters is given. We begin with the setting in which only 
two clusters need to be found, and later generalize to more clusters. In 
particular we consider the \BP{}, \VBP{}, and \BB{} problems, which are defined 
below. We study these problems from a parameterized complexity point of view 
and consider several parameters that naturally arise from the known results (see 
\autoref{table:results}). 
That is, we consider a given parameter $p$ of an input instance and 
ask whether an algorithm with running time $f(p)\cdot n^{O(1)}$ exists that 
optimally solves the problem. Here $n$~is the instance size and $f(p)$~is a 
function that only depends on~$p$. If there is such an algorithm, then the 
problem is called \emph{fixed-parameter tractable} (or FPT for short) with 
respect to~$p$. For in-depth introductions to parameterized complexity we refer 
to the literature~\cite{Nie06, FG06, DF13}. Throughout this article we use 
standard terminology of graph theory~\cite{Diestel10}. 

\begin{table}
\centering\footnotesize
  \caption{Overview of known and new parameterized results.}
  \begin{tabular}{p{.13\textwidth} p{.37\textwidth} p{.37\textwidth}} 
   \toprule
   Problem & Parameter & Results \\%

   \midrule
   \BP{}&
    cut size  &
	FPT for planar graphs~\cite{BuiP92} \newline
	FPT in general~\cite{CLPPS14} \newline
	 No poly-size kernel (\autoref{the:no-poly-kernel-tw-k})\\
    \cmidrule{2-3}
    &treewidth  &
	 FPT~\cite{SoumyanathD90,Wie90}\newline
	 No poly-size kernel (\autoref{the:no-poly-kernel-tw-k})\\
    \cmidrule{2-3}
    &union-oblivious (e.g.\ bandwidth) &
	 No poly-size kernel (\autoref{the:no-poly-kernel-tw-k}) \\
    \cmidrule{2-3}
    &cliquewidth-$q$ deletion number &
	 FPT (\autoref{cor:constant-cliquewidth-deletion-set}) \\
    \cmidrule{2-3}
   & cliquewidth  &
         XP, W[1]-hard~\cite{FGLS10} \\

   \midrule
   \multirow{2}{.16\textwidth}{\VBP{}}&
     cut size&
	 FPT if nr of cut out components
         is constant (\autoref{thm2})\\
   \cmidrule{2-3}
   &
   cut size \& \newline nr of cut out components&
	 W[1]-hard (\autoref{vbswhard})\\

    \midrule
    \multirow{2}{.16\textwidth}{\BB{}}&
    cut size \& \newline nr of cut out components
 	& W[1]-hard (\autoref{prop:bb-whard-d-treewidth-k})\\
    \cmidrule{2-3}
    &treewidth  
	& NP-hard for trees~\cite{AFFoschini12}\\
    \cmidrule{2-3}
    &cliquewidth
	& NP-hard for cluster graphs~\cite{RackeA06}\\
    \cmidrule{2-3}
    &vertex cover
        & FPT (\autoref{bb-vc-fpt}) \\
    \cmidrule{2-3}
    &nr $d$ of parts in the partition
	& W[1]-hard for forests (\autoref{thm:bb-whard-forests})\\
    \bottomrule
  \end{tabular}
  \label{table:results}
\end{table}

\subsection{The \BP{} problem}
The first problem we consider is the 
NP-hard~\cite{GareyJS76} \BP{} problem for which the $n$~vertices of a 
graph~$G=(V,E)$ need to be partitioned into two parts $A$ and $B$ of size at 
most~$\lceil n/2\rceil$ each, while minimizing the number of edges connecting 
$A$ and $B$. The partition~$\{A,B\}$ is called a \emph{bisection} of~$G$, and  
the number of edges connecting vertices in $A$ with vertices in~$B$ is called 
the \emph{cut size}. Throughout this article it will be convenient to consider 
\BP{} as a decision problem, which is defined as follows. 

\decprob{\BP{}}
{A graph~$G$ and a positive integer~$k$.}
{Does~$G$ have a bisection with cut size at most~$k$?} 

The \BP{} problem is of importance both in theory and practice, and for instance 
has applications in divide-and-conquer algorithms~\cite{LiptonT80}, computer 
vision~\cite{KwatraSETB03}, and route planning~\cite{DellingGPW11}. 
As a consequence, the problem has been thoroughly studied in the past. It is 
known that it is NP-hard in general~\cite{GareyJS76} and that the minimum cut size 
can be approximated within a factor of~$O(\log n)$~\cite{Racke08}. Assuming the 
Unique Games Conjecture, no constant factor approximations exist~\cite{KhotV05}. 
For special graph classes such as trees~\cite{MacG78} and solid 
grids~\cite{AFWidmayer11} the optimum cut size can be computed in polynomial 
time. For planar graphs it is still open whether \BP{} is NP-hard, but it is 
known to be FPT with respect to the cut size~\cite{BuiP92}.

It was recently shown by~\citet{CLPPS14} that \BP{} is FPT with respect to 
the cut size on general graphs.
We complement this result by showing that \BP{} does not allow for
polynomial-size problem kernels for this parameter unless
coNP${}\subseteq{}$NP/poly.  Hence, presumably there is no
polynomial-time algorithm that reduces an instance of \BP{} to an
equivalent one that has size polynomial in the desired cut size. We
prove this by giving a corresponding result for all parameters that
are polynomial in the input size and that do not increase when taking
the disjoint union of graphs. We call such parameters
\emph{union-oblivious}. This includes parameters such as treewidth,
cliquewidth, bandwidth, and others.

Some of these parameters have been considered for the \BP{} problem before. 
For instance, we already mentioned the cut size, and it was shown that the
problem is FPT with respect to treewidth~\cite{SoumyanathD90,Wie90}. Even 
though treewidth is probably the most widely used graph parameter for sparse 
graphs, it is not suitable for dense graphs, although they can also have simple 
structure. For that purpose, \citet{CO00} introduced the parameter 
\emph{cliquewidth}~\cite{EGW01}. \citet{FGLS10} showed that \BP{} is W[1]-hard with respect to cliquewidth, that is, an FPT-algorithm is unlikely. On the positive side, they give an $n^{O(q)}$-time algorithm if a cliquewidth-$q$ expression is given. Generalizing the latter result %
we show that \BP{} is FPT with respect to the \emph{cliquewidth-$q$ 
vertex deletion number}: the number of 
vertices that have to be deleted in order to obtain a graph of constant 
cliquewidth~$q$.\footnote{To be precise, we need the vertex deletion set 
to be given to obtain an FPT algorithm for this parameter.} To the best of our knowledge this parameter has not been 
considered in the past. The cliquewidth-$q$ deletion number is a generalization 
of several
well-studied graph parameters like vertex cover number~($q=1$)~\cite{ChenKX10}, cluster
vertex deletion number and cograph vertex deletion number ($q=2$)~\cite{CO00},
feedback vertex set number~($q=3$)~\cite{KLL02}, and treewidth-$t$ vertex deletion 
number~($q=2^{t + 1} +1$)~\cite{CO00,FLMS12}. 

\subsection{The \VBP{} problem}

The next problem we consider is the \VBP{} problem, for which vertices instead 
of edges need to be removed in order to bisect the graph. More formally, 
let~$G$ be a graph and $S \subseteq V(G)$ be a subset of the vertices of $G$. 
We call~$S$ an \emph{\cvsep{A}{B}} for~$G$ if there are vertex sets~$A, B 
\subseteq V(G)$ such that~$\{S, A, B\}$ forms a partition of~$V(G)$, and there 
are no edges between~$A$ and~$B$ in~$G$. Moreover, we call~$S$ \emph{balanced} 
if~$||A| - |B|| \leq 1$. The problem then is the following.

\decprob{\VBP{}}{A graph~$G$ and a positive integer~$k$.}{Does~$G$ contain a
balanced separator of size at most~$k$?}

We show that this 
problem is more general than \BP{} in the sense that any solution to \VBP{} can 
be transformed into a solution to \BP{} having (almost) the same cut size and 
(almost) the same number of cut out connected components, in polynomial time.
In contrast to \BP{} however, we prove that \VBP{} is W[1]-hard with respect to 
the cut size. In fact this still holds true when combining the cut size and the 
number of cut out connected components as a parameter. This means that to obtain 
a fixed-parameter algorithm it is unavoidable to impose some additional 
constraint. 

\looseness=-1 We show that the \VBP{} problem is FPT with respect to the cut size,
if an optimal solution cuts the graph into a given \emph{constant} number of 
connected components.
We chose this condition as a natural candidate: %
First, in practice optimal bisections often cut into very few connected 
components, typically only into two or three~\cite{Arbenz,KarypisK1998, Werneck, 
DFGRW13}. Second, also for random regular 
graphs the sets~$A$ and~$B$ of the optimum bisection are connected with high 
probability~\cite{BuiCLS87}. And third, \VBP{} remains NP-hard even if all 
optimum bisections cut into exactly two connected components (this follows 
easily by combining the NP-hardness proof for \BP{} of \citet{GareyJS76} with 
our techniques from \autoref{sec:redbptovbp}; see also~\cite{FleischerB09} 
for a related problem). 
To achieve our FPT result for \VBP{}, we generalize the \emph{treewidth 
reduction} technique for separation problems that has been recently introduced 
by \citet{MOR13}. By adapting it to the global balancedness constraint of our 
problem, we address an open question by \citet{MOR13} of whether this is 
possible. 

\subsection{The \BB{} problem}

Apart from \BP{} and \VBP{} we also study the \BB{} problem.
This is a natural generalization of the \BP{} 
problem, in which the $n$ vertices of a graph need 
to be partitioned into~$d$ equal-sized parts, for some arbitrary given number 
$d$ (instead of only two). More formally the problem is defined as follows, where the 
\emph{cut size} of a partition is the number of edges incident to vertices of 
different parts. %

\pagebreak[3]
\decprob{\BB{}}{A graph $G$ and two positive integers $k$ and $d$.}{Is there a 
partition of the vertices of $G$ into $d$ sets of size at most $\lceil 
n/d\rceil$ each and with cut size at most $k$?}

Note that there is no explicit lower bound on the part sizes. This means that 
they can technically speaking be unbalanced. However the definition above is the 
most commonly used one in the literature. Also, for numerous applications such 
as parallel computing~\cite{ArbenzLMMS07} or VLSI circuit 
design~\cite{BhattL84}, only an upper bound is needed.

Our algorithms for the special case \BP{} can easily be extended to 
algorithms for \BB{}, as we will describe in \autoref{appendix:W-hard}. However the algorithms have 
additional running time factors in the order of~$n^{O(d)}$. We observe that \BB{} is W[1]-hard for the number~$d$ of cut out parts even on forests of 
maximum degree two and hence it is unlikely that running time factors of~$n^{f(d)}$ can be avoided. Furthermore, we show that the problem remains 
W[1]-hard on more general graphs for the larger number~$c$ of cut out connected components. 

Regarding structural graph parameters, many of the known hardness results for \BB{} already rule out FPT algorithms for parameters such as treewidth or 
cluster vertex deletion number (\BB{} is NP-hard for trees~\cite{AFFoschini12} 
and graphs formed by a disjoint union of cliques~\cite{RackeA06}). On the 
positive side we can show that \BB{} is FPT with respect to the vertex cover 
number $\tau$. Recently \citet{GanianO13} developed an algorithmic framework 
with which they were able to show that \BB{} is FPT with respect to the combined 
parameters $\tau$ and $d$. Hence we improve on this result by removing the 
dependence on $d$.

\subsection{Organization of the article} 
We begin with presenting our results for \VBP{} in \autoref{sec:constantcomp}. 
These include the hardness of \VBP{}, the FPT algorithm in case the number of 
cut out components is constant, and the reduction from \BP{} showing that \VBP{} 
is more general. \autoref{sec:incomp} contains incompressibility results for 
\BP{}. In \autoref{dcc} we give our FPT algorithm for the cliquewidth-$q$ 
deletion number. Hardness results for \BB{} are given in 
\autoref{appendix:W-hard}, and \autoref{appendix:BB-FPT} contains the FPT 
algorithm w.r.t. the vertex cover number. 

\section{\VBP{} and the Cut Size Parameter}\label{sec:constantcomp}
In this section we show how to compute optimal bisections that cut into some 
constant number of connected components in FPT-time with respect to the 
cut size. As mentioned in the introduction, for
\VBP{} one searches for a small set of vertices in order to bisect a given 
graph. We note below that \VBP{} generalizes \BP{} and hence, there is also a 
corresponding algorithm for \BP{}.

The outline of this section is as follows. First, we note that \VBP{} is W[1]-hard with respect to~$k$ and the number~$c$ of cut out components (\autoref{sec:vbp-hard}). Hence, an additional constraint like~$c$ being constant is unavoidable to get 
an FPT-algorithm. We then proceed to show that \VBP{} indeed generalizes \BP{} (\autoref{sec:redbptovbp}). The FPT algorithm with respect to~$k$ and constant number of cut out components for \VBP{} is given in \autoref{sec:algvbp}.

\subsection{Hardness of \VBP{}}\label{sec:vbp-hard}
\newcommand{\MCC}{\textsc{Mul\-ti\-col\-ored Clique}}

In this section we prove that \VBP{} is W[1]-hard with respect to the combination of the desired separator size and the number of cut out components. It follows, that it is in particular W[1]-hard for the 
parameter separator size.%

\begin{theorem}\label{vbswhard}
\VBP{} is W[1]-hard with respect to the combined parameter~$(k,c)$, where 
$k$~is the desired separator size and $c$~is the maximum number of components after 
removing any set of at most $k$~vertices from the 
input graph.%
\end{theorem}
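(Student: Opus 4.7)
The plan is to reduce from \MCC{}, which is $W[1]$-hard parameterized by the number $k'$ of colors. Given an instance with graph $H$ and color classes $V_1,\dots,V_{k'}$, I would construct a graph $G$ together with a bound $k = \binom{k'}{2}+k'$ such that $G$ admits a balanced \cvsep{A}{B} of size at most $k$ if and only if $H$ has a multicolored clique, and simultaneously $G$ has the structural property that removing any $k$ vertices leaves at most $c = g(k')$ components.

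The construction would have three layers. First, I would introduce two \emph{anchor} subgraphs $A_0$ and $B_0$ that will carry the bulk of the eventual sides $A$ and $B$. I want each anchor to remain connected after removing any $k$ vertices, so it suffices to take both as cliques whose size is much larger than $k$; their sizes are tuned so that global balance is only achievable under a specific legal configuration. Every remaining vertex of $G$ is made adjacent to vertices of both $A_0$ and $B_0$, so no small gadget piece can become an isolated component once a separator of size at most $k$ is removed. Second, for each color class $V_i$ I would attach a \emph{selection gadget} between $A_0$ and $B_0$ — a caterpillar-like structure whose only cheap $(A_0,B_0)$-separators are single vertices in bijection with $V_i$. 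Choosing vertex $v_j^i \in V_i$ corresponds to placing the separator at the corresponding position, and additional pendant vertices are attached so that the balance imbalance contributed by this particular cut uniquely encodes the index $j$. Third, for each pair $\{i,j\}$ of color classes I would attach an \emph{edge verification gadget} with a vertex $e_{uv}$ for every edge $uv \in E(H)$ between $V_i$ and $V_j$, wired so that the cheapest local separator uses exactly one such $e_{uv}$, and only if the selected vertices in the $i$th and $j$th selection gadgets are precisely $v_u^i$ and $v_v^j$.

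For correctness, any balanced separator of size at most $k$ must spend exactly one vertex in each selection gadget and one vertex in each verification gadget, so it selects $k'$ vertices of $H$ and $\binom{k'}{2}$ edges. The balance constraint $\lvert|A|-|B|\rvert\leq 1$, combined with the pendant weights, forces the edges to be consistent with the selected vertices, yielding a multicolored clique in $H$; conversely, such a clique directly yields a separator of the claimed size.

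The main obstacle is enforcing the graph-robustness bound $c \leq g(k')$ while still having the gadgets behave as intended. Classical separator reductions produce long paths whose removal shatters the graph into many components, which would blow up $c$. To avoid this I want every non-anchor vertex adjacent to both $A_0$ and $B_0$, so that any removed set of at most $k$ vertices leaves the two anchors intact as at most two components plus at most $O(k)$ tiny isolated fragments, giving $c \leq 2 + k = g(k')$. The delicate part will be simultaneously (i) keeping these extra anchor edges from creating unintended cheap separators in the selection and verification gadgets, and (ii) preserving the tightness of the balance computation. Once these are verified, the parameterized reduction implies $W[1]$-hardness for the combined parameter $(k,c)$, and hence also for $k$ alone.
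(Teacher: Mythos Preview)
Your plan contains an internal contradiction that breaks the reduction. To bound~$c$ you propose making every non-anchor vertex adjacent to both anchor cliques $A_0$ and~$B_0$; but then every surviving non-anchor vertex is itself a length-two path from $A_0$ to~$B_0$. Since the selection and verification gadgets together contain far more than~$k$ vertices, no set of $k$~vertices can separate $A_0$ from~$B_0$ at all: after deleting any $k$~vertices the whole graph remains connected, so there is \emph{no} balanced separator of size~$\le k$, and the constructed instance is a no-instance independently of whether~$H$ has a multicolored clique. Your concern in~(i) is inverted---adding edges never creates separators, it only destroys them---so the extra anchor edges do not risk ``unintended cheap separators''; they wipe out the intended ones. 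The tension you flag at the end is therefore not delicate but fatal as stated, and it is not clear how to repair it: any path-like selection gadget long enough to encode~$|V_i|$ choices will, once you drop the cross-edges to the anchors, shatter into many pieces when $k$~of its vertices are removed, blowing up~$c$.

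The paper avoids this tension with a much shorter reduction from \textsc{Clique}. Given~$(G,k)$, subdivide every edge of~$G$, turn~$V(G)$ into a clique, and add one further disjoint clique~$D$ of suitable size. A $k$-clique~$C\subseteq V(G)$ is itself the balanced separator: removing~$C$ isolates exactly the $\binom{k}{2}$ degree-two edge vertices whose two neighbours both lie in~$C$; these go with~$D$ to one side, everything else to the other, and a single arithmetic identity gives~$|A|=|B|$. Conversely, balance forces at least $\binom{k}{2}$ edge vertices to be cut off from the clique on~$V(G)$, which is only possible if the $k$~deleted vertices span $\binom{k}{2}$ edges in~$G$, i.e.\ form a clique. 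The bound on~$c$ comes for free from the same structure: after removing any $k$~vertices the two big cliques stay connected and at most $\binom{k}{2}$ edge vertices become isolated, so $c\le\binom{k}{2}+2$. There are no selection or verification gadgets and no anchor wiring---the entire graph is two cliques plus degree-two vertices hanging off one of them.
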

We reduce from the W[1]-hard \textsc{Clique} problem~\cite{DF13}. The 
reduction is an adaption of the one \citet{Mar06} used to show 
W[1]-hardness for the \ClV{} problem. The construction we use is as follows.%

\begin{construction}\label{conswhard}
Suppose we want to construct an instance~$(G', k)$ of \VBP{} from an 
instance~$(G, k)$ of \textsc{Clique}. Without loss of generality, assume that~$k$ is even. The graph~$G'$ is obtained by first 
copying~$G$ and then subdividing every edge, meaning to replace each~$\{u, w\} 
\in E(G)$ by a new \emph{edge vertex}~$v_{u, w}$ and the edges~$\{u, v_{u, w}\}, 
\{v_{u, w}, w\}$. Next, we make~$V(G)$ into a clique in~$G'$ and we 
furthermore add to~$G'$ a disjoint clique~$D$ with~$n + m - k - 2\binom{k}{2}$ 
vertices, where~$n = |V(G)|$ and~$m = |E(G)|$.
Note that the overall number of vertices in~$G'$ is~$2n + 2m - k - 2\binom{k}{2}$.
\end{construction}
Let us prove that \autoref{conswhard} is a parameterized reduction.
\begin{proof}[Proof of \autoref{vbswhard}]
Assume that~$G$ contains a clique~$C$ of size~$k$. We claim that~$C$ also 
induces a balanced separator in~$G'$. Let~$V^E_C$ be the set of edge vertices 
in~$G'$ corresponding to the edges of~$G[C]$. Consider the sets~$A := V(G') 
\setminus (C \cup V^E_C \cup D)$ and~$B := D \cup V^E_C$. Clearly, $A, B, C$ 
form a partition of~$V(G')$, and since~$C$ is exactly the neighborhood 
of~$V^E_C$ in~$G'$, there are no edges between~$A$ and~$B$. Moreover,
  \begin{align*}
    |A| &= \left(2n + 2m - k - 2\binom{k}{2}\right) - k - \binom{k}{2} - \left(n + m - k - 2 \binom{k}{2}\right)\\
    &= n + m - k - \binom{k}{2} = n + m - k - 2 \binom{k}{2} + \binom{k}{2} = |B|\text{.}
  \end{align*}
Thus, indeed~$C$ is a balanced separator for~$G'$.

For the reverse direction first consider any vertex set~$S \subseteq V(G')$ 
with~$|S| \leq k$. The graph~$G'$ consists of two cliques that, without loss of 
generality, contain more than~$k$ vertices, in addition to degree-two vertices 
attached to the clique on~$V(G)$. Furthermore, any pair of vertices 
in~$V(G)$ has at most one common degree-two neighbor. Hence, the 
number of connected components of~$G' - S$ is at most~$\binom{|S|}{2} + 2$, 
which means that~$c \leq \binom{k}{2} + 2$.
  
Now assume additionally that~$S$ is a balanced \cvsep{A}{B} for~$G'$ and, 
without loss of generality, assume that~$(D \setminus S) \subseteq A$. We may 
furthermore assume that~$S \cap D = \emptyset$. Otherwise we may 
successively replace all vertices in~$S \cap D$ with arbitrary vertices 
from~$(V(G')\setminus D) \cap A$, which is always non-empty since~
\begin{align*}
  |D \setminus S| &\leq |D| = n + m - k - 2\binom{k}{2} \\
  & < \left(2n + 2m - k - 2\binom{k}{2} - k\right)/2 \leq (|V(G')| - |S|)/2
  \text{.}
\end{align*}
Note that, without loss of generality, we may assume further that~$|S|$ is even. Otherwise, $|S| < k$ because $k$~is even by assumption and we may simply add an arbitrary vertex from~$A$ or~$B$ to~$S$. Hence, since~$|V(G')|$ is also even and~$||A| - |B|| \leq 1$, we have~$|A| = |B|$. Thus, in addition to the vertices in~$D$ the set~$A$ needs to get at least
\begin{multline*}
  |V(G') \setminus S|/2 - |D|\\
\geq \left(2n + 2m - 2k - 2 \binom{k}{2}\right)/2 - \left(n + m - k - 2 
\binom{k}{2}\right) = \binom{k}{2}
\end{multline*}
more vertices. Hence, $S$ needs to separate $\binom{k}{2}$~edge vertices from~$V(G)$. This can only be achieved if $|S| \geq k$, and hence $S$~induces a clique of 
size~$k$ in~$G$.
\end{proof}

\subsection{Reducing \BP{} to \VBP{}}\label{sec:redbptovbp}

Before turning to our algorithm for \VBP{} we show that it indeed
transfers also to \BP{}. That is, \VBP{} is more general than \BP{} in
our setting. We say that~$S$ is a \emph{$c$-component separator}
for~$G$ if there are exactly $c$~connected components in~$G - S$. A
\emph{$c$-component bisection} is defined analogously.

\begin{theorem}\label{prop1}
  There is a polynomial-time many-one reduction from \BP{} to \VBP{}
  such that the desired separator size is one larger than the desired cut
  size. Furthermore, each $c$-component bisection for the \BP{}
  instance yields a $(c + 2)$-component balanced separator for the \VBP{} instance and vice versa. 
\end{theorem}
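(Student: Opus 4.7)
The plan is to construct $G'$ from $G$ by subdividing every edge $e = uv$ (i.e., replacing $e$ by a new degree-two vertex $w_e$ adjacent to $u$ and $v$) and attaching a small gadget --- for instance a new vertex $v^*$ together with two pendant leaves $p$ and $q$, where $v^*$ is additionally connected to $G$ in a way that forces it into every balanced separator of size at most $k+1$. The resulting \VBP{} instance is $(G', k+1)$.

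For the forward direction, given a bisection $\{A, B\}$ of $G$ with cut-edge set $F$ of size at most $k$ and $c$ connected components, I would take $S := \{w_e : e \in F\} \cup \{v^*\}$, which has size at most $k+1$. Removing $S$ from $G'$ leaves the $c$ components coming from $G - F$ (now including the uncut edge vertices) together with the two singleton components $\{p\}$ and $\{q\}$, giving $c + 2$ components in total; placing $p$ and $q$ on opposite sides yields a balanced partition. For the reverse direction, starting from a balanced separator $S'$ of $G'$ with $|S'| \leq k+1$ and $c+2$ components, I would argue that $v^*$ must lie in $S'$ (otherwise $p$ and $q$ could not both appear as singleton components, contradicting the component count), and that the remaining at most $k$ vertices in $S'$ must all be edge subdivision vertices whose corresponding edges form the cut of a bisection of $G$ with exactly $c$ components.

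The main obstacle will be enforcing the balance condition within the tight $+1$ separator budget. After removing only the subdivision vertices of $F$, the two sides of $G'$ have sizes $|A| + m_A$ and $|B| + m_B$, where $m_A, m_B$ are the numbers of edges internal to $A$ and $B$; the difference $m_A - m_B$ can be as large as $m - k$, far exceeding the difference of at most $1$ allowed by the balanced-separator definition. Addressing this requires a careful gadget design: for instance, one could blow up each $v \in V(G)$ into a clique $K_v$ of sufficiently large size so that the $|A| \cdot |K_v|$ contribution dominates, forcing $|A| = |B|$ on the original vertex set under any reverse-direction solution; the gadget together with the one extra separator vertex and the free placement of $p$ and $q$ then absorbs the residual imbalance in the forward direction. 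Getting the right trade-off between forcing a genuine bisection under pull-back and allowing enough flexibility under push-forward is the delicate part of the proof.
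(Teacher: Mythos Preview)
Your diagnosis of the central difficulty is correct, and the idea of blowing each original vertex up into a large clique is exactly what the paper does (cliques of size $3m+2$). The gap is in your balancing gadget. After subdividing edges and blowing up vertices, the forward direction still leaves an imbalance of $m_A - m_B$ coming from the edge vertices, and this difference can be anywhere in $\{-m,\ldots,m\}$. Your gadget $\{v^*,p,q\}$ contributes exactly three vertices to $G'$; moving $p$ and $q$ between sides buys you a correction of at most $2$, which cannot absorb an $\Theta(m)$ discrepancy. No amount of ``connecting $v^*$ to $G$ in a way that forces it into the separator'' changes this, because once $v^*$ is in $S$ the only free vertices left in the gadget are two leaves. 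So as stated the forward direction simply fails: for most bisections of $G$ there is no balanced separator of size $k+1$ in your $G'$.

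The paper's fix is to replace your three-vertex gadget by a \emph{slider}: two disjoint cliques $D_1,D_2$ of size $5nm$ joined by a path $P$ with $m-1$ inner vertices. In the forward direction one cuts $P$ at a single vertex, and the position of that cut can be chosen to offset any imbalance in $\{-m,\ldots,m\}$; this is also where the one extra separator vertex is spent. In the reverse direction the huge sizes of $D_1$ and $D_2$ force them to opposite sides of any balanced separator of size $\le k+1$, so at least one separator vertex must lie on $P$, leaving at most $k$ vertices to account for the edge cut; the clique sizes $3m+2$ then force equally many $C_v$'s on each side. The two extra components in the ``$c+2$'' come from the two halves of $D_1\cup P\cup D_2$ after the path is cut, not from pendant leaves. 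Your reverse-direction sketch also glosses over why the remaining $k$ separator vertices must correspond to a genuine bisection---in the paper this is a counting argument using the clique sizes, not an assertion that only subdivision vertices can appear in $S'$.
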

\noindent The basic idea to prove \autoref{prop1} is to subdivide each edge and replace each vertex by a large clique in a given instance of \BP{}. As deleting edge vertices corresponding to cut edges in a bisection then yields imbalanced parts, we use a gadget consisting of two very large cliques connected by a path to rebalance the parts.
\begin{construction}\label{cons:edgetovertex}
  Let~$(G = (V, E), k)$ be an instance of \BP{} and without loss of generality,
  assume that~$k \leq m := |E|$ and denote $n := |V|$. Construct a graph~$G' = (V', 
E')$ as
follows. For each vertex~$v \in V$, introduce a clique~$C_v$ with~$3m + 2$
vertices. Next, for 
every edge~$e = \{u, v\} \in E$ introduce a vertex~$v_e$ and
make~$v_e$ adjacent to all vertices in both~$C_v$ and~$C_u$. Call the set of 
such ``edge vertices''~$V'_E$. Finally, add two cliques~$D_1$ and~$D_2$ with 
$5nm$~vertices each and connect two arbitrary vertices of~$D_1$ and~$D_2$ by a 
path~$P$ with~$m - 1$ inner vertices. Set the desired separator size to~$k + 1$. 
This finishes the construction of a \VBP{} instance~$(G', k + 1)$.
\end{construction}
\noindent We claim that \autoref{cons:edgetovertex} yields a proof for \autoref{prop1}.
\begin{proof}[Proof of \autoref{prop1}]
First, \autoref{cons:edgetovertex} can easily be seen to be doable in 
polynomial time. Let us prove that it is a many-one reduction.

Let~$\{A, B\}$ be a bisection of~$G$, that is $A, B \subseteq V$ such that~$|A| = |B|$
and there are at most $k$~edges between~$A$ and~$B$ in~$G$; call the set of these edges~$S$. Let~$A' = \bigcup_{v \in A}C_v\cup\{v_e \in V_E \mid e \subseteq A\}$, $B' = \bigcup_{v \in B} C_v \cup \{v_e \in V_E \mid e \subseteq B\}$, and $S' := \{v_e \mid e \in S\}$. Note that~$S'$ is an \cvsep{A'}{B'} of $G'[V'_E \cup \bigcup_{v\in V}C_v]$. Furthermore $-m \leq |A'| - |B'| \leq m$. Observe also that for any~$-m \leq i \leq m$ there is a vertex~$v$ on~$P$ such that~$\{v\}$ is a \cvsep{A''}{B''} of~$G'[D_1 \cup D_2 \cup P]$ and~$|A''| - |B''| \in \{i, i + 1\}$. Hence, choosing~$v$ on~$P$ appropriately yields a balanced separator~$S \cup \{v\}$ for~$G'$ that has 
size~$k + 1$. Further, if~$\{A, B\}$ is a $c$-component bisection, then~$S \cup \{v\}$ is a $(c + 2)$-component balanced separator.

 For the converse direction, let~$S$ be a balanced \cvsep{A}{B} 
for~$G'$. Observe that, for the cliques~$D_1, D_2$ and for each of the cliques~$C_v$, $v \in V$, removing~$S$ from their vertex set yields a set which is completely 
contained either in~$A$ or in~$B$. Furthermore, not both~$D_1 \setminus S$ 
and~$D_2 \setminus S$ are contained in~$A$ or in~$B$ since, otherwise, this 
would contradict the balancedness of~$S$. Let us assume without loss of 
generality, that~$D_1 \setminus S \subseteq A$ and~$D_2 \setminus S \subseteq 
B$.
We claim that the number~$a$ of cliques~$C_v$, $v \in V$, that intersect~$A$ is
the same as the number~$b$ of cliques~$C_v$, $v \in V$, that intersect~$B$. 
By the above observation, all vertices of a clique~$C_v$ not contained in~$S$ 
are either completely contained in~$A$ or~$B$. 
Hence, as~$A$ consists of~$D_1 \setminus S$, its intersection with the 
cliques~$C_v$, at most~$m$ edge-vertices~$v_e \in V'_E$, and at most~$m$ vertices 
from~$P$, we have~$$5nm + (3m + 2)
\cdot a + 2m \geq |A| \geq 5nm + (3m + 2) \cdot a - k \geq 5nm + (3m +
2) \cdot a - m$$ and analogously for~$|B|$. Without loss of generality, we may 
assume~$|A| \geq |B|$. Using the above size bounds for~$|A|$ and~$|B|$, we thus 
obtain that~$|A| - |B|$ is at least~$$5nm + (3m + 2) \cdot a - m - (5nm + (3m + 
2) \cdot b + 2m) = (3m + 2) \cdot (a - b) - 3m\text{,}$$ and, thus,~$a = b$ 
(recall that~$S$ is a balanced separator, and hence $||A| - |B|| \leq 1$). 
Since~$D_1 \setminus S \subseteq A$ and~$D_2 \setminus S \subseteq B$ we have at 
least one vertex in~$S \cap P$, and thus the number of edge vertices~$v_e$ in
the balanced separator~$S$ is at most~$k$. We conclude that cutting the edges
according to the edge vertices in~$S$ yields a bisection for~$G$ of cut size at 
most~$k$. Here, too, the bound on the number of connected components is easy to see.
\end{proof}

\subsection{An FPT Algorithm for Cut Size and Constant Number of Cut Out
Components}\label{sec:algvbp}

We now outline an FPT algorithm for \VBP{}. We say that~$S$ is an \cvsep{s}{t} 
for vertices~$s, t$ if there are vertex sets~$A, B \subseteq V(G)$ such that~$S$ 
is an \cvsep{A}{B} and~$s \in A$ and~$t \in B$. We say that an \cvsep{s}{t}~$S$ 
is \emph{inclusion-wise minimal}, or just \emph{minimal}, if there is no 
\cvsep{s}{t}~$S' \subsetneq S$.
We first observe that a balanced separator consists of inclusion-wise minimal 
\cvsep{s}{t}s between a collection of ``terminal'' vertices~$s, t$. The terminal 
vertices are chosen one from each of the connected components of the graph 
without the separator. Guessing the terminals, we can reduce \VBP{} to finding 
an ``almost balanced'' separator consisting of vertices contained in 
inclusion-wise minimal separators of pairs of terminals. To find such an almost 
balanced separator, we generalize the ``treewidth reduction'' technique 
introduced by~\citet{MOR13}. We obtain an algorithm that constructs 
a graph~$G'$ that preserves all inclusion-wise minimal separators of size at 
most~$k$ between some given terminals and has treewidth bounded by some 
function~$g(k, c)$, where~$c$ is the number of terminals. Moreover, the algorithm 
runs in time~$f(k, c) \cdot (n + m)$ and also derives a mapping of the vertices 
between the input graph~$G$ and the constructed graph~$G'$ which allows to 
transfer balanced separators of~$G'$ to balanced separators of~$G$. Using this algorithm it then only remains to show that weighted \VBP{} is 
fixed-parameter tractable with respect to the treewidth. Overall, the algorithm 
solving \VBP{} guesses the terminals, reduces the treewidth and then solves the 
bounded-treewidth problem.

\def\wtorso{\ensuremath{\operatorname{atorso}}}
\def\torso{\ensuremath{\operatorname{torso}}}
The main ingredient in our FPT algorithm for \VBP{} is a generalization of the 
treewidth reduction technique of \citet{MOR13} to graphs with vertex 
weights. We aim to construct a graph of bounded treewidth that preserves all 
inclusion-wise minimal \cvsep{s}{t}s of a given size. To this end, we define 
trimmers.
\begin{definition}
  Let~$G = (V, E)$ be a graph, $k$ an integer and~$T \subseteq V$. A~tuple~$(G^*, \phi)$ of a graph~$G^* = (V^*, E^*)$ and a total, surjective, but not necessarily injective mapping~$\phi \colon V \to V^*$ is called a \emph{$(k, T)$\nobreakdash-trimmer} of~$G$ if the following holds. (Here, we let~$\phi^{-1}(v) := \{v' \mid \phi(v') = v\}$, $\phi(V') := \bigcup_{v \in V'} \phi(v)$ for~$V' \subseteq V(G)$, and define~$\phi^{-1}(V')$ analogously.)
  \begin{compactenum}[(i)]
  \item %
    For any~$S \subseteq V^*$, the mapping~$\phi$ is a one-to-one mapping between the connected components of~$G - \phi^{-1}(S)$ and~$G^* - S$.\label{enu:trp1}
  \item If~$S$ is an inclusion-wise minimal \cvsep{s}{t} for~$G$ with~$|S| \leq k$ and~$s, t \in T$, then~$\phi(S) = S$ and~$S$ is an inclusion-wise minimal \cvsep{\phi(s)}{\phi(t)} for~$G^*$.\label{enu:trp2}
  \end{compactenum}
  We refer to the above as \emph{trimmer properties}~\enuref{enu:trp1} and~\enuref{enu:trp2}.
\end{definition}

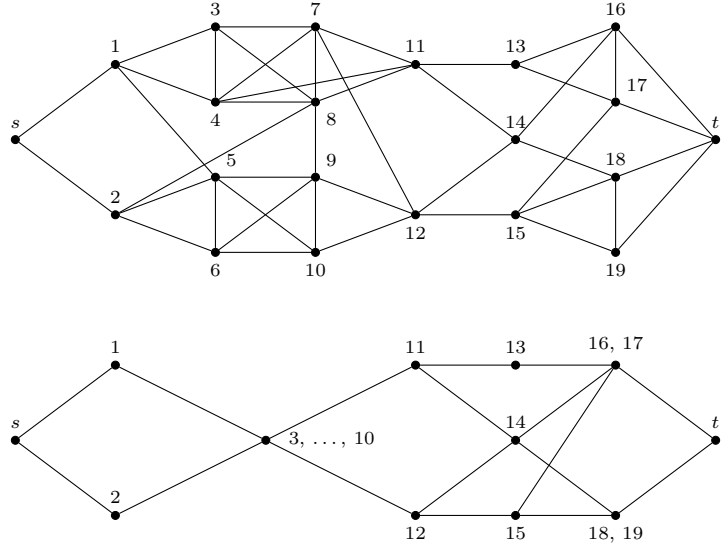
\begin{figure}[t]
  \centering\scriptsize
  \begin{tikzpicture}[node distance=10mm and 13.3mm]
    \tikzstyle{ver}=[circle,draw,fill=black,minimum size=3pt,inner
    sep=0pt, on grid]
    
    \node[ver,label=above:$s$] (s) {};
    \node[ver,above right= of s, label=above:1] (v1) {}
    edge (s);
    \node[ver,below right= of s, label=above:2] (v2) {}
    edge (s);
    \node[ver,above right= of v1, yshift=-5mm, label=above:3] (v3) {}
    edge (v1);
    \node[ver,below= of v3, label=below:4] (v4) {}
    edge (v1)
    edge (v3);
    \node[ver,below= of v4, label=above right:5] (v5) {}
    edge (v1)
    edge (v2);
    \node[ver,below= of v5, label=below:6] (v6) {}
    edge (v2)
    edge (v5);
    \node[ver,right= of v3, label=above:7] (v7) {}
    edge (v3)
    edge (v4);
    \node[ver,right= of v4, label=below right:8] (v8) {}
    edge (v2)
    edge (v3)
    edge (v4)
    edge (v7);
    \node[ver,right= of v5, label=above right:9] (v9) {}
    edge (v5)
    edge (v6)
    edge (v8);
    \node[ver,right= of v6, label=below:10] (v10) {}
    edge (v5)
    edge (v6)
    edge (v9);
    \node[ver,right= of v7, yshift=-5mm, label=above:11] (v11) {}
    edge (v4)
    edge (v7)
    edge (v8);
    \node[ver,right= of v9, yshift=-5mm, label=below:12] (v12) {}
    edge (v7)
    edge (v9)
    edge (v10);
    \node[ver,right= of v11, label=above:13] (v13) {}
    edge (v11);
    \node[ver,below= of v13, label=above:14] (v14) {}
    edge (v11)
    edge (v12);
    \node[ver,below= of v14, label=below:15] (v15) {}
    edge (v12);
    \node[ver,above right= of v13, yshift=-5mm, label=above:16] (v16) {}
    edge (v13)
    edge (v14);
    \node[ver,below= of v16, label=above right:17] (v17) {}
    edge (v13)
    edge (v15)
    edge (v16);
    \node[ver,below= of v17, label=above:18] (v18) {}
    edge (v14)
    edge (v15);
    \node[ver,below= of v18, label=below:19] (v19) {}
    edge (v15)
    edge (v18);
    \node[ver,right= of v17, yshift=-5mm, label=above:$t$] (t) {}
    edge (v16)
    edge (v17)
    edge (v18)
    edge (v19);

    \node[ver,below = 4cm of s, label=above:$s$] (s) {};
    \node[ver,above right= of s, label=above:1] (v1) {}
    edge (s);
    \node[ver,below right= of s, label=above:2] (v2) {}
    edge (s);
    \node[above right= of v1, yshift=-5mm] (v3) {};
    \node[ver,right= 20mm of v2, yshift=10mm, label=right:{\hspace{.5mm} 3, \ldots, 10}] (c1) {}
    edge (v1)
    edge (v2);    
    \node[ver,below= 4cm of v11, label=above:11] (v11) {}
    edge (c1);
    \node[ver,below= 4cm of v12, label=below:12] (v12) {}
    edge (c1);
    \node[ver,right= of v11, label=above:13] (v13) {}
    edge (v11);
    \node[ver,below= of v13, label=above:14] (v14) {}
    edge (v11)
    edge (v12);
    \node[ver,below= of v14, label=below:15] (v15) {}
    edge (v12);
    \node[ver,right= of v13, label=above:{16, 17}] (c2) {}
    edge (v13)
    edge (v14)
    edge (v15);
    \node[ver,right= of v15, label=below:{18, 19}] (c3) {}
    edge (v14)
    edge (v15);
    \node[ver,below right= of c2, label=above:$t$] (t) {}
    edge (c2)
    edge (c3);
    
  \end{tikzpicture}
  
  \caption{A graph~$G$ (top) and the graph in a $(3, \{s, t\})$-trimmer of~$G$ (bottom), computed as we show in \autoref{sec:twred}. The associated mapping~$\phi$ is indicated by the vertex labels.}
\label{fig:ex-trimmer}
\end{figure}
An example for a trimmer is given in \autoref{fig:ex-trimmer}: one can verify by inspecting the graph~$G$ depicted there, that only vertices 1, 2, 11, 12, 13, 14, and~15 are contained in any inclusion-wise minimal \cvsep{s}{t} of size at most~3. For example, vertex~3 is not contained in any such separator because removing it from the graph leaves two vertex-disjoint \pth{s}{t}s and all pairs of such paths can only be destroyed by removing two further vertices if we delete either vertex~1 and~2, or vertex~11 and~12. However, both~$\{1, 2\}$ and~$\{11, 12\}$ are themselves \cvsep{s}{t}s and hence adding~3 does not yield an \emph{inclusion-wise minimal} separator. Thus contracting every edge in~$G[3, \ldots, 10, 16, \ldots, 19]$ we can derive a mapping~$\phi$ that fulfills trimmer property~\enuref{enu:trp2}. Basically, trimmer property~\enuref{enu:trp1} is obtained by observing that contracting edges keeps intact all important paths. 

A more precise description of computing a trimmer, a formal proof of the properties, and an upper bound on the treewidth of the trimmer is given in \autoref{sec:twred}: as we show in \autoref{thm:treewidth-reduction-gen}, if~$k$ and~$|T|$ are small, then there are trimmers of small treewidth that can be computed efficiently.
\newcommand{\twredthm}[1]{
  Let~$G$ be a graph. For every constant $k\in\mathbb N$ and constant-size $T \subseteq V$, we can compute a $(k, T)$\nobreakdash-trimmer~$(G^*, \phi)$ for~$G$ in $O(n + m)$~time such that the treewidth of~$G^*$ is at most~$g(k, |T|)$ for some function~$g$ depending only on~$k$ and~$|T|$.%
}

\begin{theorem}\label{thm:treewidth-reduction-gen}
  \twredthm{}
\end{theorem}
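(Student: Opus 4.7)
The plan is to invoke the original treewidth reduction theorem of Marx, O'Sullivan, and Razgon as a black box and then ``re-inflate'' its torso into a bona fide graph $G^*$ by attaching a representative vertex for every contracted piece. Concretely, I first apply their theorem to $(G, T, k)$ to obtain a set $W \subseteq V$ with $T \subseteq W$ and $|W| \le h(k, |T|)$ such that (a)~every inclusion-wise minimal \cvsep{s}{t}~$S$ for $G$ with $s, t \in T$ and $|S| \le k$ satisfies $S \subseteq W$, and (b)~the torso of $G$ on $W$ has treewidth at most $g'(k, |T|)$. Under our constant-parameter assumption this step runs in $O(n + m)$ time.

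Next I would build $(G^*, \phi)$ as follows: keep $W$ with all edges $E(G[W])$, then for each connected component $K$ of $G - W$ introduce a single new vertex $v_K$ adjacent in $G^*$ to every vertex of $N_G(V(K)) \subseteq W$. The map $\phi$ is the identity on $W$ and sends every $u \in V(K)$ to $v_K$. Both the component labelling of $G - W$ and the construction of $G^*$ run in $O(n + m)$ time.

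For trimmer property~\enuref{enu:trp1} the key observation is that each preimage $\phi^{-1}(v_K) = V(K)$ is itself connected in $G$ and disjoint from $W$, so for any $S \subseteq V^*$ a vertex $v_K \in V^* \setminus S$ has its entire preimage present in $G - \phi^{-1}(S)$. Lifting any walk in $G^* - S$ through the corresponding components, and conversely projecting any walk in $G - \phi^{-1}(S)$ via $\phi$, then gives a bijection between connected components of the two graphs. Property~\enuref{enu:trp2} follows at once: if $S$ is a minimal \cvsep{s}{t} of size $\le k$ with $s, t \in T$, then $S \subseteq W$ by (a), so $\phi(S) = S$; and \enuref{enu:trp1} applied to $S$ and to each $S' \subsetneq S$ transfers both the separation and its minimality from $G$ to $G^*$. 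For the treewidth bound, start with a width-$g'(k,|T|)$ tree decomposition of the torso of $W$. Each $N_G(V(K))$ is a clique in that torso (every two of its vertices are joined by a path through $K$), hence lies entirely in some bag $X_K$; attaching a new leaf bag $X_K \cup \{v_K\}$ for every component $K$ produces a valid tree decomposition of $G^*$ of width at most $g'(k,|T|) + 1$, so one may take $g(k, |T|) := g'(k, |T|) + 1$.

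The step that needs the most care is verifying property~\enuref{enu:trp1} for \emph{arbitrary} $S \subseteq V^*$ rather than only for minimal separators: one must rule out that the contraction spuriously merges or splits components of $G - \phi^{-1}(S)$. This is the point at which it is essential that each $K$ is a connected subgraph lying wholly in $V \setminus W$, so that removing $v_K$ from $G^*$ corresponds exactly to removing the whole of $V(K)$ from $G$ and keeping $v_K$ corresponds to keeping $V(K)$ together; everything else is bookkeeping once this invariant is in place.
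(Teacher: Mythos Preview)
Your proposal is correct and follows essentially the same route as the paper: your $G^*$ is precisely the paper's \emph{annotated torso} $\wtorso(G, W)$, your leaf-bag argument for the treewidth bound is exactly the paper's \autoref{pro:wtorso-tw}, and your walk-lifting argument for property~\enuref{enu:trp1} is the content of \autoref{pro:wtorso-sep}. One small correction: the set $W$ returned by Marx--O'Sullivan--Razgon is \emph{not} bounded in size by a function of $k$ and $|T|$ (it can have $\Theta(n)$ vertices); what is bounded is only the treewidth of $\torso(G, W)$---but since you never use the size bound on $W$, this misstatement is harmless.
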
%
\noindent The final ingredient for our FPT algorithm for \VBP{} is an efficient algorithm for small treewidth and vertex weights.

\begin{theorem}\label{lem:vbp-weights-tw}
 Let~$G$ be a graph with treewidth $\omega$ and integer vertex-weights~$\wf$. Let~$\WF$ be
the sum of all vertex weights and let~$c \geq 2$ be an integer. We
can find in $\tw^{O(\tw)} \cdot c^2 \cdot
\WF^2\cdot n$~time, for all integers~$1 \leq s \leq \WF$, a partition~$\{A, B, S\}$ of~$V(G)$ such that $\wf(A) = s$ and $S$~is a minimum-weight $c$-component \cvsep{A}{B}, or reveal that no such partition exists.
\end{theorem}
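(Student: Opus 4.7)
The plan is a dynamic programming algorithm on a nice tree decomposition of~$G$, enriching the classical connectivity-partition DP (familiar from Steiner-type problems) with two additional counters: one for the accumulated $\wf$-weight of~$A$ and one for the number of connected components of~$G - S$ that are already complete. First I would compute a tree decomposition of~$G$ of width~$O(\tw)$ in $2^{O(\tw)} \cdot n$ time via the classical 5-approximation of Bodlaender et al., convert it to a nice tree decomposition with leaf, introduce-vertex, introduce-edge, forget, and join nodes ($O(\tw \cdot n)$ nodes in total), and process it bottom-up.

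At a node~$x$ with bag~$B_x$ and associated subgraph~$G_x$ (on the vertices and edges introduced so far in the subtree below~$x$), a DP state consists of (i)~a labeling~$\sigma\colon B_x \to \{A, B, S\}$, (ii)~a partition~$\pi$ of~$\sigma^{-1}(\{A, B\})$ whose blocks carry an $\{A,B\}$-label inherited from~$\sigma$ and record which bag vertices lie in the same connected component of~$G_x - S$, (iii)~an integer~$a \in \{0, \ldots, \WF\}$ equal to~$\wf(A \cap V(G_x))$, and (iv)~an integer~$k \in \{0, \ldots, c\}$ counting those components of~$G_x - S$ that no longer have any vertex in~$B_x$. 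The entry stores the minimum value of~$\wf(S \cap V(G_x))$ over all partial partitions realising the state. Transitions are standard: introduce-vertex adds~$v$ to the bag under one of the three labels, inserting a new singleton block if the label is~$A$ or~$B$; introduce-edge rejects any edge spanning the $A$/$B$-partition and otherwise merges the blocks of its endpoints whenever they share a label in~$\{A, B\}$; forget removes~$v$ from its block, incrementing~$k$ (and discarding the state if $k$ then exceeds~$c$) whenever the block becomes empty; a join pairs states with identical~$\sigma$, replaces~$\pi$ by the transitive closure of~$\pi_1 \cup \pi_2$ (rejecting label conflicts), sets $k := k_1 + k_2$, and combines~$a$ and the running~$\wf(S)$ subtracting the bag contribution to avoid double-counting. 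At the empty root bag the answer for each~$s$ is read off by minimising the stored value over states with~$a = s$ and~$k = c$, and the witnessing partition is recovered by standard backtracking.

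Correctness is proved by induction over the tree: the invariant that~$\pi$ encodes the true component structure of~$G_x - S$ restricted to~$B_x$ is preserved by every transition, and every component of~$G - S$ is completed precisely when its last vertex leaves the bag at some forget node, so~$k$ neither misses nor double-counts components. For the running time, a bag of size at most~$\tw + 1$ admits $3^{\tw+1}$ labelings, at most $B(\tw+1) \cdot 2^{\tw+1} = \tw^{O(\tw)}$ labeled partitions, $\WF + 1$ choices of~$a$ and $c + 1$ choices of~$k$, giving $\tw^{O(\tw)} \cdot \WF \cdot c$ states per bag. The join node dominates, enumerating pairs of compatible states in $\tw^{O(\tw)} \cdot \WF^2 \cdot c^2$ time; summing over the $O(\tw \cdot n)$ nodes yields the claimed $\tw^{O(\tw)} \cdot c^2 \cdot \WF^2 \cdot n$ bound.

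The main technical obstacle is the join step: one must verify carefully that the transitive closure of~$\pi_1 \cup \pi_2$ reproduces exactly the connectivity of~$G_x - S$, that components already completed in the two children are automatically disjoint (since, by the tree-decomposition property, the two subtrees meet only in~$B_x$ and completed components by definition avoid~$B_x$), and that $\wf(A \cap B_x)$ and $\wf(S \cap B_x)$ are each subtracted exactly once in the weight bookkeeping to compensate for vertices appearing in both subtrees. Once this is nailed down, the other transitions are routine.
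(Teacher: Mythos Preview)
Your proposal is correct and follows essentially the same approach as the paper: a dynamic program over a nice tree decomposition whose state records the bag's $\{A,B,S\}$-labeling, the component partition of the non-separator bag vertices, the accumulated weight of~$A$, and the number of already-completed components, with the join step dominating the running time at $\tw^{O(\tw)}\cdot\WF^2\cdot c^2$. The only cosmetic difference is that you use the variant with introduce-edge nodes while the paper handles edges at introduce-vertex nodes; the table semantics, transitions, and running-time analysis are otherwise the same.
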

\noindent We defer also this proof until later in \autoref{sec:twalg}. If we suppose for the moment that the above \autoref{thm:treewidth-reduction-gen} and \autoref{lem:vbp-weights-tw} hold, then we arrive at the main theorem of this section.
\begin{theorem}\label{thm2}
  Let~$G$ be a graph. Given non-negative integers~$c$ and~$k$, in~$h(c, k) \cdot n^{c + 3}$~time 
  we can find a $c$-component balanced separator for~$G$ of size at most~$k$ if it exists. Here,~$h(c, k)$ is a function depending only on~$c$ and~$k$.
\end{theorem}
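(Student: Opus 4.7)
The plan is to follow the three-phase strategy already outlined just before the statement: enumerate a candidate terminal set, apply the trimmer of \autoref{thm:treewidth-reduction-gen}, and then solve the resulting bounded-treewidth weighted problem with \autoref{lem:vbp-weights-tw}. Concretely, I would iterate over all $O(n^c)$ ordered tuples $T=(t_1,\dots,t_c)$ of distinct vertices and all $2^c$ two-colourings $T=T_A\cup T_B$. The intended semantics of a guess is that the $t_i$ are one representative per connected component of $G-S$, with $T_A$ (respectively $T_B$) collecting those terminals whose component lies on the $A$-side (respectively $B$-side) of the sought bisection.

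For each guess I would: compute a $(k,T)$-trimmer $(G^*,\phi)$ of treewidth at most $g(k,c)$ in time $O(n+m)$ using \autoref{thm:treewidth-reduction-gen}; set the vertex weight $\wf(v):=|\phi^{-1}(v)|$, so that $\WF=n$; and invoke \autoref{lem:vbp-weights-tw} on $G^*$, extended in the standard way by augmenting each bag's state so as to force every vertex of $\phi(T_A)$ into the $A$-side, every vertex of $\phi(T_B)$ into the $B$-side, and the $c$ terminals to sit in $c$ distinct components of $G^*-S^*$. For every $s\in\{1,\dots,\WF\}$ the algorithm returns a minimum-weight $c$-component \cvsep{A^*}{B^*} $S^*$ with $\wf(A^*)=s$; I would accept any $S^*$ satisfying $\wf(S^*)\le k$ and $|s-(\WF-s-\wf(S^*))|\le 1$ and output $\phi^{-1}(S^*)$.

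One correctness direction is immediate: by trimmer property~(i) together with the definition of $\wf$, any accepted $S^*$ lifts through $\phi^{-1}$ to a $c$-component balanced separator of $G$ with exactly $\wf(S^*)\le k$ vertices. For the converse, given a $c$-component balanced separator $S$ of $G$ with $|S|\le k$, the right guess takes one representative per component coloured by its bisection side; under this guess, $S$ is a \cvsep{T_A}{T_B} inducing $c$ components. The crux is the claim $\phi(S)=S$, which follows from the observation made earlier in the section that $S$ may be assumed to be the union of inclusion-minimal \cvsep{t_i}{t_j}'s over the terminal pairs: each such minimal separator is preserved setwise by $\phi$ via trimmer property~(ii), so no vertex of $S$ gets contracted. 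The call to the modified DP with $s=|A|$ therefore returns a separator of weight at most $|S|\le k$ satisfying the balance test.

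The outer enumeration contributes a factor of $O(n^c\cdot 2^c)$; the trimmer takes $O(n+m)$; and the weighted treewidth call takes $g(k,c)^{O(g(k,c))}\cdot c^2\cdot\WF^2\cdot n=h_1(k,c)\cdot n^3$ since $\WF=n$. Multiplying yields $h(c,k)\cdot n^{c+3}$ as claimed. The main obstacle is the key claim $\phi(S)=S$, i.e.\ the step that lets us replace $G$ by the small graph $G^*$; this hinges on the decomposition of a balanced separator into minimal \cvsep{t_i}{t_j}'s, which is nontrivial because shrinking $S$ to a union of minimal pairwise separators can in principle perturb the balance and must be handled via the per-$s$ sweep of the weighted DP. The DP extension for fixed terminal sides and distinct-component tracking, although routine, must further be stitched carefully into the $c$-component bookkeeping that \autoref{lem:vbp-weights-tw} already performs.
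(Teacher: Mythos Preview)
Your overall three–phase plan (guess terminals, apply the trimmer, solve a bounded-treewidth weighted problem) is the paper's plan, but there is a genuine gap in the correctness argument for the converse direction.

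The assertion ``$\phi(S)=S$'' is not justified, and in fact is false in general. A balanced $c$-component separator $S$ contains a union $\hat S$ of inclusion-minimal \cvsep{t_i}{t_j}'s, and trimmer property~(ii) gives $\phi(\hat S)=\hat S$; but the remaining vertices $S\setminus\hat S$ need not lie in the set on which $\phi$ is the identity, and they can be merged by $\phi$ with vertices outside $S$. The paper handles this by passing to $\tilde S=\hat S\cup\{\text{odd components of }G-\hat S\}\subseteq S$, for which one does get $\lambda(\phi(\tilde S))=|\tilde S|$ and $\lambda(\phi(\tilde A))=|\tilde A|$; the price is that $\tilde S$ is no longer balanced, only $||\tilde A|-|\tilde B||\le k-|\tilde S|+1$.

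This feeds into the real problem: your acceptance test $|s-(\Lambda-s-\wf(S^*))|\le 1$ is too strict, and the per-$s$ sweep does not repair it. Concretely, take two cliques $A,B$ of sizes $50$ and $54$ joined through a single cut vertex $v$, and attach a $9$-clique $D$ to $A$ via a single edge $d_1a_1$. With $k=5$ and $c=2$ the set $S=\{v\}\cup D'$, for any four vertices $D'\subseteq D\setminus\{d_1\}$, is a balanced $2$-component separator (parts of sizes $55$ and $54$). Yet for every choice of two terminals the trimmer collapses all but at most two vertices of $D$ into a single component vertex of weight at least $7$, so no $2$-component separator of weight at most $5$ in $G^*$ can satisfy your balance test for any $s$; your algorithm would wrongly reject.

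The paper's remedy is exactly the missing step: accept whenever $|\wf(A^*)-\wf(B^*)|\le k-\wf(S^*)+1$, lift via $\phi^{-1}$, and then move up to $k-\wf(S^*)$ vertices from the larger side into the separator (choosing non-cut vertices so the component count stays $c$). With this padding step your argument goes through; the two-colouring of $T$ and the terminal-forcing in the DP are then unnecessary, though they do no harm to the running time.
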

\begin{proof}
 \def\cutop{\ensuremath{\operatorname{cut}}}
The algorithm proceeds as follows.
For each $T \subseteq V(G)$ of size exactly~$c$ we compute a~$(k, T)$-trimmer~$(G^*, 
\phi)$ using \autoref{thm:treewidth-reduction-gen}. We create a vertex weight 
function~$\lambda$ for~$G^*$ by letting~$\lambda(v) = |\phi^{-1}(v)|$. Then, for 
each~$s$,~$|V(G)|/2 - 1 - k \leq s \leq |V(G)|/2 + k$, we compute a 
minimum-weight $c$-component \cvsep{A'}{B'} for~$G^*$ and the corresponding sets~$A'$,~$B'$ with~$\lambda(A') = s$ 
using \autoref{lem:vbp-weights-tw}. If among the separators there is an
\cvsep{A'}{B'}~$S'$ with $|\wf(A') - \wf(B')| \leq k - \wf(S') + 1$, then we 
compute~$S := \phi^{-1}(S'), A := \phi^{-1}(A')$, and~$B := \phi^{-1}(B')$.
Note that, by trimmer property~\enuref{enu:trp1}, $S$~is a $c$-component 
\cvsep{A}{B} for~$G$. Moreover, since~$\phi$ is a total mapping, $||A| - |B|| 
\leq k - |S| + 1$. We move~$k - |S|$ vertices from~$A$ or~$B$ to~$S$ in such a way 
that~$S$ is a $c$-component balanced separator for~$G$ and we output~$S$. If no suitable separator is found, we output that there is no 
$c$-component balanced separator of size at most~$k$ for~$G$. Note 
that, unless~$|V(G)|$ is bounded by a function of~$k$ (i.e. the problem is 
trivially FPT), moving the vertices from~$A$ or~$B$ to~$S$ without changing the number of components 
of~$G - S$ is always possible. This is because not every vertex of a connected 
component can separate it into multiple ones and there is always a component of 
size at least two.

Let~$S$ be a $c$-component balanced separator of size at most~$k$ for~$G$ and pick vertices~$v_1, \ldots, v_c$, one from each connected component of~$G - S$. Let us observe that the
above algorithm finds a $c$-component balanced separator of size at most~$k$. Note that~$S$ is a
\cvsep{v_i}{v_j} for each~$1 \leq i < j \leq c$. Hence, $S$ contains
inclusion-wise minimal \cvsep{v_i}{v_j}s~$S_{i, j}$ of size at most~$k$. Let~$\hat{S}
= \bigcup_{1 \leq i < j \leq c} S_{i, j}$, call a connected component in~$G - \hat{S}$ \emph{odd} if it does not contain any~$v_i$, and let $\tilde{S}$ be the union of $\hat{S}$ and all odd components. Note that odd components are contained in~$S$. Hence, $\tilde{S}$ is a $c$-component \cvsep{\tilde{A}}{\tilde{B}} for~$G$ with~$||\tilde{A}| - |\tilde{B}|| \leq k - |\tilde{S}| + 1$ and $|V(G)|/2 - 1 - k \leq |\tilde{A}| \leq |V(G)|/2 + k$. Clearly, at some point in the algorithm~$T = \{v_1, \ldots, v_c\}$ and the $(k, |T|)$\nobreakdash-trimmer~$(G^*, \phi)$ of~$G$ is computed. It remains to show that separator~$\tilde{S}$ induces a separator in~$G^*$ that is found by the algorithm. By trimmer property~\enuref{enu:trp2} we have that~$\phi(\hat{S}) = \hat{S}$ is contained in~$G^*$. Trimmer property~\enuref{enu:trp1} %
gives that $\phi$ is a one-to-one mapping of connected components~$C$ in~$G - \hat{S}$ and their counterparts~$\phi(C)$ in~$G^* - \hat{S}$. In particular, there is a corresponding mapping for all odd connected components. Thus, $\phi(\tilde{S})$ is a $c$-component \cvsep{\phi(\tilde{A})}{\phi(\tilde{B})} for~$G^*$ and we have $\lambda(\phi(\tilde{S}))=|\tilde{S}|, \lambda(\phi(\tilde{A}))=|\tilde{A}|$, and $\lambda(\phi(\tilde{B}))=|\tilde{B}|$. 
Hence, an \cvsep{A'}{B'}~$S'$ for~$G^*$ with~$\wf(S') \leq \wf(\phi(\tilde{S}))$ and $\wf(A') = \wf(\phi(\tilde{A}))$ is enumerated by the algorithm of 
\autoref{lem:vbp-weights-tw}. Applying the size bounds of~$\tilde{A}, \tilde{B}, 
\tilde{S}$ we have $|\wf(A') - \wf(B')| \leq k - \wf(S') + 1$ and 
$|V(G)|/2 - 1 - k \leq \wf(A') \leq |V(G)|/2 + k$. Thus, the algorithm 
described above finds a $c$-component balanced separator of size at most~$k$ 
for~$G$. 

Concerning the running time, there are at most~$n^c$ computations of the 
trimmer, each of which can be done in~$f(k, c) \cdot (n + 
m)$~time~(\autoref{thm:treewidth-reduction-gen}), where $m = |E(G)|$. Then we compute the separators 
for~$G^*$ and since the treewidth of~$G^*$ is bounded by some function~$g(k, 
c)$ (\autoref{thm:treewidth-reduction-gen}), this can be done in 
time~$g(k,c)^{O(g(k, c))} \cdot n^3$ using \autoref{lem:vbp-weights-tw}. Next, the parts~$A', B'$ can be computed in linear time from the parts~$A, B$ and 
we can modify the algorithm for \autoref{lem:vbp-weights-tw} to also output~$A, 
B$ without increasing the running time bound (see \autoref{sec:twalg}). Finally, moving the vertices from the parts~$A$ or~$B$ to~$S$ can be done in~$O(k(n + m))$~time because we move at most~$k$ vertices and a vertex that does not change the number of components can be found in~$O(n + m)$~time by taking a leaf of a BFS tree of a component that contains at least two vertices. Hence, the overall running time is bounded by~$n^c \cdot (f(k, c) \cdot (n + 
m) + g(k, c)^{O(g(k, c))} \cdot n^3 + O(n + m) + O(k(n + m)))$ which in turn is bounded by~$h(c, k) \cdot 
n^{c + 3}$ for a suitable function~$h$.
\end{proof}

We remark that an upper bound on the treewidth of the trimmer is~$2^{O(k^2)}$ 
(see \autoref{rem:upper-lower-bound-treewidth} below), yielding 
a~$2^{2^{O(k^2)}\cdot c^5} \cdot n^{c + 3}$-time algorithm. This bound can 
most certainly be improved, and it would be interesting to know what kinds of 
lower bounds exist.
Applying \autoref{prop1} we can derive the following.
\begin{corollary}
  Let~$G$ be a graph. Given non-negative integers~$c$ and~$k$, in~$h(c, k) \cdot n^{c + 9}$~time 
  we can find a $c$-component bisection for~$G$ of size at most~$k$ if it exists. Here,~$h(c, k)$ is a function depending only on~$c$ and~$k$.
\end{corollary}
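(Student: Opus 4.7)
The plan is to combine the reduction in \autoref{prop1} with a careful rerun of the algorithm behind \autoref{thm2}, tailored to the structure of the reduced instance. Given a \BP{} instance~$(G, k)$, \autoref{cons:edgetovertex} produces a \VBP{} instance~$(G', k + 1)$ in which a $c$-component bisection of~$G$ of cut size at most~$k$ corresponds to a $(c + 2)$\nobreakdash-component balanced separator of~$G'$ of size at most~$k + 1$, the two extra components containing $D_1 \setminus S$ and~$D_2 \setminus S$ respectively. Since $|V(G')| = O(nm) = O(n^3)$, a black-box application of \autoref{thm2} would cost $|V(G')|^{c + 5} = O(n^{3c + 15})$ time, which is far too much. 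The key idea is therefore to rerun the proof of \autoref{thm2} while enumerating a drastically smaller family of candidate terminal tuples.

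In the proof of \autoref{thm2}, the factor~$n^c$ comes from guessing, for each connected component of the graph minus the sought separator~$S$, a vertex not contained in~$S$. For the reduced instance, two of the $c + 2$~components always contain $D_1 \setminus S$ and~$D_2 \setminus S$ by the reverse direction of the proof of \autoref{prop1}, so we may fix two arbitrary terminals~$d_1 \in D_1$ and~$d_2 \in D_2$ once and for all. Each of the remaining $c$~components must contain some clique~$C_v$ entirely, since $|S| \leq k + 1 < 3m + 2 = |C_v|$; hence for each~$v \in V(G)$ we can precompute a list~$R_v$ of any $k + 2$~vertices of~$C_v$, and at least one vertex of~$R_v$ then lies outside~$S$. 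We iterate only over all $c$-element subsets~$T'$ of the pool $\bigcup_{v \in V(G)} R_v$ of size~$(k + 2)n$, giving $O(((k+2)n)^c) = f(k, c) \cdot n^c$ candidate terminal sets $T = T' \cup \{d_1, d_2\}$ to try instead of the naive~$|V(G')|^{c + 2}$.

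For each such~$T$, we build the $(k + 1, T)$\nobreakdash-trimmer~$(G^*, \phi)$ of~$G'$ via \autoref{thm:treewidth-reduction-gen} and invoke the weighted bisection algorithm of \autoref{lem:vbp-weights-tw} on~$G^*$ with vertex weights $\wf(v) := |\phi^{-1}(v)|$. The treewidth of~$G^*$ is bounded by a function of~$c$ and~$k$, and the dominant cost per invocation is the $\WF^2 \cdot |V(G^*)|$ factor of \autoref{lem:vbp-weights-tw}; since $\WF = |V(G')| = O(n^3)$ and $|V(G^*)| \leq |V(G')| = O(n^3)$, each invocation runs in $g(c, k) \cdot n^9$~time. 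Multiplying by the $O(n^c)$ outer iterations gives the claimed $h(c, k) \cdot n^{c + 9}$~bound. A solution found in~$G^*$ pulls back along~$\phi$ to a $(c + 2)$\nobreakdash-component balanced separator of~$G'$ of the same total size, from which the edge vertices yield, as in the proof of \autoref{prop1}, the desired $c$-component bisection of~$G$ with cut size at most~$k$.

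The main obstacle is not a new conceptual difficulty but careful bookkeeping: we must verify that restricting candidate terminals to the pool $\bigcup_v R_v$ still hits a good choice whenever a true $c$-component bisection of~$G$ of cut size at most~$k$ exists, and we must trace the origin of the $n^9$~factor through the $\WF^2 \cdot |V(G^*)|$ term of \autoref{lem:vbp-weights-tw} once vertex weights on~$G^*$ can reach $\Theta(n^3)$. If a tighter bound on $|V(G')|$ were available, the exponent~$9$ could presumably be reduced.
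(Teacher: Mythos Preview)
Your proposal is correct and follows essentially the same approach as the paper. The paper also fixes the two terminals in~$D_1, D_2$ and, for the remaining~$c$ terminals, only guesses \emph{which} vertex clique~$C_v$ they lie in; it then takes a single arbitrary vertex per chosen clique rather than your list~$R_v$ of~$k+2$ candidates. This is justified because the witness separator coming from the forward direction of \autoref{prop1} consists solely of edge vertices and one path vertex and therefore avoids every~$C_v$ entirely, so a single arbitrary representative already lies outside~$S$. Your extra~$(k+2)^c$ factor is harmless (absorbed into~$h(c,k)$), and your accounting of the~$n^9$ factor via~$\WF^2 \cdot |V(G^*)| = O(n^3)^2 \cdot O(n^3)$ matches the intended analysis.
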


Note that a direct application of \autoref{prop1} yields only a factor 
of~$n^{3(c + 5)}$ in the running time. The trick is, however, that two of the 
terminals in instances created by the corresponding \autoref{cons:edgetovertex} 
do not need to be guessed, while for the others we only have to guess in which 
vertex clique they appear. They can be assumed to be arbitrary vertices in the 
appropriate cliques. We omit the straightforward details. We remark that the 
FPT algorithm given by \citet{CLPPS14} has a much better singly exponential 
runtime, even if $c$ is constant. However, as said above, we believe that the 
runtime of our algorithm can be significantly improved. Hence our observations 
might still yield faster FPT algorithms for constant~$c$ in instances from practice. As described in the 
introduction, this would have implications for applications, as there 
$c$ often is a small constant.

Let us now deliver the proofs of \autoref{thm:treewidth-reduction-gen} and \autoref{lem:vbp-weights-tw}.

\subsubsection{Treewidth Reduction}\label{sec:twred}
In order to prove \autoref{thm:treewidth-reduction-gen}, we need to generalize the results of \citet{MOR13}, which we
do using the following definition. 

\begin{definition}
Let $G = (V, E)$ be a graph. %
The \emph{annotated torso~$\wtorso(G, W)$}, for a set~$W \subseteq V$, is a tuple~$(G', \phi)$ of a graph~$G' = (V', E)$ and a total, surjective, but not necessarily injective mapping~$\phi \colon V \to V'$ defined as follows. The graph~$G'$ is obtained from~$G$ by contracting all edges that have empty intersection with~$W$ and removing all loops and parallel edges created in the process. Hence, each connected component of~$G - W$ has a corresponding vertex in~$G'$ which we call \emph{component vertex}. The mapping~$\phi$ is defined as the identity when restricted to~$W$, that is, for all~$v \in W$ we have~$\phi(v) = v$. For all remaining vertices~$v \in V \setminus W$, $\phi$ maps $v$ to the component vertex of the connected component of~$G - W$ which contains~$v$.

The \emph{torso~$\torso(G, W)$} is obtained from~$G'$ in~$(G', \phi) = \wtorso(G, W)$ as follows. First, make the neighborhood of each component vertex into a clique, creating \emph{shortcut edges}. Then remove all component vertices.\footnote{This definition of torso is equivalent to the one used by \citet{MOR13}.}
\end{definition}
Some graphs and their (annotated) torso graphs are depicted in \autoref{fig:ex-torso}.
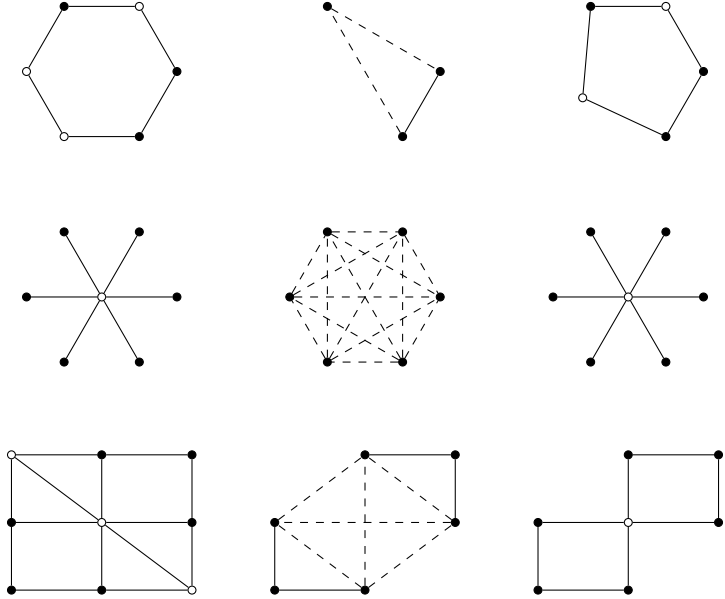
\begin{figure}[t]
  \centering
  \begin{tikzpicture}[node distance=9mm and 12mm]
    \tikzstyle{ver}=[circle,draw,fill=black,minimum size=3pt,inner
    sep=0pt, on grid]
    \tikzstyle{wver}=[circle,draw,fill=white,minimum size=3pt,inner
    sep=0pt, on grid]
    \tikzstyle{shortcut}=[dashed]

    \node[ver] (v2) at (0:10mm)  {};
    \node[wver] (v3) at (60:10mm)  {}
    edge (v2);
    \node[ver] (v4) at (120:10mm)  {}
    edge (v3);
    \node[wver] (v5) at (180:10mm)  {}
    edge (v4);
    \node[wver] (v6) at (240:10mm)  {}
    edge (v5);
    \node[ver] (v7) at (300:10mm)  {}
    edge (v6)
    edge (v2);

    \begin{scope}[shift={(3.5cm, 0)}]
      \node[ver] (v2) at (0:10mm) {}; 
      \node[ver] (v4) at (120:10mm) {} 
      edge[shortcut] (v2);
      \node[ver] (v7) at (300:10mm) {} 
      edge[shortcut] (v4) 
      edge (v2);
    \end{scope}
    
    \begin{scope}[shift={(7cm, 0)}]
      \node[ver] (v2) at (0:10mm)  {};
      \node[wver] (v3) at (60:10mm)  {}
      edge (v2);
      \node[ver] (v4) at (120:10mm)  {}
      edge (v3);
      \node[wver] (v5) at (210:7mm)  {}
      edge (v4);
      \node[ver] (v7) at (300:10mm)  {}
      edge (v5)
      edge (v2);     
    \end{scope}

    \begin{scope}[shift={(0,-3cm)}]
      \node[wver] (v1) at (0,0)  {};
      \node[ver] (v2) at (0:10mm)  {}
      edge (v1);
      \node[ver] (v3) at (60:10mm)  {}
      edge (v1);
      \node[ver] (v4) at (120:10mm)  {}
      edge (v1);
      \node[ver] (v5) at (180:10mm)  {}
      edge (v1);
      \node[ver] (v6) at (240:10mm)  {}
      edge (v1);
      \node[ver] (v7) at (300:10mm)  {}
      edge (v1);      
    \end{scope}

    \begin{scope}[shift={(3.5cm,-3cm)}]
      \node[ver] (v2) at (0:10mm)  {};
      \node[ver] (v3) at (60:10mm)  {}
      edge[shortcut] (v2);
      \node[ver] (v4) at (120:10mm)  {}
      edge[shortcut] (v2)
      edge[shortcut] (v3);
      \node[ver] (v5) at (180:10mm)  {}
      edge[shortcut] (v2)
      edge[shortcut] (v3)
      edge[shortcut] (v4);
      \node[ver] (v6) at (240:10mm)  {}
      edge[shortcut] (v2)
      edge[shortcut] (v3)
      edge[shortcut] (v4)
      edge[shortcut] (v5);
      \node[ver] (v7) at (300:10mm)  {}
      edge[shortcut] (v2)
      edge[shortcut] (v3)
      edge[shortcut] (v4)
      edge[shortcut] (v5)
      edge[shortcut] (v6);
    \end{scope}

    \begin{scope}[shift={(7cm,-3cm)}]
      \node[wver] (v1) at (0,0)  {};
      \node[ver] (v2) at (0:10mm)  {}
      edge (v1);
      \node[ver] (v3) at (60:10mm)  {}
      edge (v1);
      \node[ver] (v4) at (120:10mm)  {}
      edge (v1);
      \node[ver] (v5) at (180:10mm)  {}
      edge (v1);
      \node[ver] (v6) at (240:10mm)  {}
      edge (v1);
      \node[ver] (v7) at (300:10mm)  {}
      edge (v1);      
    \end{scope}

    \begin{scope}[shift={(0,-6cm)}]
      \node[wver] (v1) at (0,0) {};
      \node[wver, above left= of v1] (v2) {}
      edge (v1);
      \node[ver, above= of v1] (v3) {}
      edge (v1)
      edge (v2);
      \node[ver, above right= of v1] (v4) {}
      edge (v3);
      \node[ver, left= of v1] (v5) {}
      edge (v1)
      edge (v2);
      \node[ver, right= of v1] (v6) {}
      edge (v1)
      edge (v4);
      \node[ver, below left= of v1] (v7) {}
      edge (v5);
      \node[ver, below= of v1] (v8) {}
      edge (v1)
      edge (v7);
      \node[wver, below right= of v1] (v9) {}
      edge (v1)
      edge (v6)
      edge (v8);
    \end{scope}

    \begin{scope}[shift={(3.5cm,-6cm)}]
      \node (v1) at (0,0) {};
      \node[ver, above= of v1] (v3) {};
      \node[ver, above right= of v1] (v4) {}
      edge (v3);
      \node[ver, left= of v1] (v5) {}
      edge[shortcut] (v3);
      \node[ver, right= of v1] (v6) {}
      edge[shortcut] (v3)
      edge[shortcut] (v5)
      edge (v4);
      \node[ver, below left= of v1] (v7) {}
      edge (v5);
      \node[ver, below= of v1] (v8) {}
      edge[shortcut] (v3)
      edge[shortcut] (v5)
      edge[shortcut] (v6)
      edge (v7);
    \end{scope}

    \begin{scope}[shift={(7cm,-6cm)}]
      \node[wver] (v1) at (0,0) {};
      \node[ver, above= of v1] (v3) {}
      edge (v1);
      \node[ver, above right= of v1] (v4) {}
      edge (v3);
      \node[ver, left= of v1] (v5) {}
      edge (v1);
      \node[ver, right= of v1] (v6) {}
      edge (v1)
      edge (v4);
      \node[ver, below left= of v1] (v7) {}
      edge (v5);
      \node[ver, below= of v1] (v8) {}
      edge (v1)
      edge (v7);
    \end{scope}
  \end{tikzpicture}
  
  \caption{Graphs~$G$ and vertex subsets~$W$ (black vertices) on the left, $\torso(G, W)$ in the middle (new shortcut edges are dashed), and the graph in $\wtorso(G, W)$ on the right (component vertices are white).}
\label{fig:ex-torso}
\end{figure}
We begin with an observation about annotated torsos and their connected components if some set of vertices is removed.
\begin{lemma}\label{pro:wtorso-sep}
  Let $G$ be a graph, $W \subseteq V(G)$, and $(G', \phi) = \wtorso(G, W)$. For any~$S \subseteq V(G')$, the mapping~$\phi$ is a one-to-one mapping between the connected components of~$G - \phi^{-1}(S)$ and~$G' - S$.

\end{lemma}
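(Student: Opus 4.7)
The plan is to exhibit an explicit bijection and verify its two directions using the standard ``contract/lift'' correspondence between $G$ and the graph in $\wtorso(G,W)$. Write $(G',\phi)=\wtorso(G,W)$. First I would record two structural facts used throughout: (a)~for every $v'\in V(G')$, the preimage $\phi^{-1}(v')$ is either the singleton $\{v'\}\subseteq W$ or the entire vertex set of a connected component of $G-W$, hence in both cases $\phi^{-1}(v')$ induces a connected subgraph of $G$; and (b)~an edge $\{a,b\}$ appears in $G'$ exactly when $G$ contains some edge with one endpoint in $\phi^{-1}(a)$ and the other in $\phi^{-1}(b)$, a direct consequence of the contraction definition.

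Next, I would define a map $\Phi$ from connected components of $G-\phi^{-1}(S)$ to those of $G'-S$ by declaring $\Phi(C)$ to be the component of $G'-S$ containing $\phi(v)$ for any $v\in C$. Well\nobreakdash-definedness: given $u,v\in C$ joined by a path $u=x_0,x_1,\dots,x_\ell=v$ avoiding $\phi^{-1}(S)$, the sequence $\phi(x_0),\phi(x_1),\dots,\phi(x_\ell)$ is a walk in $G'-S$, because $\phi(x_i)\notin S$ for each $i$, and for each edge $\{x_i,x_{i+1}\}$ either both endpoints lie in a common component of $G-W$ (so $\phi(x_i)=\phi(x_{i+1})$) or, by~(b), $\{\phi(x_i),\phi(x_{i+1})\}$ is an edge of $G'$.

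For injectivity, suppose $\Phi(C_1)=\Phi(C_2)$ with $v_i\in C_i$. Choose a path $\phi(v_1)=w_0,w_1,\dots,w_k=\phi(v_2)$ in $G'-S$ and lift it to $G-\phi^{-1}(S)$: by~(b) pick edges $\{x_i,y_{i+1}\}\in E(G)$ with $x_i\in\phi^{-1}(w_i)$, $y_{i+1}\in\phi^{-1}(w_{i+1})$, and by~(a) connect $y_i$ to $x_i$ inside $\phi^{-1}(w_i)$ using a path that stays entirely inside $\phi^{-1}(w_i)$; crucially, $\phi^{-1}(w_i)$ is disjoint from $\phi^{-1}(S)$ because $w_i\notin S$ and preimages partition $V(G)$. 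Prepending a path from $v_1$ to $x_0$ inside $\phi^{-1}(w_0)$ and appending one from $y_k$ to $v_2$ inside $\phi^{-1}(w_k)$ produces a $v_1$--$v_2$ walk in $G-\phi^{-1}(S)$, whence $C_1=C_2$. Surjectivity is immediate: for any component $C'$ of $G'-S$ and any $w\in C'$, the preimage $\phi^{-1}(w)$ is nonempty (since $\phi$ is surjective) and disjoint from $\phi^{-1}(S)$, so any vertex of $\phi^{-1}(w)$ lies in a component $C$ with $\Phi(C)=C'$.

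The only subtle point is the lifting step in injectivity: one has to be careful that the internal connectors used to bridge from $y_i$ to $x_i$ can be chosen inside $\phi^{-1}(w_i)$ without leaving it, which is exactly where fact~(a) is indispensable; everything else is routine walk-tracking.
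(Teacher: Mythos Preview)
Your proof is correct and follows essentially the same strategy as the paper: both arguments rest on the project/lift correspondence for paths between $G-\phi^{-1}(S)$ and $G'-S$, with the paper phrasing the conclusion as ``$\phi(C)$ and $\phi^{-1}(C')$ are themselves components'' while you phrase it as ``$\Phi$ is a bijection.'' One tiny remark: the lemma as stated (and as later used via trimmer property~\enuref{enu:trp1}) needs $\phi(C)$ to \emph{equal} a component of $G'-S$, not merely to be contained in one; this does follow from what you proved---your injectivity plus fact~(a) give that each fibre $\phi^{-1}(w)$ with $w\notin S$ lies in a single component of $G-\phi^{-1}(S)$, so the images $\phi(C)$ partition $V(G')\setminus S$ and hence coincide with the $\Phi(C)$---but it would be worth making this last step explicit.
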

\begin{proof}
  We prove that~$\phi$ maps connected components of $G - \phi^{-1}(S)$ to connected components of~$G' - S$ and then prove that it is indeed a one-to-one mapping. Before proving the first part, we note that \pth{s'}{t'}s in~$G' - S$ translate to \pth{s}{t}s in~$G - \phi^{-1}(S)$ for all~$s \in \phi^{-1}(s'), t \in \phi^{-1}(t')$ and vice versa.
  \begin{mathenum}
  \item If~$v_1, \ldots, v_\ell$ is a path~$P$ in~$G - \phi^{-1}(S)$, then~$\phi(v_1), \ldots, \phi(v_\ell)$ is a walk~$P'$ in~$G' - S$ where multiple consecutive occurrences of a vertex in the second sequence are omitted. Hence there is a~$\phi(v_1)$-$\phi(v_\ell)$-path in~$G' - S$. \label{me:path1}
  \suspend{mathenum}
It is enough to show that~$\phi$ maps each pair $x,y$ of adjacent vertices either to the same vertex or to adjacent vertices.
This is clear if both $x,y$ are in $W$. If both $x$ and $y$ are in $V(G)\setminus W$ then they are mapped to the same component vertex of $G'$, as they are in the same connected component of $G - W$. Finally, if $x$ in $W$ and $y \in V(G)\setminus W$ or vice versa, then $\phi(x)$ and $\phi(y)$ are adjacent by the definition of annotated torso. 

   \resume{mathenum}
  \item If~$v_1, \ldots, v_\ell$ is a path~$P'$ in~$G' - S$, then there is an~\pth{s}{t}~$P$ in~$G - \phi^{-1}(S)$ for every~$s \in \phi^{-1}(v_1), t \in \phi^{-1}(v_\ell)$.\label{me:path2}
  \suspend{mathenum}
  Construct~$P'$ as follows. First, consider a component vertex~$v_i \in V(P')$ with~$1 < i < \ell$, if there is any. Note that~$v_{i - 1}$ and~$v_{i + 1}$ are not component vertices, because component vertices are not adjacent with each other.
  We know that~$\phi^{-1}(v_i)$ is a connected component in~$G - W$ and, since~$v_i \notin S$, also~$\phi^{-1}(S) \cap \phi^{-1}(v_i) = \emptyset$. Since $v_i$ is a component vertex adjacent to~$v_{i-1}$ and~$v_{i + 1}$, there are~$v'_{i-1}$ and~$v'_{i + 1}$ in~$\phi^{-1}(v_i)$ adjacent to~$v_{i-1}$ and~$v_{i + 1}$, respectively.  
  Modify~$P'$ by replacing~$v_i$ with a \pth{v_{i - 1}}{v_{i + 1}} in~$G' - \phi^{-1}(S)$ formed by the edges~$v_{i-1}v'_{i-1}$, $v'_{i+1}v_{i + 1}$ and a $v'_{i-1}$-$v'_{i + 1}$-path inside the component~$\phi^{-1}(v_i)$.
  
  Next, if~$v_1$ is a component vertex, choose an arbitrary \pth{s}{v_{2}}~$\hat{P'}$ in~$G' - \phi^{-1}(S)$. Such a path exists by a similar argument as above. Replace~$v_1$ with~$\hat{P'}$ in~$P'$. Proceed analogously if~$v_\ell$ is a component vertex. Since, in this way, we replaced all vertices in~$V(P') \setminus W$ with paths that exist in~$G - \phi^{-1}(S)$, we have obtained an $s$-$t$-walk in~$G - \phi^{-1}(S)$. 
  Hence also \enuref{me:path2} is proved.

  For a graph~$G$ and a set $T \subseteq V(G)$, let us call a set $X\subseteq V(G) \setminus T$ of vertices \emph{$T$-unbroken} if there is a path in~$G-T$ between any two vertices in~$X$. Note that a connected component of~$G-T$ is an inclusion-wise  maximal $T$-unbroken set. Claims \enuref{me:path1} and \enuref{me:path2} show that if~$C$ is $(\phi^{-1}(S))$-unbroken in~$G$, then~$\phi(C)$ is $S$-unbroken in~$G'$ and if~$C'$ is $S$-unbroken in~$G'$, then~$\phi^{-1}(C)$ is $(\phi^{-1}(S))$-unbroken in~$G$. 
  
   Now let us prove that $\phi$ indeed provides a mapping between the connected components. 
  \resume{mathenum}
  \item If a set~$C \subseteq V(G)$ is a connected component in~$G - \phi^{-1}(S)$ then~$\phi(C)$ is a connected component in~$G' - S$.\label{me:mapping}
  \suspend{mathenum}
  If~$C$ is a connected component in~$G - \phi^{-1}(S)$ then~$\phi(C)$ is $S$-unbroken in~$G'$. Now for the sake of contradiction assume that there is a connected component~$K' \supsetneq \phi(C)$ in~$G'-S$. Then~$\phi^{-1}(K')$ is $(\phi^{-1}(S))$-unbroken in~$G$ and~$\phi^{-1}(K') \supsetneq C$ as $\phi$ is surjective. This contradicts~$C$ being a connected component. Hence~$\phi(C)$ is a connected component of~$G' - S$. 
  
   Finally, let us prove that the mapping is one-to-one. 
  \resume{mathenum}
  \item If a set~$C' \subseteq V(G')$ is a connected component in~$G' - S$ then~$\phi^{-1}(C')$ is a connected component in~$G -\phi^{-1}(S)$.\label{me:onetoone}
  \end{mathenum}
  If~$C'$ is a connected component in~$G' - S$ then~$\phi^{-1}(C')$ is $(\phi^{-1}(S))$-unbroken in~$G$. Now for the sake of contradiction assume that there is a connected component~$K \supsetneq \phi^{-1}(C')$ in~$G-\phi^{-1}(S)$. Then~$\phi(K)$ is $S$-unbroken in~$G'$ and~$\phi(K) \supsetneq C'$ by the definition of~$\phi^{-1}$ as~$K \supsetneq \phi^{-1}(C')$. This contradicts~$C'$ being a connected component. Hence~$\phi^{-1}(C')$ is a connected component of~$G-\phi^{-1}(S)$.  
\end{proof}

We now show that the treewidth of an annotated torso is at most one larger than the treewidth of the corresponding torso. For this we need to formally introduce the treewidth
first.
A \emph{tree decomposition} of a graph~$G=(V,E)$ is a pair~$(T, \tau)$, where $T$
is a rooted tree and $\tau$ is a mapping~$V(T) \to 2^V$ such that
\begin{itemize}
\item for every $e \in E(G)$, there is an $x \in V(T)$ with $e
  \subseteq \tau(x)$, and
\item for every $v \in V$ the set $T(v)= \{x \in
  V(T) \mid v \in \tau(x)\}$ induces a non-empty subtree of $T$.
\end{itemize}
The sets~$\tau(x)$ are sometimes called \emph{bags}. The \emph{width} of a tree decomposition~$(T, \tau)$ is
$\max\{|\tau(x)| \mid x \in V(T)\} -1$ and the \emph{treewidth}~$\tw(G)$ of a
graph~$G$ is the minimum width of a tree decomposition for~$G$.

\begin{lemma}\label{pro:wtorso-tw}
 Let~$G = (V, E)$ be a graph, let~$W \subseteq V(G)$ and
let~$\torso(G, W)$ be of treewidth~$\tw$. Then, the graph in~$\wtorso(G, W)$ has treewidth at
most~$\tw + 1$.
\end{lemma}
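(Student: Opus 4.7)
The plan is to start from a tree decomposition $(T, \tau)$ of $\torso(G, W)$ of width $\tw$ and build a tree decomposition of the graph $G'$ in $\wtorso(G, W)$ by attaching one fresh leaf bag per component vertex. The key structural observation driving the construction is that the neighborhood $N_{G'}(v_C)$ of each component vertex $v_C$ lies entirely in $W$---two distinct component vertices cannot be adjacent in $G'$, because any edge of $G$ joining two vertices of $V \setminus W$ would keep them in the same connected component of $G - W$---and that, by the very definition of $\torso$, this neighborhood is turned into a clique of $\torso(G, W)$. The standard fact that every clique of a graph is contained in some bag of every tree decomposition then lets me fix, for each component vertex $v_C$, a node $x_C \in V(T)$ with $N_{G'}(v_C) \subseteq \tau(x_C)$.

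The new decomposition $(T', \tau')$ is obtained by attaching, for each component vertex $v_C$, a new leaf $y_C$ as a child of $x_C$ and setting $\tau'(y_C) := \tau(x_C) \cup \{v_C\}$; all original bags stay unchanged. Checking the three tree-decomposition axioms is then routine. Every vertex of $G'$ is covered: vertices of $W$ appear in the original bags (since $\torso(G, W)$ has vertex set $W$), and each $v_C$ appears in $\tau'(y_C)$. Every edge of $G'$ is covered: edges of $G[W]$ remain edges of $\torso(G, W)$ and are handled by $(T, \tau)$, while an edge $\{w, v_C\}$ of $G'$ is covered by $\tau'(y_C)$ because $w \in N_{G'}(v_C) \subseteq \tau(x_C)$. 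The connectedness axiom is preserved because each $v_C$ appears in only a single bag, and every new leaf containing a $w \in W$ is hung off of $x_C$, which already contains $w$, so the subtree of bags containing $w$ merely gains leaves attached to existing nodes of that subtree.

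The width bound is then immediate: old bags still have size at most $\tw + 1$, and each new bag has size $|\tau(x_C)| + 1 \leq \tw + 2$, so $(T', \tau')$ has width at most $\tw + 1$. The only point requiring any care is the clique-in-a-bag step that supplies $x_C$, which is the conceptual heart of the argument; this in turn rests on the two facts I highlighted above, that component vertices of $\wtorso(G,W)$ are pairwise non-adjacent and that the shortcut edges of $\torso$ are precisely what realizes each $N_{G'}(v_C)$ as a clique. Once those are checked, the remaining verification is a standard axiom-by-axiom inspection and the extra ``$+1$'' in the width is accounted for by the single new vertex $v_C$ added to the bag~$\tau(x_C)$.
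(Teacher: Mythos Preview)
Your proof is correct and follows essentially the same approach as the paper: start from a width-$\tw$ tree decomposition of $\torso(G,W)$, use that each component vertex has a neighborhood that is a clique in $\torso(G,W)$ (hence contained in some bag), and attach one new leaf bag per component vertex. The paper's proof is virtually identical, citing the clique-in-a-bag fact and noting that component vertices are pairwise non-adjacent so the new bags never need to build on one another.
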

\begin{proof}
 Let~$(T, \tau)$ be a tree-decomposition for~$\torso(G, W)$. Note that, to
obtain a tree-decomposition for~$\wtorso(G, W)$ we only need to incorporate the
component vertices and their incident edges. For this, successively consider
each component vertex~$v$ and let~$N$ be its neighborhood in~$\wtorso(G, W)$.
Due to the shortcut edges between neighbors of $v$,~$N$ induces a clique
in~$\torso(G, W)$. Hence there is a bag~$\tau(t)$,~$t \in V(T)$, that contains~$N$~\cite[Lemma 2.2.2.]{Klo94}.
Add a new vertex~$t'$ to~$T$ adjacent only to~$t$ and define~$\tau(t') = \tau(t)
\cup \{v\}$. Hence, the bags containing~$v$ or~$N$, respectively, (still) induce
a subtree in~$V(T)$. Furthermore, each edge incident to~$v$ is contained
in~$\tau(t')$. Note that we never need to introduce a copy of a bag that does not occur in the tree-decomposition of~$\torso(G, W)$ because component vertices are not adjacent with each other. Thus, introducing a new bag for all component vertices in the above-described way yields a
tree-decomposition for~$\wtorso(G, W)$ and increases the maximum bag size by at
most one.
\end{proof}
\begin{remark}
  We mention in passing that the notion of annotated torso is more robust with respect to the treewidth than the notion of torso introduced by \citet{MOR13} in the following sense. Adding vertices to the set~$W$ increases the treewidth of both $\torso(G,W)$ and the graph in $\wtorso(G, W)$ by at most one, as \autoref{cor:torso-add-tw} (below) shows. However, removing a vertex from~$W$ may increase the treewidth of $\torso(G,W)$ arbitrarily. This is witnessed by a star where~$W$ contains all vertices; removing the center of the star from~$W$ yields a clique in the torso (see \autoref{fig:ex-torso}). In contrast, the treewidth of annotated torsos cannot increase when removing a vertex~$v$ from~$W$. This is because, either the graph in the annotated torso stays the same, or removing~$v$ is equivalent to contracting an edge between~$v$ and a component vertex. That is, the resulting graph is a minor of the original annotated torso graph.
\end{remark}

As we observe next, annotated torsos can be computed in linear time.

\begin{lemma}\label{prop:compute-torso}
Let~$G = (V, E)$ be a graph and let~$W \subseteq V$.
Then, %
$(G', \phi) = \wtorso(G, W)$ and~$\phi^{-1}$ can be computed in $O(n + m)$~time. %
\end{lemma}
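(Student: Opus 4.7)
The plan is to compute $\wtorso(G, W)$ together with $\phi^{-1}$ in three linear-time phases, all of which use only standard graph-traversal primitives.

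First, I would identify the connected components of $G - W$ by running BFS (or DFS) on the subgraph induced by $V \setminus W$ in total time $O(n + m)$. For each such component~$C$, I create a fresh component vertex~$v_C$ and set $\phi(u) = v_C$, and append $u$ to $\phi^{-1}(v_C)$, for every $u \in C$. For each $v \in W$ I set $\phi(v) = v$ and $\phi^{-1}(v) = \{v\}$. Represented as arrays of lists indexed by $V$ and $V'$ respectively, both $\phi$ and $\phi^{-1}$ are then available after this pass.

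Second, the edge set of $G'$ is assembled in a single sweep over $E$. For each edge $\{u, v\}$, if both endpoints lie outside $W$ they must belong to the same connected component of $G - W$ (otherwise they would not be adjacent in $G - W$), so the contraction turns the edge into a self-loop and I discard it; otherwise at least one endpoint lies in $W$, and I append $\{\phi(u), \phi(v)\}$ to the adjacency list representation of $G'$. This step is again $O(n + m)$, and after it the only remaining defect of $G'$ is the possible presence of parallel edges.

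Third, parallel edges are removed in worst-case linear time using a standard marking trick. Maintain a global boolean array $M$ indexed by $V'$, invariantly all-false at the start of each outer iteration. For each $v' \in V'$, scan the adjacency list of $v'$ produced in the previous phase: keep a neighbor $u'$ only if $M[u'] = \text{false}$, and in that case set $M[u'] = \text{true}$. After finishing $v'$, walk the emitted (deduplicated) neighbor list of $v'$ once more to reset exactly the entries of $M$ that were toggled. The total work across all iterations is proportional to $|V'| + |E'| = O(n + m)$.

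The only mildly delicate point is the linear-time deduplication of parallel edges, which the marking trick above handles without resorting to hashing; every other step is routine graph-traversal and bookkeeping. Combining the three phases yields $(G', \phi) = \wtorso(G, W)$ together with $\phi^{-1}$ in the claimed $O(n + m)$ time.
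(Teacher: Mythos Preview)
Your proof is correct and follows the same overall plan as the paper: first label the connected components of $G - W$ by a linear-time traversal to obtain $\phi$ and $\phi^{-1}$, then build the edge set of $G'$ from the edges incident to $W$, handling duplicates in linear time. The only notable difference is in how parallel edges are avoided. The paper interleaves deduplication with edge construction: while scanning the adjacency list of a vertex $w \in W$, whenever a neighbor $x \notin W$ lies in component~$i$, it checks whether $w$ is already the \emph{last} entry of $v_i$'s list (which is the only place $w$ could be, since $w$'s list is processed in one contiguous block) and, if so, drops the edge on the spot. Your three-phase version instead builds a multigraph first and then runs a separate marking-based pass to remove parallels. Both tricks are standard and yield the same $O(n+m)$ bound; the paper's version saves one scan, while yours is arguably more modular.
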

\begin{proof}
To compute the annotated torso, color the connected
components of~$G - W$ with distinct colors~$1, \ldots, c$. %
Next, create a copy~$G'$ of~$G$ and modify~$G'$ by contracting all edges that have empty intersection with~$W$ and removing all self-loops and parallel edges. In order to achieve a linear time bound for this, we proceed as follows (we assume~$G'$ to be represented as an adjacency list structure).

We start with vertices in $W$ having their adjacency lists as in $G$ and introduce new vertices~$v_1, \ldots, v_c$ (the component vertices) with empty lists. For each $w$ in $W$ we process its adjacency list and for each vertex $x$ on the list we distinguish the following cases:
\begin{itemize}
 \item [Case 1:] the vertex~$x$ is in~$W$. We make no changes to the adjacency lists and proceed with the next vertex in the adjacency list.
 \item [Case 2:] the vertex~$x$ is in~$V \setminus W$, in the component of color~$i$ and~$w$ is the last entry in the adjacency list of vertex~$v_i$. Then we delete~$x$ from the adjacency list of vertex~$w$.
 \item [Case 3:] the vertex~$x$ is in~$V \setminus W$, in the component of color~$i$ and~$w$ is not the last entry in the adjacency list of vertex~$v_i$. Then we add~$w$ to the end of adjacency list of~$v_i$ and replace~$x$ by~$v_i$ in the adjacency list of vertex~$w$.
\end{itemize}

Note that if $w$ appears on the list of $v_i$, it must be at the end, since we only process another vertex in $W$ after we finish processing $w$.
Hence, this procedure indeed computes~$G'$ according to 
the definition of~$\wtorso(G, W)$. 
This proves the running time bound of computing the graph~$G'$.

Using the colors of the vertices in~$G$ we can create an array to compute~$\phi$ efficiently. To compute~$\phi^{-1}$ we use an array indexed by the colors which at index~$i$ contains a list of all vertices with color~$i$ in~$G$. Filling the array can be done in linear time.
\end{proof}
To prove \autoref{thm:treewidth-reduction-gen} we reuse some results by
\citet{MOR13}. To state them we need to define the excess of a cut.
\begin{definition}
 Let~$G$ be a graph and~$s, t \in V(G)$. Let~$\ell$ be the minimum size of an
\cvsep{s}{t} in~$G$. The \emph{excess} of an \cvsep{s}{t}~$S$ is~$|S| - \ell$.
\end{definition}
\noindent The following three statements have been proved by \citet{MOR13}; they correspond to Lemma 2.11, Lemma 2.8, and Corollary 2.10, respectively.
\begin{lemma}[\citet{MOR13}]\label{lem:separatorset}
 Let~$s,t$ be two vertices of a graph~$G$ and let $\ell$ be the minimum size of
an \cvsep{s}{t}. For some~$e > 0$ let~$C$ be the union of all inclusion-wise
minimal \cvsep{s}{t}s having excess at most~$e$ (that is, having size at most~$k
= \ell + e$). Then, there is an~$f(\ell, e) \cdot (n + m)$-time algorithm
that returns a set~$C' \supseteq C$ disjoint from~$\{s, t\}$ such
that~$\tw(\torso(G, C')) \leq g(\ell, e)$, for some functions~$f$ and~$g$
depending only on~$\ell$ and~$e$.%
\end{lemma}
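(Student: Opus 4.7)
The plan is to invoke the theory of \emph{important separators}, the standard tool for isolating the ``skeleton'' of small $s$--$t$ cuts. Recall that an inclusion-wise minimal \cvsep{s}{t}~$S$ of size at most $k=\ell+e$ is called \emph{important} if no other \cvsep{s}{t}~$S'$ with $|S'|\le|S|$ satisfies $R(s,S')\supsetneq R(s,S)$, where $R(s,S)$ denotes the set of vertices reachable from $s$ in $G-S$. A classical branching argument shows that the number of important \cvsep{s}{t}s of size at most $k$ is bounded by~$4^k$, and all of them can be enumerated in $4^k\cdot k\cdot(n+m)$ time by iteratively guessing, for an edge leaving the current closure of~$s$, whether its head belongs to~$S$ or not. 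My first step is thus to compute this collection~$\mathcal{I}$ of important separators.

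Next, I would show that the union $C'$ of all vertices appearing in any member of~$\mathcal{I}$, augmented by vertices that appear in minimum cuts between consecutive important separators, contains the target set~$C$. The key lemma, proved by induction on the excess~$e$, is that every inclusion-wise minimal \cvsep{s}{t}~$S$ of excess at most~$e$ is \emph{dominated} by some important \cvsep{s}{t}~$\hat S\in\mathcal{I}$, meaning $R(s,\hat S)\supseteq R(s,S)$. For the base case $e=0$, every minimum \cvsep{s}{t} lies ``between'' the unique closest-to-$s$ and closest-to-$t$ minimum cuts, both of which are important; hence all vertices of such separators lie in the bounded-size region these two cuts delimit. The inductive step picks any vertex~$v$ of a fixed minimum cut and recurses on the two subinstances where either $v\in S$ (reducing~$k$ by one) or $v$ is contracted into~$s$ or~$t$ (reducing the excess).

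Finally, I would bound the treewidth of $\torso(G,C')$. Because important separators are laminarly nested with respect to the $R(s,\cdot)$ order, the set~$C'$ decomposes into at most $4^k$ layers, each of size at most~$k$, arranged along a ``chain'' from~$s$ to~$t$. Each connected component of $G-C'$ has its neighborhood contained in at most two consecutive layers, so after forming the torso the shortcut edges only turn each pair of consecutive layers into a clique on at most~$2k$ vertices. Gluing these cliques along the chain yields a tree decomposition of width depending only on~$\ell$ and~$e$, establishing $\tw(\torso(G,C'))\le g(\ell,e)$; the linear-time bound follows from \autoref{prop:compute-torso} combined with the enumeration cost.

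The main obstacle will be the inductive domination lemma: arguing precisely that every inclusion-wise minimal \cvsep{s}{t} of excess at most~$e$ is captured by the union built from important separators plus a bounded recursion on minimum cuts. Getting the induction to ``compose'' cleanly — so that the vertex sets produced at each recursive call assemble into a single set~$C'$ whose torso is still of bounded treewidth — is the technically delicate part. Once that lemma is in place, the treewidth bound is essentially forced by the laminar structure of~$\mathcal{I}$, and the overall running time is dominated by the $4^k$-fold enumeration.
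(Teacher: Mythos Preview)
First, note that the paper does not prove this lemma at all: it is quoted as Lemma~2.11 of \citet{MOR13} and used as a black box. So there is no ``paper's own proof'' to compare against; any assessment is of your sketch on its own merits and against the original argument in~\cite{MOR13}.

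Your treewidth step contains a genuine gap. The assertion that important \cvsep{s}{t}s are ``laminarly nested with respect to the $R(s,\cdot)$ order'' and therefore arrange into a chain of at most $4^k$ layers is false. By definition, two distinct important separators of the same size are \emph{incomparable} in that order (otherwise one would dominate the other and fail to be important), so the family~$\mathcal{I}$ is typically a wide antichain, not a chain. Hence the claim that every component of $G - C'$ has its neighborhood in at most ``two consecutive layers'' has no basis, and the clique-gluing tree decomposition you describe simply does not exist in general. The $4^k$ bound controls only $|C'| \le k\cdot 4^k$; it gives no structural control on the torso, and indeed $\torso(G,C')$ could contain a large clique.

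There is also a mismatch between your containment argument and your construction of~$C'$. You define~$C'$ as the union of important separators (plus some unspecified ``minimum cuts between consecutive important separators''), but the argument you then sketch for $C\subseteq C'$ is a recursion on the excess: pick a vertex~$v$ of a fixed minimum cut and branch on $v\in S$ versus contracting~$v$ into a terminal. That recursion is in fact the core of the actual proof in~\cite{MOR13}, but the set it produces is \emph{not} the union of important separators; it is assembled from the unique closest-to-$s$ and closest-to-$t$ minimum cuts in a tree of contracted subinstances, and it is precisely this recursive construction that delivers both $C\subseteq C'$ and the treewidth bound (of order $2^{O(e\ell)}$, cf.\ \autoref{rem:upper-lower-bound-treewidth}) simultaneously. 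Grafting the recursion onto the important-separator enumeration conflates two different mechanisms, and neither containment nor bounded treewidth follows from the hybrid as written. Also note that ``domination'' of~$S$ by some $\hat S\in\mathcal{I}$ says nothing about where the \emph{vertices} of~$S$ lie; it only compares the reachable sets $R(s,\cdot)$.
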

\begin{lemma}[\citet{MOR13}]\label{lem:torso-union-tw}
 Let~$G = (V, E)$ be a graph and~$W_1, \ldots, W_r \subseteq V$ and let~$W :=
\bigcup_{i = 1}^r W_i$. We have~$\tw(\torso(G, W)) \leq 1 + \sum_{i = 1}^r
\tw(\torso(G, W_i))$.
\end{lemma}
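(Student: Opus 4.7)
The plan is to prove the bound by induction on~$r$, explicitly constructing a tree decomposition of $\torso(G, W)$ of width at most $1 + \sum_{i=1}^r \tw(\torso(G, W_i))$. The base case $r = 1$ is immediate since then $\torso(G, W) = \torso(G, W_1)$.

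For the inductive step, I would first exploit the transitivity of the torso operation: for $W_i \subseteq W$, one has $\torso(\torso(G, W), W_i) = \torso(G, W_i)$. This lets me replace $G$ by $\torso(G, W)$ and reduces the claim to the following form: for a graph $G'$ and subsets $W_1, \ldots, W_r$ with $\bigcup_i W_i = V(G')$, show $\tw(G') \leq 1 + \sum_i \tw(\torso(G', W_i))$. Next, I would record two structural facts about any tree decomposition $(T_i, \tau_i)$ of $\torso(G', W_i)$ of width $\tw(\torso(G', W_i))$. First, since $G'[W_i] \subseteq \torso(G', W_i)$, each $(T_i, \tau_i)$ already covers every edge of $G'$ with both endpoints in~$W_i$. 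Second, for an edge $uv$ of $G'$ with $u \in W_i \setminus W_j$ and $v \in W_j$, the vertex~$v$ lies in the shortcut clique $N_{G'}(C) \cap W_j$ formed in $\torso(G', W_j)$, where $C$ is the connected component of $G' - W_j$ containing~$u$; this clique is contained in some bag of~$T_j$, which provides an anchor to which $u$ can be attached.

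The construction would then merge the $(T_i, \tau_i)$'s into a joint tree decomposition of $G'$ whose bags are carefully chosen unions of bags from each~$T_i$, augmented with vertices from $V(G') \setminus W_i$ inserted at the bags of~$T_i$ that contain the appropriate shortcut cliques. The underlying tree is obtained by gluing the $T_i$'s along these anchor bags, and bookkeeping must ensure that the bags containing any given vertex still form a connected subtree.

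The main obstacle I anticipate is attaining the tight bound $1 + \sum_i \tw(\torso(G, W_i))$ rather than the weaker $\sum_i \tw(\torso(G, W_i)) + r - 1$ that one obtains by iteratively applying a two-set version of the statement. Tightness requires treating all $r$ tree decompositions simultaneously and exploiting that every $W_i$ sits inside the common ambient set~$W$, so that the shortcut cliques are shared across the $T_i$'s and do not each independently contribute a $+1$. I expect the precise combinatorial construction to be delicate and to mirror the original argument by \citet{MOR13}.
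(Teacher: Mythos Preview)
The paper does not actually prove this lemma. It is quoted verbatim from \citet{MOR13} (their Lemma~2.8) and used as a black box; the surrounding text explicitly says ``The following three statements have been proved by \citet{MOR13}.'' So there is no proof in this paper for your proposal to be compared against.

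For what it is worth, your outline is a sensible sketch of how one would approach the statement, and the reduction via the transitivity $\torso(\torso(G,W),W_i)=\torso(G,W_i)$ is the right first move. You are also right to flag that a naive two-set induction only yields $+\,(r-1)$ rather than $+1$; in the original argument this is handled not by treating all $r$ decompositions simultaneously, but by proving the two-set case in the sharper form $\tw(\torso(G,W_1\cup W_2))\le \tw(\torso(G,W_1))+\tw(\torso(G,W_2))+1$ and then observing that iterating this still gives only a single $+1$ overall, because after the first step one of the two summands is $\tw(\torso(G,W))$ itself. If you want to complete the proof, consult Lemma~2.8 of \citet{MOR13} directly.
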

\begin{corollary}[\citet{MOR13}]\label{cor:torso-add-tw}
 For every graph~$G$ and sets~$C, X \subseteq V(G)$, we have $\tw(\torso(G, C \cup
X)) \leq \tw(\torso(G, C)) + |X|$.
\end{corollary}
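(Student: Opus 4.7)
The plan is to construct a tree decomposition of $\torso(G, C \cup X)$ from a tree decomposition of $\torso(G, C)$ by inserting all of~$X$ into every bag. Since this raises each bag size by at most $|X|$, establishing that the resulting object is a valid tree decomposition of $\torso(G, C \cup X)$ immediately yields the desired bound.

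Concretely, I would take an optimal tree decomposition $(T, \tau)$ of $\torso(G, C)$ and set $\tau'(t) := \tau(t) \cup X$ for every $t \in V(T)$. The subtree property is immediate: vertices of~$X$ occur in every bag and so have occurrence set~$V(T)$, while vertices of~$C$ occur in exactly the same non-empty subtree of bags as before. The only point that needs work is edge coverage. The easy case is when an edge has an endpoint in~$X$: that endpoint lies in every bag, and the other endpoint lies in some bag by the subtree property, so the two share a bag.

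The main technical step will be the remaining case of an edge $\{u, v\}$ with $u, v \in C$. Here the claim to prove is that every such edge of $\torso(G, C \cup X)$ is already an edge of $\torso(G, C)$, so that some bag of~$\tau$ already contains~$\{u,v\}$. Edges of~$G$ itself are trivially preserved, so the interesting situation is when $\{u,v\}$ is a shortcut edge of $\torso(G, C \cup X)$ arising from a connected component~$K$ of $G - (C \cup X)$ that is $G$\nobreakdash-adjacent to both $u$ and~$v$. The observation I would exploit is that $K$ remains connected in the larger graph $G - C$ and hence lies in a single connected component~$K' \supseteq K$ of $G - C$, so both~$u$ and~$v$ are $G$\nobreakdash-adjacent to a common component of $G - C$ and $\{u,v\}$ is a shortcut edge of $\torso(G, C)$ as well. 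As a sanity check and alternative derivation, the Corollary can also be read off \autoref{lem:torso-union-tw} by decomposing $C \cup X$ as~$C$ together with the singletons $\{x\}$ for $x \in X$, since $\torso(G, \{x\})$ is always a single isolated vertex and hence has treewidth~$0$.
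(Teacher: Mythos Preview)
Your main argument is correct: adding $X$ to every bag of an optimal tree decomposition of $\torso(G,C)$ and checking that every edge of $\torso(G,C\cup X)$ with both endpoints in $C$ is already an edge of $\torso(G,C)$ is exactly the right idea, and your justification via the inclusion of connected components of $G-(C\cup X)$ in components of $G-C$ is sound.

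The paper does not prove the statement itself; it is quoted from \citet{MOR13}, where it is obtained as a corollary of the union lemma (\autoref{lem:torso-union-tw} here). The intended instantiation is with \emph{two} sets, $W_1=C$ and $W_2=X$: then $\torso(G,X)$ has at most $|X|$ vertices and hence treewidth at most $|X|-1$, giving $\tw(\torso(G,C\cup X))\le 1+\tw(\torso(G,C))+(|X|-1)=\tw(\torso(G,C))+|X|$. Your direct bag-augmentation proof is more elementary and self-contained than this route, since it does not rely on the union lemma at all.

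A word of caution about your alternative derivation via singletons: with $r=|X|+1$ sets, the bound coming out of \autoref{lem:torso-union-tw} \emph{as stated in this paper} would be $1+\tw(\torso(G,C))$, which is too strong to be true in general (take $G=K_{|X|+1}$, $C$ a single vertex, $X$ the rest). The additive constant in the general union lemma should grow with~$r$ (the correct form is $\tw(\torso(G,W))+1\le\sum_i(\tw(\torso(G,W_i))+1)$); the ``$1+$'' quoted here is the $r=2$ case. So your singleton decomposition works once the lemma is stated in its general form, but the two-set instantiation is the robust way to cite it from this paper.
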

Finally we put all the above results together in order to prove \autoref{thm:treewidth-reduction-gen}. We have to only slightly adapt the proof of the treewidth-reduction theorem by~\citet{MOR13}.

\begin{proof}[Proof of \autoref{thm:treewidth-reduction-gen}]
  We first use \autoref{lem:separatorset} to compute a set~$C_{s, t}$ for each 
  pair~$s, t \in T$ such that~$C_{s, t}$ contains each inclusion-wise minimal
  \cvsep{s}{t} of size at most~$k$. Let~$C'$ be the union of all the~$O(|T|^2)$
  sets~$C_{s, t}$. Using the upper bound on the treewidth of
  \autoref{lem:separatorset} and combining it with the bound of
  \autoref{lem:torso-union-tw}, we obtain that~$\torso(G, C')$ has treewidth
  bounded by some function~$g(k, |T|)$ depending only on~$k$ and~$|T|$. Hence, by
  \autoref{cor:torso-add-tw} also~$\torso(G, C' \cup T)$ has treewidth bounded
  by such a function. Using \autoref{pro:wtorso-tw} we have a similar
 bound on the graph~$G^*$ in~$(G^*, \phi) = \wtorso(G, C' \cup T)$. We claim 
that~$(G^*, \phi)$ is a~$(k, T)$-trimmer of~$G$; by the above, the treewidth 
of~$G^*$ is bounded and~$(G^*, \phi)$ can be computed within a time bound as 
claimed by the theorem and it only remains to prove that the two trimmer properties hold. 
For trimmer property~\enuref{enu:trp1}, observe that it follows directly 
from~\autoref{pro:wtorso-sep}. Trimmer property~\enuref{enu:trp2} is also not 
hard to obtain. Consider an inclusion-wise minimal \cvsep{s}{t}~$S$ for~$G$ with $s,t \in T$. 
Clearly, since~$S \subseteq C' \cup T$ and~$\phi$ is one-to-one on~$C' \cup T$ 
we have~$\phi(S) = S$ and by \autoref{pro:wtorso-sep}~$\phi(S)$ is 
a~\cvsep{\phi(s)}{\phi(t)} for~$G^*$. Since there is an \pth{s}{t}~$P$ in~$G - 
S'$ for every~$S' \subsetneq S$ and~$P$ is preserved by contracting edges 
contained in~$V \setminus (C' \cup T)$ (which is disjoint to~$S$), there is also 
a~\pth{\phi(s)}{\phi(t)} in~$G^* - \phi(S')$. Hence, since~$\phi$ is one-to-one 
on~$C' \cup T$ there is a~\pth{\phi(s)}{\phi(t)} in~$G^* - S''$ for any~$S'' 
\subsetneq \phi(S)$. Thus, trimmer property~\enuref{enu:trp2} holds.
\end{proof}

\begin{remark}\label{rem:upper-lower-bound-treewidth}
 The concrete function of~$g'$ such that the treewidth of~$G^*$ is at 
most~$g'(k, |T|)$ depends mainly on the function~$g(\ell, e)$ from 
\autoref{lem:separatorset}. As \citet{MOR13} note, $g = 2^{O(e\ell)}$ and 
they also show that hypercubes yield a lower bound on~$g$ which is exponential 
in~$\ell$ and in~$\sqrt{e}$ (see 
\cite[Remark~2.14]{MOR13}): they note that there are choices of 
vertices~$s$ and~$t$ in an $n$-dimensional hypercube graph such that each 
remaining vertex is contained in an inclusion-wise minimal \cvsep{s}{t} of size 
at most~$n(n-1)$. Hence, taking the torso with respect to the vertex set 
containing $s$, $t$, and their inclusion-wise minimal separators of size at 
most~$n(n-1)$ yields an unchanged hypercube graph. Since hypercube graphs have treewidth 
$\Omega(2^n/\sqrt{n})$~\cite{CK06}, this yields the lower bound on~$g$. The same 
is true for taking the corresponding annotated torso and, hence, there is also a 
corresponding lower bound on~$g'$.
\end{remark}
To prove \autoref{thm2} it now only remains to show \autoref{lem:vbp-weights-tw} which is done in the next section.

\subsubsection{An FPT Algorithm for \VBP{} w.r.t. Treewidth}\label{sec:twalg}

\newcommand{\sep}{\ensuremath{\textsl{Sep}}}

To prove \autoref{lem:vbp-weights-tw} we define a table that can be filled by dynamic programming over a tree-decomposition. Let us fix some more notation.

Let~$G$ be a graph with non-negative integer vertex weights and weight function~$\wf$.
Furthermore, let~$(T, \tau)$ be a tree decomposition for~$G$. For~$t \in V(T)$ denote by~$G_t$ the graph induced by the subtree of~$T$ rooted at~$t$; that is,~$G_t$ is the graph induced by the vertex set~$\bigcup_{t'} \tau(t')$, where the union is taken over all successors~$t'$ of $t$ in~$T$. For a partition~$P$ of some set, a pair of partitions~$P_1, P_2$ of the same set is called a~\emph{splitting} of~$P$ if $P$ is the finest common coarsening of $P_1$ and $P_2$. In other words, the transitive closure of the union of the equivalence relations corresponding to~$P_1$ and~$P_2$ yield the equivalence relation corresponding to~$P$. By~$P - v$ we denote the partition derived from~$P$ by removing $v$~from the part it is contained in.

\newcommand{\pas}{\ensuremath{\textsl{past}}}
\newcommand{\pres}{\ensuremath{\textsl{present}}}
\newcommand{\fut}{\ensuremath{\textsl{future}}}

We define the table~$\sep{}(t, S_t, P_A, P_B, c, \ell)$, where $t \in V(T)$, $S_t \subseteq \tau(t)$, $P_A \cup P_B$ is a partition of~$\tau(t) \setminus S_t$, and~$c, \ell$ are non-negative integers. The entry~$\sep{}(t, S_t, P_A, P_B, c, \ell)$ contains the minimum weight~$\wf(S)$ of an \cvsep{A}{B} $S$ of $G_t$ such that
\begin{compactenum}[i)]
\item $\wf(A) = \ell$,\label{enu:twsize}
\item $S \cap \tau(t) = S_t$,\label{enu:twsep}
\item $S$ is a $(|P_A \cup P_B| + c)$-component separator for~$G_t$,\label{enu:twcomp1}
\item taking the set of intersections of each connected component of~$G_t[A]$ with~$\tau(t)$ yields exactly~$P_A$ (after removing the empty set if present) and analogously for~$G_t[B]$ and~$P_B$.\label{enu:twcomp2}
\end{compactenum}
If no such separator exists, we let~$\sep{}(t, S_t, P_A, P_B, c, \ell) = \bot$. Note that, for non\nobreakdash-$\bot$ values of~$\sep$, the above conditions imply that no two vertices from different parts of~$P_A \cup P_B$ are adjacent in~$G_t$. In the following we will tacitly assume that this is always the case. This is no restriction, since it is easily checkable in $O(\tw^2)$ time.

\begin{lemma}\label{lem:vbp-weights-tw-special}
 Let $G$~be a graph with non-negative integer weights~$\wf$ on the vertices and let $\WF$~be the sum of all weights. Furthermore, let $(T, \tau)$~be a tree decomposition for~$G$ of width~$\tw$ with root~$t$. Then, we can compute all the values~$\sep(t, S_t, P_A, P_B, c, \ell)$ in~$O(\tw^{O(\tw)} \cdot c^2 \cdot \WF^2 \cdot n)$~time.
\end{lemma}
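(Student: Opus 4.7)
The plan is to run a standard bottom-up dynamic program on a nice tree decomposition. Using a well-known transformation, I would first convert $(T, \tau)$ into a nice tree decomposition of the same width~$\tw$ with $O(\tw \cdot n)$ nodes in polynomial time, so that every node is a leaf node (empty bag), an introduce node (one child, bag gains one vertex~$v$), a forget node (one child, bag loses one vertex~$v$), or a join node (two children with identical bags). At each node I fill in every table entry $\sep(t, S_t, P_A, P_B, c, \ell)$ by the recurrences described below; the value for the lemma is then read off the root, whose bag is empty in a nice decomposition.

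I would give the update rules next. At a leaf node only $\sep(t, \emptyset, \emptyset, \emptyset, 0, 0) = 0$ is finite. At an introduce node that introduces~$v$, I case on whether~$v$ is placed in~$S_t$ (paying $\wf(v)$), in a part $Q \in P_A$, or in a part $Q \in P_B$; if any neighbor of~$v$ in $\tau(t) \setminus S_t$ lies on the opposite side, the entry is invalid, otherwise~$Q$ at the parent corresponds to some partition of $Q \setminus \{v\}$ at the child in which every part contains a neighbor of~$v$ (with the singleton case~$Q = \{v\}$ handled separately), and~$\ell$ is decreased by~$\wf(v)$ at the child on the $A$-side. At a forget node forgetting~$v$, I take the minimum over three child cases: $v \in S$ at the child, $v$ belonging to some part that still has other elements at the parent, or $v$ forming a singleton part at the child that vanishes at the parent and thereby becomes a completed connected component, which bumps~$c$ by one. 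At a join node with children~$t_1, t_2$ sharing the bag, I require $S_{t_1} = S_{t_2} = S_t$ and that $(P_A^{(1)}, P_A^{(2)})$ is a splitting of~$P_A$ and $(P_B^{(1)}, P_B^{(2)})$ is a splitting of~$P_B$; the aggregates combine as $c = c_1 + c_2$ and $\ell = \ell_1 + \ell_2 - \wf(A \cap \tau(t))$, where $\wf(A \cap \tau(t))$ is read off from~$P_A$ and~$S_t$.

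The main obstacle is arguing correctness at the join node: one must verify that the splitting condition on the two child partitions captures exactly the way in which connected components of $G_{t_1} - S$ and $G_{t_2} - S$ merge through shared vertices in~$\tau(t)$, and that the bookkeeping $c_1 + c_2$ does not double-count; this follows because each component of $G_t - S$ is either disjoint from~$\tau(t)$, in which case it lives entirely in one of $G_{t_1} - S$ or $G_{t_2} - S$ and is counted in exactly one of $c_1, c_2$, or it intersects~$\tau(t)$ and corresponds to a unique part of~$P_A \cup P_B$.

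For the running time, at each node there are $\tw^{O(\tw)}$ choices of $(S_t, P_A, P_B)$ (encoding, for each bag vertex, whether it is in~$S$ and, if not, its part index in~$P_A$ or~$P_B$), $O(c)$ choices for the component counter, and $O(\WF)$ choices for~$\ell$. Each leaf, introduce, and forget entry can be computed in $\tw^{O(\tw)}$ time once the child values are in place, while each join entry needs $\tw^{O(\tw)} \cdot c \cdot \WF$ time by iterating over the splittings of $(P_A, P_B)$ together with the splits of~$c$ and~$\ell$. Multiplying by the $O(\tw \cdot n)$ nodes of the nice decomposition yields the claimed $\tw^{O(\tw)} \cdot c^2 \cdot \WF^2 \cdot n$ bound. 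Tracking a back-pointer with each minimum allows one to reconstruct the partition~$\{A, B, S\}$ itself without changing the asymptotic running time.
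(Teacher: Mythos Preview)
Your proposal is correct and follows essentially the same dynamic-programming approach as the paper: a nice tree decomposition with the same case analysis for leaf, introduce, forget, and join nodes, the same splitting condition at joins, and the same running-time accounting. The only cosmetic differences are that you take leaves to have empty bags (whereas the paper allows arbitrary leaf bags and enumerates their contents directly) and that your sign in the join bookkeeping $\ell = \ell_1 + \ell_2 - \wf(A \cap \tau(t))$ is the correct one, whereas the paper's displayed constraint $\ell_1 + \ell_2 = \ell - \wf(\bigcup P_A)$ appears to carry a typo.
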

\begin{proof}
  By \cite[Lemma 13.1.3]{Klo94} we may assume that~$(T, \tau)$ is a nice tree decomposition with at most $4n$~bags, that is
  \begin{compactitem}
  \item $T$ is a binary tree,
  \item if a node~$x$ in~$T$ has two children~$y,z$, then $\tau(x) = \tau(y) = \tau(z)$ (in this case~$x$ is
    called a \emph{join node}), and 
  \item if a node~$x$ in~$T$ has one child~$y$, then one of the following situations must hold
    \begin{itemize}
    \item $\tau(x) = \tau(y) \cup \{v\}$ for some~$v \in V(G)\setminus \tau(y)$ (in this case $x$~is called an~\emph{introduce
        node}), or
    \item $\tau(x) \setminus \{v\} = \tau(y)$ for some~$v \in V(G)\setminus \tau(x)$ (in this case $x$~is called a~\emph{forget node}).
    \end{itemize}
  \end{compactitem}

  We prove that for any node~$t$ of~$T$ the corresponding values for~$\sep$ can be computed in~$\tw^{O(\tw)} \cdot c^2 \cdot \WF^2$~time if the values for the children of~$t$ in~$T$ are already known or if~$t$ is a leaf. The result then follows by computing the values~$\sep$ in a bottom-up fashion on the at most~$4n$ nodes of~$T$.

 First if~$t$ is a leaf in~$T$, then~$G_t = G[\tau(t)]$ and it is trivial to 
obtain the values of~$\sep$: Simply check whether~$S_t$ is a separator adhering 
to~\enuref{enu:twsize} through~\enuref{enu:twcomp2} for all sets~$S_t 
\subseteq \tau(t)$, all bipartitions of~$\tau(t) \setminus 
S_t$ into~$W_A$ and~$W_B$, all pairs of 
partitions~$P_A, P_B$ of~$W_A$ and~$W_B$, and all values 
for~$\ell$. The check takes $O(\tw^2)$~time and there are at most $2^\tw$~sets~$S_t$, at most $2^\tw$~bipartitions of~$\tau(t) \setminus 
S_t$ into~$W_A$ and~$W_B$, at most $\tw^{O(\tw)}$~pairs of 
partitions~$P_A, P_B$ of~$W_A$ and~$W_B$, and at most $\WF$~values 
for~$\ell$. Filling 
the entries of~$\sep$ for leaves~$t$ can thus be done in $2^{\tw} \cdot 2^{\tw} 
\cdot \tw^{O(\tw)} \cdot \WF \cdot \tw^2 = \tw^{O(\tw)} \cdot \WF$~time.

  If~$t$ is a join node, consider its children~$t_1, t_2$ in~$T$. We claim that
  \begin{multline}\label{eq:join}
    \sep(t, S_t, P_A, P_B, c, \ell) =\\ \min (\sep(t_1, S_t, P_{A_1}, P_{B_1}, c_1, \ell_1) + \sep(t_2, S_t, P_{A_2}, P_{B_2}, c_2, \ell_2) - \wf(S_t)),
  \end{multline}
  where the minimum is taken over all~$0 \leq \ell_1, \ell_2 \leq \ell$ such that~$\ell_1 + \ell_2 = \ell - \wf(\bigcup P_A)$, over all~$0 \leq c_1, c_2 \leq c$ such that~$c_1 + c_2 = c$, and over all pairs of splittings~$P_{A_1}$,~$P_{A_2}$ and~$P_{B_1}$,~$P_{B_2}$ of the partitions~$P_A$ and~$P_B$, respectively. Let us prove the claim.

  ``$\geq$'': Let~$S$ be an \cvsep{A}{B} for~$G_t$ corresponding to the left-hand side $\sep(t, S_t, P_A, P_B, c, \ell)$ and let~$S_1, S_2$ be the intersections of~$S$ with the vertex sets of~$G_{t_1}, G_{t_2}$, respectively. Let us show that~$S_1, S_2$ both adhere to~\enuref{enu:twsize} through~\enuref{enu:twcomp2} for different but tightly related values of~$P_A$, $P_B$, $c$, and~$\ell$. Clearly, for~$\phi \in \{1, 2\}$ we have that~$S_\phi$ is an \cvsep{A_\phi}{B_\phi} in~$G_{t_\phi}$, where~$A_\phi, B_\phi$ are the intersections of~$A$ and~$B$ with the vertex set of~$G_{t_\phi}$. Furthermore, $S_\phi \cap \tau(t_\phi) = S_t$ that is \enuref{enu:twsep}~holds for~$S_\phi$. It is also clear that the set~$S_\phi$ is an \cvsep{A_\phi}{B_\phi} for~$G_{t_\phi}$ for some pair of partitions~$P_{A_\phi}, P_{B_\phi}$ such that \enuref{enu:twcomp2}~holds. Further, for~$\Gamma \in \{A, B\}$ consider the intersection graph~$F_\Gamma$ of the sets in~$P_{\Gamma_1} \cup P_{\Gamma_2}$. The sets in~$P_\Gamma$ correspond exactly to the connected components of~$F_{\Gamma}$. Hence, the pair~$P_{\Gamma_1}, P_{\Gamma_2}$ is a splitting of~$P_\Gamma$. Next, for~$\phi \in \{1, 2\}$ the set~$S_\phi$ is a $(|P_{A_\phi} \cup P_{B_\phi}| + c_\phi)$-component separator in~$G_{t_\phi}$ for some~$c_\phi$, i.e. \enuref{enu:twcomp1}~holds for this value. Since all connected components left by removing~$S$ from~$G_t$ that do not intersect~$\tau(t)$ are either contained in~$G_{t_1}$ or~$G_{t_2}$, we have~$c_1 + c_2 = c$. Finally, since~$A_1 \cap A_2$ is exactly the vertex set partitioned by each of~$P_{A_1}$, $P_{A_2}$, and~$P_A$, if we set~$\ell_\phi = |A_\phi|$ for~$\phi \in \{1, 2\}$, we have that~\enuref{enu:twsize} holds for $S_\phi$ and~$\ell_\phi$. Furthermore,~$\ell_1 + \ell_2 = \ell - \wf(P_A)$. This proves the ``$\geq$\nobreakdash-inequality part'' of \autoref{eq:join}.

  It is not hard to check that also two separators corresponding to the values on the right hand side of~\autoref{eq:join} give a separator corresponding to the left hand side. Hence~\autoref{eq:join} holds. As to computing~$\sep(t, S_t, P_A, P_B, c, \ell)$, observe that~$c_2$ is fixed once we fix~$c_1$ and analogously for~$\ell_1$ and~$\ell_2$. Thus, $\sep(t, S_t, P_A, P_B, c, \ell)$ can be computed by considering $c \cdot \WF$~times all pairs of splittings of~$P_A$ and~$P_B$. To iterate over all splittings, one can simply iterate over all partitions of the corresponding vertex sets and check whether the partitions induce splittings in $\tw^{O(1)}$~time. Hence the overall running time for computing~$\sep(t, S_t, P_A, P_B, c, \ell)$ is~$c \cdot \WF \cdot \tw^{O(\tw)}$ and computing all values~$\sep$ for a join node~$t$ can be done in $c^2 \cdot \WF^2 \cdot \tw^{O(\tw)}$~time.

  If~$t$ is an introduce node, consider the child~$t'$ of~$t$ and $\{v\} = 
\tau(t) \setminus \tau(t')$. Considering an \cvsep{A}{B} for~$G_t$ corresponding 
to~$\sep(t, S_t, P_A, P_B, c, \ell)$ and the separator it induces in~$G_{t'}$, 
the following is easy to observe.
  \begin{align}\label{eq:introduce}
    \sep(t, S_t, P_A, P_B, c, \ell) &=
    \begin{cases}
      \sep(t', S_t \setminus \{v\}, P_A, P_B, c, \ell) + \wf(v), & \text{ if } v \in S_t \\
      \min \sep(t', S_t, P_A', P_B, c, \ell - \wf(v)), & \text{ if } v \in \bigcup P_A \\
      \min \sep(t', S_t, P_A, P_B', c, \ell), & \text{ if } v \in \bigcup P_B \text{.}
    \end{cases}
  \end{align}
Here the minimum in the second case is taken over all sets $P_A'$ such that $P_A'$ and $\{N[v] \cap (\tau(t) \setminus S_t)\} \cup \{\{u\} \mid u \in \bigcup P_A \setminus N[v]\}$ form a splitting of~$P_A$. Similarly, the minimum in the third case is taken over all sets $P_B'$ such that $P_B'$ and $\{N[v] \cap (\tau(t) \setminus S_t)\} \cup \{\{u\} \mid u \in \bigcup P_B \setminus N[v]\}$ form a splitting of $P_B$.   
Note that in the second case~$v$ has no neighbors in any set of~$P_B$ and in the 
third case~$v$ has no neighbors in any set of~$P_A$. If this is not true then, 
technically \autoref{eq:introduce} may not hold. However, as explained above, we 
may ignore this situation since then, clearly, $\sep(t, S_t, P_A, P_B, c, \ell) 
= \bot$ and it is easily checkable in $O(\tw^2)$ time.
Hence, using \autoref{eq:introduce} we can compute~$\sep(t, S_t, P_A, P_B, c, 
\ell)$ in $c \cdot \WF \cdot \tw^{O(\tw)}$~time for all table entries at an 
introduce node~$t$.

If~$t$ is a forget node, consider its child~$t'$ and $\tau(t') = \tau(t) \cup 
\{v\}$. Consider an \cvsep{A}{B}~$S$ corresponding to $\sep(t, S_t, P_A, P_B, c, 
\ell)$. There are again three cases to consider. First, $v$ can be part of the 
desired minimum separator $S \supseteq S_t$, then~$\sep(t, S_t, P_A, P_B, c, 
\ell) = \sep(t, S_t \cup \{v\}, P_A, P_B, c, \ell)$. Second, $v$ can be in $A$, 
then either~$v$ is not connected to any vertex in~$\tau(t') \setminus S$ 
and~$\sep(t, S_t, P_A, P_B, c, \ell) = \sep(t, S_t, P_A \cup \{v\}, P_B, c - 1, 
\ell)$. Otherwise, $\sep(t, S_t, P_A, P_B, c, \ell) = \min \sep(t, S_t, P'_A, 
P_B, c, \ell)$, where the minimum is taken over all~$P'_A$ derived from~$P_A$ by 
adding~$v$ to a part. The third case~$v \in B$ is analogous to the second case. 
Hence, to compute~$\sep(t, S_t, P_A, P_B, c, \ell)$ we have to simply keep the 
minimum weight assumed in one of the cases.
It is easy to check that, using the above case-distinction, we can also compute all the table entries for forget nodes in~$c \cdot \WF \cdot \tw^{O(\tw)}$~time. This concludes the proof.
  \end{proof}
\autoref{lem:vbp-weights-tw} now follows as an easy corollary.
\begin{proof}[Proof of \autoref{lem:vbp-weights-tw}, Sketch]
  We first compute a nice tree-decomposition~$(T, \tau)$ of width~$O(\tw)$ for the input graph~$G$, which is possible in~$2^{O(\tw)} \cdot n$~time~\cite{BDDFLP13}. We then compute all the values~$\sep(t, S_t, P_A, P_B, c, \ell)$ where~$t$ is the root of~$T$ using \autoref{lem:vbp-weights-tw-special}. Using standard techniques, we retrace the minima in the corresponding dynamic program to find an \cvsep{A}{B}~$S$ and the corresponding sets~$A, B$ for all the values~$\sep(t, S_t, P_A, P_B, c, \ell)$. For every~$1 \leq \ell \leq s$ and for every of the $O(n)$~bags of~$(T, \tau)$ only a constant number of disjoint set union operations is needed to obtain the sets~$A, B$ and~$S$. This amounts to $O(n^3)$~time spent. Hence, deriving the sets does not increase the running time bound of \autoref{lem:vbp-weights-tw-special}.
\end{proof}

\section{Incompressibility of \BP{}}\label{sec:incomp}

 Problem kernelization is a powerful preprocessing tool in attacking NP-hard
problems~\cite{GN07,Bod09}. A \emph{reduction to a problem kernel} is an
algorithm that, given an instance~$I$ with parameter~$p$ of a parameterized
problem, in time polynomial in~$(|I|+p)$ outputs an instance $I'$ of the same
problem and a parameter~$p'$ such that
\begin{compactenum}[i)]
\item  $I$~is a yes-instance if and only if~$I'$ is a yes-instance,
\item $|I'|+p'\leq f(p)$, where $f$~is a function only depending on~$p$.
\end{compactenum}
The function~$f$ is called the \emph{size} of the problem kernel. It is
desirable to find problem kernels of size polynomial in the parameter~$p$.

 In this section, we show that, unless a reasonable complexity-theoretic assumption fails, \BP{} has no polynomial-size kernel with respect
to the cut size (the ``standard parameter'') and any parameter that is polynomial in the input size and does not increase when taking disjoint unions of graphs. Let us call such parameters \emph{union-oblivious}. %
Our result excludes
polynomial-size problem kernels for the parameters
treewidth, 
cliquewidth, or bandwidth, for example. %

\begin{theorem}\label{the:no-poly-kernel-tw-k}
   Unless coNP${}\subseteq{}$NP/poly, \BP{} does not admit polynomial-size kernels with respect to the desired cut size and any union-oblivious parameter. %
\end{theorem}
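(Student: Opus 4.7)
The plan is to establish incompressibility by an OR-cross-composition in the sense of Bodlaender, Jansen, and Kratsch: compose $t$ instances of NP-hard \BP{} into a single \BP{} instance whose cut-size parameter and every union-oblivious parameter $p$ remain polynomial in the largest input size, which by their framework rules out polynomial kernels under the assumption coNP${}\not\subseteq{}$NP/poly. The standard polynomial-equivalence-relation preprocessing lets me assume that the $t$ inputs $(G_1,k),\dots,(G_t,k)$ share the same even vertex count $2n$, share the same desired cut size $k$, and that $t$ is odd. Crucially, the composed instance will be obtained as a disjoint union of preprocessed copies of the $G_i$, so that union-obliviousness automatically yields $p(G^*)\leq \max_i p(G_i^+)=\operatorname{poly}(n,k)$ regardless of which union-oblivious parameter is considered.

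The core of the argument is a per-instance preprocessing that replaces each $G_i$ by an augmented graph $G_i^+$ with the following \emph{sharp-dichotomy property}: in any partition of $V(G_i^+)$ whose cost on its internal edges is at most $k$, the original vertex set $V(G_i)$ must either (a) lie entirely on one side (a \emph{trivial} placement, with zero cost contributed by the augmentation) or (b) be split into two exactly equal halves (a \emph{balanced} placement, with cost equal to the corresponding bisection cost of $G_i$). A natural attempt is to blow up each $v\in V(G_i)$ into a clique $C_v$ of size $L=k+2$, making any split of a single $C_v$ already exceed the budget, and to attach a balancer in the spirit of Construction~\ref{cons:edgetovertex} -- two paired cliques joined by a long path -- together with extra connections between the blow-up and the balancer that penalize every placement of the $C_v$'s into proportions other than $0$, $n$, or $2n$ on each side. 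Defining then $G^*:=G_1^+\sqcup G_2^+\sqcup\dots\sqcup G_t^+$ with the same target cut size $k$ yields an instance meeting the cross-composition parameter bound.

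The OR-equivalence then follows in two standard steps. For the forward direction, given a cost-$\leq k$ bisection of some $G_j$ I lift it to a corresponding bisection of $G_j^+$ and place $(t-1)/2$ of the remaining $G_i^+$, $i\neq j$, entirely on each side; the odd value of $t$ makes the sizes match exactly and the total cost is at most $k$. For the reverse direction, the sharp dichotomy forces every $G_i^+$ in any cost-$\leq k$ bisection of $G^*$ into state (a) or state (b), and a parity count on vertex contributions (again exploiting $t$ odd) forces at least one $G_i^+$ into state (b); since the total cost is at most $k$ with non-negative summands, at least one balanced-split $G_i^+$ has cost at most $k$, producing a bisection of the corresponding $G_i$ of cost at most $k$.

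The chief technical obstacle is the construction and analysis of the sharp-dichotomy gadget: the preprocessing must simultaneously (i) permit every balanced bisection of $V(G_i)$ at its original cost, (ii) strictly penalize every intermediate placement of $V(G_i)$ beyond the budget $k$ even in the presence of arbitrary original-graph edges, and (iii) keep every union-oblivious parameter of $G_i^+$ polynomial in the input size. Requirement~(iii) is immediate for gadgets built from cliques of size $\Theta(k)$ and paths, since such graphs have treewidth, cliquewidth, and bandwidth bounded in terms of $k$. Requirement~(ii) is delicate: one must rule out all cheap ``partial'' placements that could arise from clever distributions of the augmentation vertices or from exploitation of $G_i$'s edge structure, and this is where careful calibration of the balancer's clique sizes, path length, and cross-connections to the blow-up becomes crucial.
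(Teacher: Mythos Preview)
Your proposal has a genuine gap: the ``sharp-dichotomy gadget'' is precisely the heart of the matter, and you have not constructed it. Worse, there is a structural reason why the gadget you sketch cannot work with target cut size~$k$. Any augmentation built from cliques, paths, and cross-connections has an internal cut cost that, as a function of the number~$j$ of blown-up vertex-cliques placed on side~$A$, behaves like~$j(2n-j)$ up to lower-order terms---that is, it is \emph{maximized} at the balanced split~$j=n$ and \emph{minimized} near~$j\in\{0,1,2n-1,2n\}$. You need exactly the opposite shape: zero at~$j\in\{0,2n\}$, at most~$k$ at~$j=n$, and strictly larger than~$k$ at all intermediate~$j$. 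No arrangement of blow-ups, balancer cliques, and connecting paths yields this non-concave profile, and in particular the ``extra connections between the blow-up and the balancer'' you invoke would penalize the balanced placement at least as much as the lopsided ones. The parity argument in your reverse direction also relies on state~(a) meaning ``all of~$G_i^+$ on one side,'' but if the balancer path can be cut at cost~$1$ then state~(a) admits many unbalanced placements of~$G_i^+$, and the counting no longer forces any~$G_i$ into a genuine bisection.

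The paper circumvents this obstacle by composing from \textsc{Maximum Cut} rather than from \BP{}. Each input graph~$G_i$ is embedded into a weighted clique on~$2n$ vertices (edge weight~$W$ everywhere, reduced by~$1$ on the original edges), so that a bisection of the clique costs~$Wn^2$ minus the number of original edges crossing the split. Now the concavity of~$a\mapsto Wa(2n-a)$ works \emph{for} you: a simple exchange argument shows that any minimum-cost bisection of the disjoint union cuts exactly one component, and balance forces that cut to be a bisection; the~$-1$ discounts then encode the \textsc{Max Cut} objective. Weights are removed afterwards by replacing vertices with large cliques. The target cut size becomes~$k^*=Wn^2-k$, which is polynomial in~$n$ (perfectly fine for the cross-composition) rather than equal to the input~$k$ as you insist---your self-imposed restriction is both unnecessary and the source of the difficulty.
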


\noindent To prove \autoref{the:no-poly-kernel-tw-k}, we first show that a
version of \BP{} with integer edge weights does not have a polynomial-size
kernel, and then show how to remove the weights. To obtain that \EWBS{} does
not have a polynomial-size kernel, it is sufficient to show a cross
composition (cf.~\citet{BJK14}) from the NP-hard~\cite{GareyJ79} \MaxCu{}
problem to
\EWBS{}. \MaxCu{} is defined as follows.
\decprob{\MaxCu{}}{A graph~$G = (V, E)$ and an integer~$k$.}{Is there a
partition of~$V$ into sets~$A$ and~$B$ such that at least $k$~edges have one
endpoint in~$A$ and one in~$B$?}
\noindent Showing the cross composition amounts to the following.
We give a polynomial-time algorithm that transforms input instances
$(G_1,k_1),\dots,(G_t,k_t)$ of \MaxCu{} into one instance~$(G^*,k^*)$ of \EWBS{}
such that $(G^*,k^*)$ is a yes-instance if and only if one of the \MaxCu{}
instances is, and such that $k^*$ is polynomial in the size of the largest input instance.

\begin{construction}\label{crossco}
 The construction resembles the reduction given for the NP-hardness of \BP{} by 
\citet{GareyJS76}. To ease the presentation of the construction we assume the following without loss of 
generality.
\begin{compactenum}[i)]
 \item Each of the~$G_i$, $1\leq i\leq t$, has exactly $n$~vertices and 
$k_1=\dots=k_t=:k$. We may assume this because it implies a polynomial-time computable equivalence relation on the instances of \MaxCu{}, see \citet{BJK14}.
 \item It holds that $1\leq k\leq n^2$. Indeed, if $k=0$ then all instances are yes-instances, and if 
$k>n^2$ then all instances are no-instances. Hence, if not $1\leq k\leq n^2$, we can 
return a trivial yes-instance or no-instance of \EWBS{}.
 \item The number~$t$ of input instances is odd. Otherwise, we can add a no-instance to the list of input 
instances that consists of the edgeless graph on $n$ vertices.
\end{compactenum}
We create $G^*$ as follows. For each input graph~$G_i=(V_i,E_i)$,
$1\leq i\leq t$, add to~$G^*$ the vertices in~$V_i$ and a clique $V_i'$ with
$|V_i|$~vertices and edges of weight~$W:=n^2$ each. We make all vertices in~$V_i'$ adjacent to all vertices in~$V_i$ in~$G^*$ via an edge of weight~$W$. Now, for
each pair~$v,w\in V_i$, we add an edge~$\{v,w\}$ to~$G^*$ with weight~$W$ if
$\{v,w\}\notin E_i$ and with weight~$W-1$ if $\{v,w\}\in E_i$. We set
$k^*:=Wn^2-k$.
\end{construction}
\noindent Let us prove that \autoref{crossco} is the promised cross composition.

\begin{lemma}\label{bandwidth-croco}
\autoref{crossco} is a cross composition from \MaxCu{} to \EWBS{} with respect to the desired cut weight and any union-oblivious parameter. %
\end{lemma}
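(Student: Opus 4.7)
The plan is to verify the three requirements of a cross composition: polynomial running time, a polynomially bounded output parameter, and the yes/no equivalence $\bigvee_i (G_i,k)\in\MaxCu \Leftrightarrow (G^*,k^*)\in\EWBS$. The first two are immediate: \autoref{crossco} runs in polynomial time, the cut-weight parameter satisfies $k^* = Wn^2-k\leq n^4$, and since $G^*$ is a disjoint union of $t$ gadgets each on $2n$ vertices with $O(n^2)$ edges, every union-oblivious parameter of $G^*$ is bounded by its value on a single gadget, which is polynomial in~$n$.

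To set up the equivalence, I first derive an explicit cut-weight formula for any bisection $\{A,B\}$ of $G^*$. Writing $a_i = |V_i\cap A|$, $a_i' = |V_i'\cap A|$, $\alpha_i = a_i+a_i'$, and letting $c_i$ be the number of $E_i$-edges crossing the cut within $V_i$, a case analysis of the three edge classes within gadget~$i$ (the $V_i'$-clique, the $V_i\times V_i'$-bipartite part, and the $V_i$-clique whose $E_i$-edges carry weight $W-1$) gives
\[
\text{cut weight}(A,B)\ =\ W\sum_{i=1}^t \alpha_i(2n-\alpha_i)\ -\ \sum_{i=1}^t c_i.
\]
The forward direction ``$\Rightarrow$'' is then immediate: given some $G_j$ with max cut $\geq k$, split $V_j$ along such a cut, fill $V_j'$ so that $\alpha_j = n$, and place $(t-1)/2$ of the remaining gadgets entirely in $A$ and $(t-1)/2$ entirely in $B$ (using that $t$ is odd). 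This yields a bisection of cut weight $Wn^2-c_j\leq Wn^2-k = k^*$.

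The converse ``$\Leftarrow$'' is the heart of the proof. Setting $M := \max_i \text{MaxCut}(G_i)$, I aim to show that every bisection has cut weight $\geq Wn^2-M$, so that cut weight $\leq Wn^2-k$ forces $M\geq k$ and hence at least one $G_i$ is a yes-instance. From $\sum_i\alpha_i = tn$, the concavity of $x\mapsto x(2n-x)$, and the oddness of $t$ (which forbids all $\alpha_i\in\{0,2n\}$), one already obtains $\sum_i\alpha_i(2n-\alpha_i)\geq n^2$, with equality iff exactly one $\alpha_j = n$ and the rest lie in $\{0,2n\}$. For richer configurations, with $s := |\{i : \alpha_i\notin\{0,2n\}\}|$, I will sharpen this to $\sum_i\alpha_i(2n-\alpha_i)\geq n^2+(s-1)(2n-s)$ for $s\leq 2n-1$ by an extremal argument pushing the touched $\alpha_i$'s to the boundary of $[1,2n-1]$ subject to their sum being an odd multiple of~$n$; for $s\geq 2n$ the cruder bound $\sum_i\alpha_i(2n-\alpha_i)\geq s(2n-1)$ already suffices. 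Since $c_i>0$ requires $a_i\in[1,n-1]$, which forces $\alpha_i\in[1,2n-1]$, one has $\sum_i c_i\leq sM$. Combining these and using $W = n^2 \geq 2M$ yields $W(s-1)(2n-s)\geq (s-1)M$ for $2\leq s\leq 2n-1$, whence cut weight~$\geq Wn^2-M$; the remaining case $s\geq 2n$ is even easier.

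The main technical obstacle will be the extremal minimization of $\sum_i\alpha_i(2n-\alpha_i)$ as a function of $s$: identifying the minimizing configuration $(1,\ldots,1,n-s+1)$ (for $s\leq n$) with the untouched coordinates placed at $0$ or $2n$, and verifying optimality across all possible values of $\ell := (\sum_{i\in S}\alpha_i)/n$, which is constrained to be an odd integer by the oddness of~$t$. Once this extremal lemma is established, everything else reduces to routine arithmetic.
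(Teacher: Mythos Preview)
Your proposal is correct, but your route through the converse direction differs substantially from the paper's.

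Both arguments start from essentially the same explicit cut-weight identity
\[
\text{cut weight}(A,B)=W\sum_i \alpha_i(2n-\alpha_i)-\sum_i c_i,
\]
and the forward direction is the same. For the converse, however, the paper avoids your extremal lemma entirely: instead of lower-bounding $\sum_i\alpha_i(2n-\alpha_i)$ as a function of the number~$s$ of touched gadgets, it uses a short \emph{exchange argument}. If two gadgets $i\neq j$ are touched with $a_i+a_j\leq 2n$, one compares the current bisection to the one obtained by moving all $a_i+a_j$ of the $A$-vertices into a single gadget; the algebraic difference is at least $2Wa_ia_j-2(k-1)>0$ because $W=n^2\geq k$ and $a_i,a_j\geq 1$. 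Hence the minimum-cost bisection touches exactly one gadget, forcing $\alpha_j=n$ by parity, and the bound $Wn^2-(k-1)>k^*$ follows in one line.

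Your extremal lemma $\sum_i\alpha_i(2n-\alpha_i)\geq n^2+(s-1)(2n-s)$ is in fact true for $1\leq s\leq 2n-1$ (the key point being that after the change of variables $\gamma_i=n-\alpha_i$, any extreme point has $\gamma=2m'n+\delta$ with $\delta\in[-(s-1),s-1]$, and the constraint $|\gamma|\leq n-1$ together with $s\leq 2n-1$ forces $|m'|\leq 1$ and then $|\gamma|\leq s-1$). So your argument closes, but establishing this lemma rigorously across all odd values of~$\ell$ is noticeably more work than the paper's two-line exchange step. What you gain is an explicit quantitative lower bound on the cut weight for every~$s$, not just for the optimum; what the paper gains is brevity.
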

\begin{proof} 
  First, it is clear that the desired cut weight in an instance created by \autoref{crossco} is bounded by a polynomial in~$n$ because~$W = n^2$. Furthermore, let us note that any union-oblivious parameter as above is polynomial in~$n$. This is true because the constructed 
output graph~$G^*$ consists of connected components, each having at most 
$2n$~vertices. (Recall that union-oblivious parameters are polynomial in the input size and do not increase when taking the disjoint union of graphs.)

  It is also clear that \autoref{crossco} can be carried out in polynomial time. It thus remains to show that the instance 
$(G^*,k^*)$ output by \autoref{crossco} is a yes-instance for
\EWBS{} if and only if there is an~$i\in\{1,\dots,t\}$ such that the input instance~$(G_i,k)$ is a
yes-instance for \MaxCu{}.

($\Leftarrow$)
Without loss of generality, let $(G_1,k)$ be a yes-instance for \MaxCu{}. Then,
$V_1=A\uplus B$ such that there are at least $k$~edges with one
vertex in~$A$ and the other in~$B$.\footnote{Here, $\uplus$ denotes the disjoint union of sets.} We show how to construct a solution for
\EWBS{} in~$G^*$ by partitioning~$G^*$ into two vertex sets~$A'$ and~$B'$, where
we choose $A'$ and $B'$ as follows.
 \begin{compactenum}[i)]
 \item $A'$ contains $V_1\cap A$, arbitrary $|B|$~vertices of~$V_1'$, and
$\bigcup_{i=2}^{\lceil t/2\rceil} V_i\cup V_i'$;
 \item $B'$ contains $V_1\cap B$, the $|A|$~vertices of~$V_1'\setminus
A'$, and $\bigcup_{i=\lceil t/2\rceil+1}^{t} V_i\cup V_i'$.
 \end{compactenum}
Obviously, $|A'|=|B'|$ since $t$~is odd. %
We analyze the total weight of edges cut by the partition into~$A'$ and~$B'$. Only
edges between vertices in~$V_1\uplus V_1'$ are cut. The graph induced by
$V_1\uplus V_1'$ is a clique and each of $A'$ and $B'$ contains exactly $n$
vertices of this clique. Since for~$k$ out of the~$n^2$ cut edges we pay the
cheaper weight of~$W-1$ instead of~$W$, the total weight of the cut edges is
$Wn^2-k=k^*$. It follows that $(G^*,k^*)$~is a yes-instance for \EWBS{}.

($\Rightarrow$)
Assume that, for all~$i\in\{1,\dots,t\}$, $(G_i,k)$~is a no-instance for
\MaxCu{}. We analyze the number of edges cut by a bisection of~$G^*$
into~$A\uplus B$. Consider a connected component induced by $V_i\uplus V_i'$ of
$G^*$. Let $a_i=|(V_i\uplus V_i')\cap A|$ be the number of vertices cut from
this component by the bisection. Since $(G_i,k)$~is a no-instance, the maximum
cut in each~$G_i$ cuts at most~$k-1$ edges. Hence the weight of the edges
between $(V_i\uplus V_i')\cap A$ and $(V_i\uplus V_i')\cap B$ is at
least~$Wa_i(2n-a_i)-(k-1)$: for at most $k-1$~edges we pay the cheaper cost
of~$W-1$ instead of~$W$. 

Since $t$~is odd, at least one component of $G^*$ has to be cut.
First, assume that only one of the components of $G^*$ is cut
by the bisection. In this case $a_i=n$ for the corresponding value $i$ since the 
partition into~$A$ and~$B$ has to be balanced, and the cut size of the bisection 
is at least $Wn^2-k+1>k^*$. Hence
$(G^*,k^*)$ is a no-instance.

We now show that there is indeed only one component of $G^*$ that is cut. Assume 
that this is not the case. Then, there are non-zero values $a_i,a_j$ for some 
$i\neq j$. We may assume without loss of generality that $a_i+a_j\leq 2n$. 
Otherwise, we can define the $a_i$ values as the number of vertices cut out by 
$B$ instead of $A$. By the argument above, the total weight of edges cut in both 
components is at least 
$$Wa_i(2n-a_i)+Wa_j(2n-a_j)-2(k-1)=2nW(a_i+a_j)-W(a_i^2+a_j^2)-2(k-1).$$ Now, 
consider cutting all $a_i+a_j$ vertices from only one component instead, which 
is possible since $a_i+a_j\leq 2n$. The total weight of edges cut in the considered 
components would be at most 
$$W(a_i+a_j)(2n-(a_i+a_j))=2nW(a_i+a_j)-W(a_i^2+a_j^2)-2Wa_ia_j.$$ We assumed 
however that $k\leq n^2=W$, and $a_i,a_j\neq 0$. Hence the cut size would drop 
after changing the bisection, which contradicts its minimality.
\end{proof}
\noindent We are now ready to prove \autoref{the:no-poly-kernel-tw-k}.
\begin{proof}[Proof of \autoref{the:no-poly-kernel-tw-k}] It remains to show 
that the instance $(G^*,k^*)$ of \EWBS{} 
resulting from \autoref{crossco} can be converted to a \BP{} instance such that 
the considered parameters remain polynomial in~$n$.

  Let $(G^*,k^*)$~be the instance of \EWBS{} resulting from \autoref{crossco}.
We create an equivalent instance $(G',k^*)$ of \BP{} as follows.
\begin{compactenum}[i)]
  \item For each vertex~$v$ of~$G^*$, introduce a clique~$C_v$ with 
$W+k^*+2$~vertices
to~$G'$.
  \item For an edge~$\{v,w\}$ of~$G^*$ with weight $\omega$, add $\omega$ 
pairwise
disjoint edges from~$C_v$ to~$C_w$.
  \end{compactenum}
Now it is easy to see that $(G^*,k^*)$~is a yes-instance if and only if
$(G',k^*)$~is, since no bisection with cut size at most~$k^*$ can cut a 
clique~$C_v$
that was introduced for a vertex~$v$.

It is clear that the desired cut size is polynomial in~$n$. It remains to show $\phi(G')\in\poly(n)$ for any union-oblivious parameter~$\phi$. To this end, observe that $G'$ is the disjoint union of its connected 
components. Hence, $\phi(G')\leq\phi(C')$ for some connected component~$C'$ 
of~$G'$. By construction of~$G'$ from~$G^*$, have $|C'|\in\poly(n)$, since each 
connected component of~$G^*$ has $\poly(n)$~vertices. Hence, 
$\phi(G')\leq\phi(C')\in\poly(|C'|)\subseteq \poly(n)$.
\end{proof}

\section{\BP{} and the Cliquewidth-\boldmath$q$ Vertex Deletion Number}\label{dcc}

In this section we show that \BP{} is fixed-parameter tractable with respect to
the number of vertices that have to be removed from a graph to reduce its cliquewidth to some
constant~$q$. Thus, we generalize
many well-studied graph parameters like vertex cover number ($q=1$)~\cite{ChenKX10},
cluster vertex deletion number and cograph vertex deletion number
($q=2$)~\cite{CO00}, or feedback vertex set number~($q=3$)~\cite{KLL02} and treewidth-$t$ vertex deletion number~\cite{FLMS12}. Let us formally define cliquewidth. The definition is inspired by
\mbox{\citet{HOSG08}.}
\newcommand{\intro}{\bullet}
\newcommand{\join}{\eta}
\newcommand{\recol}{\rho}
\newcommand{\union}{\oplus}

Let $q$ be a positive integer. We call $(G,\lambda)$ a \emph{$q$-labeled graph}
if $G$ is a graph and $\lambda: V(G) \to\{1,2,\ldots,q\}$ is a mapping. The
number $\lambda(v)$ is called \emph{label} of a vertex $v$. We introduce the
following operations on labeled graphs.
\begin{compactenum}[i)]
\item For every $i$ in $\{1,\ldots,q\}$, we let $\intro_i$ denote the graph
with only one vertex that is labeled by $i$ (a constant operation).
\item For every pair of distinct $i,j\in\{1,2,\ldots,q\}$, we define a
unary operator~$\join_{i,j}$ such that $\join_{i,j}(G, \lambda) = (G',
\lambda)$, where $V(G') = V(G)$, and $E(G') = E(G) \cup \{(v,w) \mid v,w \in V,
\lambda(v) = i, \lambda(w) = j\}$. In other words, the operator adds all edges
between label-$i$ vertices and label-$j$ vertices.
\item For every pair of distinct $i,j\in\{1,2,\ldots,q\}$, we let
$\recol_{i \to j}$ be the unary operator such that $\recol_{i \to j}(G, \lambda)
= (G, \lambda')$, where $\lambda'(v) = j$ if $\lambda(v) = i$, and
$\lambda'(v)=\lambda(v)$ otherwise. The operator only changes the labels of vertices labeled~$i$ to~$j$.
\item Finally, $\union$ is a binary operation that makes the disjoint
union, while keeping the labels of the vertices  unchanged. Note explicitly that
the union is disjoint in the sense that $(G, \lambda) \union (G, \lambda)$ has
twice the number of vertices of $G$.
\end{compactenum}
A \emph{$q$-expression} is a well-formed expression $\varphi$ written with these
symbols. The $q$-labeled graph produced by performing these operations
therefore has a vertex for each occurrence of the constant symbol in~$\varphi$;
and this $q$-labeled graph (and any $q$-labeled graph isomorphic to it) is
called the \emph{value} $val(\varphi)$ of $\varphi$. If a $q$-expression~$\varphi$ has value~$(G, \lambda)$, we say that $\varphi$ is a
\emph{$q$-expression of}~$G$. The \emph{cliquewidth} of a graph~$G$, denoted by
$cwd(G)$, is the minimum $q$ such that there is a $q$-expression of $G$. We say that a join $\join_{i,j}$ is \emph{full} if there is no edge between
vertices of label $i$ and $j$ in the labeled graph on which the join is applied.

\begin{proposition}\label{obs:expr_size}\label{lem:expres_full}
For any $q$-expression for an $n$-vertex graph there is an equivalent one which is at most as long as $\varphi$, contains $O(q^2n)$~symbols, and for which every join is full.
\end{proposition}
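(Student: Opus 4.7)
The plan is to apply a sequence of local rewrites to $\varphi$ that never increase its length, yielding an equivalent $q$-expression in which every join is full and whose total length is $O(q^2 n)$. Since each rewrite preserves the value of the expression, equivalence is automatic; the content of the proof lies in defining the rewrites and bounding the resulting length.

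First I would eliminate the obvious redundancies: any $\recol_{i\to j}$ whose source label class $i$ is empty at that point in the evaluation; any $\join_{i,j}$ whose label class $i$ or $j$ is empty; any $\join_{i,j}$ all of whose cross-edges already exist in the current graph; and any consecutive pair of relabels that can be composed (for instance, $\recol_{j\to k}$ immediately following $\recol_{i\to j}$, when no vertex of label $j$ is newly introduced in between, is equivalent to $\recol_{i\to k}$). Each such rewrite strictly decreases the length, so iteration terminates with a shorter equivalent expression.

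To enforce fullness, I would use the observation that a non-full, non-redundant $\join_{i,j}$ can only arise when some but not all cross-edges between the current label-$i$ and label-$j$ vertices are already present, and that any pre-existing such edge was introduced by an earlier $\join_{i',j'}$ whose endpoints were subsequently relabeled to carry labels $i$ and $j$. I would commute relabels past joins (possibly splitting one join into two, one acting on previously-relabeled vertices and one on freshly introduced ones) so that every remaining join acts on vertex groups between which no edges are present. With a potential function such as the lexicographic pair (length of the expression, number of non-full joins), this process terminates without increasing length. For the $O(q^2 n)$ symbol bound, observe that the expression tree has $n$ leaves (for the $\intro_i$ constants) and $n-1$ binary $\oplus$-nodes, giving $O(n)$ structural positions. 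Between any two consecutive structural positions the unary operations form a chain whose normalized length is $O(q^2)$: at most $q(q-1)$ essentially-distinct full joins, interleaved with $O(q^2)$ relabels before further relabels would trigger one of the first-step rewrites. Summing gives the required bound.

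The main obstacle is the fullness-enforcing rewrite: verifying carefully that commuting relabels and joins does not introduce new non-full joins elsewhere, and that the chosen potential function strictly decreases under every rewrite while the total length stays non-increasing. This requires a delicate case analysis that tracks how vertex labels and edges evolve, especially across $\oplus$-nodes where label classes from the two subexpressions merge.
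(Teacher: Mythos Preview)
Your treatment of the $O(q^2 n)$ bound is fine and matches the paper's one-line sketch: $n$ constants, $n-1$ unions, and each maximal unary chain normalizes to $O(q^2)$ operations.

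The gap is in the fullness step. You propose to commute relabels past joins, ``possibly splitting one join into two.'' But splitting a join strictly \emph{increases} the length of the expression, so your lexicographic potential $(\text{length},\ \text{number of non-full joins})$ goes up, not down, and the claim that length is non-increasing is violated outright. With the components swapped you would need every split to strictly reduce the number of non-full joins, which your sketch does not establish and which is far from clear, since the two new joins may themselves be non-full.

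There is a much simpler rewrite that avoids this and is the one underlying the Courcelle--Olariu result the paper cites. Suppose $\join_{i,j}$ is non-full: some edge $\{u,v\}$ with current labels $i,j$ already exists. That edge was created by an earlier $\join_{i',j'}$ in a subexpression, when $u$ had label $i'$ and $v$ had label $j'$. Because every $\recol$ acts uniformly on an entire label class, \emph{every} vertex that had label $i'$ at that earlier moment now carries label $i$, and every then-$j'$-vertex now carries label $j$. Hence every edge produced by the earlier $\join_{i',j'}$ lies between current label-$i$ and label-$j$ vertices and is recreated by the later $\join_{i,j}$; since no intermediate operation depends on which edges are present, the earlier $\join_{i',j'}$ can be deleted outright. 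This strictly shortens the expression, so iteration terminates with an equivalent, no-longer expression in which every join is full. Note that your first-step rewrite (deleting a join all of whose edges already exist at the time it is applied) does not catch this: the earlier join is doing real work when applied; it becomes redundant only because a \emph{later} join subsumes it.
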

\begin{proof}[Proof sketch]
For the first statement, observe that there are $n$ symbols~$\intro_i$ and $n-1$
unions. Between any two
unions, obviously the number of $\recol$'s and $\join$'s can be reduced to
$O(q^2)$.

The second statement follows from inspecting the proof of Corollary 2.17 by \citet{CO00}, which states that for every $q$-expression, there is an equivalent ``irredundant'' one, meaning that every join is full.
\end{proof}

\newcommand{\cut}{{\textsl{Cut}_{A_0, B_0}}}%
\newcommand{\stdcut}{\cut(\varphi, \vec{a}, \vec{b})}
In the following, we show how to compute an optimal bisection using the $q$-expression of a given graph~$G$. This will naturally also solve the decision problem \BP{}.
Let~$D \subseteq V(G)$ and~$\varphi$ be a $q$-expression for~$G - D$, i.e.\ $val(\varphi)= (G - D, \lambda)$. Let~$A_0, B_0$
be a partition of~$D$. For now, we assume that there are no edges between~$A_0$ and~$B_0$. Let~$n_i(\varphi)$ for~$i \in \{1, \ldots, q\}$ be the number of vertices of~$G
- D$ with label~$i$.
For every pair of vectors~$\vec{a} = (a_1, \ldots, a_q)$, $\vec{b} = (b_1, \ldots, b_q) \in \mathbb{N}^q$ 
with~$a_i+b_i = n_i(\varphi)$, let us denote by~$\stdcut$ the minimum number of edges between different parts of a
partition~$(A,B)$ of~$V(G)$ which satisfies the following conditions.
\begin{compactenum}[i)]
\item $A_0 \subseteq A$, $B_0 \subseteq B$, and\label{enu:cwresp}
\item the number of vertices in~$A \setminus D$ and~$B \setminus D$ of label~$i$ are~$a_i$ and~$b_i$, respectively.\label{enu:cwlab}
\end{compactenum}
In the following we use~$x_i$ to denote the~$i$'th entry of a vector~$\vec{x}$.

\begin{lemma}\label{lem:XP_wrt_cw}
For given~$G$, $A_0$, $B_0$ and~$\varphi$ in time~$O(n^{2q}\cdot q \cdot |\varphi|)$ we can compute all the numbers~$\stdcut$.
\end{lemma}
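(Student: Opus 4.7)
The plan is to run a bottom-up dynamic program along the parse tree of~$\varphi$. Before starting, I apply \autoref{lem:expres_full} to ensure that every join occurring in~$\varphi$ is full and that $|\varphi| = O(q^2 n)$. For each subexpression~$\psi$ with $val(\psi) = (H_\psi, \lambda_\psi)$, so that $H_\psi$ is a vertex-induced subgraph of $G - D$, I maintain a table $T_\psi[\vec{a}]$ indexed by vectors $\vec{a} \in \mathbb{N}^q$ with $0 \leq a_i \leq n_i(\psi)$; setting $b_i := n_i(\psi) - a_i$, the entry $T_\psi[\vec{a}]$ stores the minimum, over all partitions $(A_\psi, B_\psi)$ of $V(H_\psi)$ with $a_i$ label-$i$ vertices on the $A$-side and $b_i$ on the $B$-side, of
\[ |E_{H_\psi}(A_\psi, B_\psi)| + |E_G(A_\psi, B_0)| + |E_G(B_\psi, A_0)|. \]
In this way the $D$-to-$(V\setminus D)$ part of the cut is baked into the table, and (using the assumption that there are no $A_0$-$B_0$ edges) the desired value is read off at the root as $\stdcut = T_\varphi[\vec{a}]$.

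Next I set up the four recurrences. At an $\intro_i$ node that introduces a specific vertex~$v$ of $G - D$, the two feasible entries are seeded with $|N_G(v) \cap B_0|$ and $|N_G(v) \cap A_0|$, paying the $D$-contribution up front while the identity of~$v$ is still available. At $\psi = \psi_1 \union \psi_2$ the union creates no edges, so $T_\psi[\vec{a}] = \min_{\vec{a}_1 + \vec{a}_2 = \vec{a}} (T_{\psi_1}[\vec{a}_1] + T_{\psi_2}[\vec{a}_2])$. At a full join $\psi = \join_{i,j}(\psi')$ every pair of label-$i$ and label-$j$ vertices becomes adjacent and contributes exactly $a_i b_j + a_j b_i$ cross edges, so $T_\psi[\vec{a}] = T_{\psi'}[\vec{a}] + a_i b_j + a_j b_i$. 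At a recolor $\psi = \recol_{i \to j}(\psi')$ I merge the $i$- and $j$-slots, setting $T_\psi[\vec{a}]$ (with $a_i = 0$) to the minimum of $T_{\psi'}[\vec{c}]$ over all vectors $\vec{c}$ with $c_i + c_j = a_j$ and $c_k = a_k$ for $k \notin \{i,j\}$, and analogously splitting $b_j$. Correctness follows by a routine induction on the parse tree.

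For the running time, each table has at most $\prod_i(n_i(\psi)+1) = O(n^q)$ entries. Union nodes are the bottleneck: each output is computed by a convolution over $O(n^q)$ splits, so one union costs $O(n^{2q})$, and the at most $n-1$ unions together cost $O(n^{2q+1})$. Each of the $O(q^2 n)$ intro, join and recolor symbols is handled in $O(n^{q+1})$ time or better, which is dominated. The total is therefore $O(n^{2q+1} + q^2 n^{q+2})$, comfortably fitting into the claimed $O(n^{2q} \cdot q \cdot |\varphi|) = O(q^3 n^{2q+1})$ bound.

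The main obstacle I anticipate is accounting for edges between $D$ and $V\setminus D$ given that the DP only remembers label counts rather than vertex identities; the resolution is to charge these edges at the $\intro$ step, the unique point in the parse tree where a concrete vertex is accessible. A secondary subtlety is the recolor step, because merging two labels could in principle corrupt the $a_i b_j + a_j b_i$ accounting of a later join on one of them; this is where the ``every join is full'' invariant supplied by \autoref{lem:expres_full} becomes essential, since it guarantees that the join formula is exact at the moment it is applied.
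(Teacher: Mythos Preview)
Your proposal is correct and follows essentially the same approach as the paper: a bottom-up dynamic program on the parse tree after making all joins full, with identical recurrences for $\intro$, $\join$, $\recol$, and $\union$, and the same $O(n^{2q})$-per-union bottleneck in the running-time analysis. The one place where you are slightly more explicit than the paper is the $\intro$ case, where you spell out that the $D$-to-$(V\setminus D)$ edges are charged as $|N_G(v)\cap B_0|$ or $|N_G(v)\cap A_0|$; the paper simply remarks that this is ``easy to compute.''
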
%

\begin{proof}
We prove the lemma by induction on the length of the $q$-expression. By
\autoref{lem:expres_full} we can assume that every join in $\varphi$ is full. 
If~$\varphi=\intro_i$, then we have~$n_i(\varphi)=1$ and~$n_j(\varphi)=0$ for
every~$j \neq i$.
Hence, in each pair of $q$-dimensional vectors~$\vec{a}, \vec{b}$ of~$\stdcut$ there is either~$a_i=1$ or~$b_i=1$ and the
other numbers are zero. In this case, there is exactly one partition fulfilling the
conditions~\enuref{enu:cwresp} and~\enuref{enu:cwlab}, namely the one which puts the only vertex of~$G - D$ to set~$A$ or~$B$ as required. It is easy to compute the number of
edges between the parts in this partition.
 
Now, suppose~$\varphi = \join_{i,j}(\varphi')$. Since~$\varphi'$ is shorter than~$\varphi$, by the induction hypothesis we can %
compute all the numbers~$\cut(\varphi', \vec{a}, \vec{b})$ and store them in a table. Note that~$val(\varphi')$ differs from~$G - D$ only in that~$G - D$ has an edge between every
vertex of label~$i$ and every vertex of label~$j$, while~$val(\varphi')$ has no such edges (as the join is full). Therefore, every partition~$(A,B)$ of~$G - D$ fulfilling the conditions~\enuref{enu:cwresp} and~\enuref{enu:cwlab}, is also a partition for~$val(\varphi')$ fulfilling these conditions, but in~$G - D$ there are
exactly~$a_i \cdot b_j+ a_j \cdot b_i$ more edges between the parts. Hence,
we can output $\stdcut =
\cut(\varphi', \vec{a}, \vec{b})+a_i \cdot b_j+ a_j \cdot
b_i$.
 
 Next, let us assume that~$\varphi = \recol_{i \to j}(\varphi')$, and the values of~$\cut(\varphi', \vec{a}', \vec{b}')$ are already computed and stored in a table. Note that in~$G - D$ there are no vertices
of label~$i$, so we have~$0=n_i(\varphi)=a_i=b_i$. On the other hand, some of the
vertices which have label~$j$ in~$G - D$ had label~$i$ in~$val(\varphi')$. 
A minimal partition for~$G - D$, $\vec{a}$, and~$\vec{b}$
which satisfies the conditions~\enuref{enu:cwresp} and~\enuref{enu:cwlab} is also a partition for~$val(\varphi')$ which satisfies the conditions~\enuref{enu:cwresp} and~\enuref{enu:cwlab} for some~$\vec{a}', \vec{b}'$ and a corresponding distribution of~$a_j$
to~$a'_j$ and~$a'_i$ and of~$b_j$ to~$b'_j$ and~$b'_i$. Therefore~$\stdcut$ can be computed as~$\min \{ \cut(\varphi',
\vec{a}', \vec{b}')\}$, where the minimum is taken over all
pairs~$\vec{a}', \vec{b}'$ with~$a'_t = a_t$ and~$b'_t = b_t$ for
every~$t \in \{1, \ldots, q\} \setminus \{i,j\}$, $a_j = a'_{j} + a'_{i}$, $b_j
= b'_{j} + b'_{i}$, and~$a'_t + b'_t =n_t(\varphi')$ for~$t \in \{i,j\}$. As
every pair~$\vec{a}', \vec{b}'$ gives rise to exactly one~$\vec{a}, \vec{b}$, all the minima can be computed in one pass over
all~$\vec{a}', \vec{b}'$.

Finally, let~$\varphi = \varphi^1\union\varphi^2$ and let the values of~$\cut(\varphi^1, \vec{a}^1, \vec{b}^1)$ and $\cut(\varphi^2, \vec{a}^2, \vec{b}^2)$ be already
computed and stored in a table. A minimal partition for~$G - D$ and~$\vec{a}, \vec{b}$ satisfying the conditions~\enuref{enu:cwresp} and~\enuref{enu:cwlab}
also induces partitions for~$val(\varphi^1)$ and~$val(\varphi^2)$, which satisfy
the conditions~\enuref{enu:cwresp} and~\enuref{enu:cwlab} for some~$\vec{a}^1, \vec{b}^1$ and~$\vec{a}^2, \vec{b}^2$ and corresponding distributions of~$a_i$ to~$a^1_i$ and~$a^2_i$ and of~$b_i$ to~$b^1_i$ and~$b^2_i$. Moreover, there are no edges between~$val(\varphi^1)$
and~$val(\varphi^2)$.
 Thus $$\stdcut = \min \{\cut(\varphi^1, \vec{a}^1, \vec{b}^1)\ +\ \cut(\varphi^2, \vec{a}^2, \vec{b}^2)\}\text{,}$$ where the minimum is taken over all~$\vec{a}^1, \vec{b}^1$
and~$\vec{a}^2, \vec{b}^2$ where for every~$i \in \{1,
\ldots, q\}$, $a_i = a^1_i + a^2_i$, $b_i = b^1_i + b^2_i$, and $a^l_i+b^l_i
=n_i(\varphi^l)$ for $l\in\{1,2\}$.
As every pair of pairs~$\vec{a}^1, \vec{b}^1$ and $\vec{a}^2, \vec{b}^2$ gives rise to exactly one pair~$\vec{a}, \vec{b}$, all the minima can be computed in one pass over all
combinations of $\vec{a}^1, \vec{b}^1$ and $\vec{a}^2, \vec{b}^2$.

Concerning the running time, we again argue by induction to show that the
overall time is~$O(n^{2q}\cdot q \cdot |\varphi|)$. If~$\varphi=\intro_i$, then $|\varphi|=1$ and the computation of $\cut$ for the only two
possible pairs of $q$-dimensional vectors takes $O(m+n)\subseteq O(n^{2q}\cdot q)$~time. This
constitutes the induction basis. Otherwise, for any sub-expression $\varphi'$ of
a given expression~$\varphi$, the computation of the table for~$\varphi'$ takes
$O(n^{2q}\cdot q \cdot |\varphi'|)$~time by the induction hypothesis. Observe
that there are~$O(n^q)$ different pairs of $q$-dimensional vectors~$\vec{a}, \vec{b}$ with~$a_i+b_i = n_i(\varphi)$. If~$\varphi = \join_{i,j}(\varphi')$, then
the computation for each pair of vectors takes $O(q)$~time. For~$\varphi = \recol_{i
\to j}(\varphi')$, one pass through the table of~$\varphi'$ is obviously
accomplished in $O(n^{q})$~time, spending $O(1)$~time per entry. Since in both
cases~$|\varphi|=|\varphi'|+1$, this proves the time bound for~$\varphi$ for
these expressions. Finally, if~$\varphi = \varphi^1 \union \varphi^2$ then the
tables for~$\varphi^1$ and~$\varphi^2$ can be computed in $O(n^{2q}\cdot q \cdot
(|\varphi^1|+|\varphi^2|))$~time. Then we cycle over the entries of both tables
and for each combination we spend $O(q)$~time, so this can be accomplished in
$O(n^{2q}\cdot q)$~time. Since~$|\varphi|=|\varphi^1|+|\varphi^2|+1$, also in
this case the algorithm runs in $O(n^{2q}\cdot q \cdot |\varphi|)$~time.
\end{proof}
\noindent Note that \autoref{lem:XP_wrt_cw} yields an XP-algorithm for \BP{} with respect to~$q$ by simply setting~$A_0 = B_0 = \emptyset$. Furthermore, we can derive the following.
\begin{theorem}\label{cor:constant-cliquewidth-deletion-set}
Let~$G$ be a graph, $D \subseteq V(G)$ a vertex subset, and $\varphi$ a $q$-expression for $G
- D$. There is an $O(2^{|D|}\cdot n^{2q+1}q^3)$~time algorithm which
computes the optimal bisection of~$G$.
\end{theorem}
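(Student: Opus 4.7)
The plan is to enumerate all $2^{|D|}$ bipartitions $(A_0, B_0)$ of $D$; for each one I invoke \autoref{lem:XP_wrt_cw} on $(G, A_0, B_0, \varphi)$ and extract from its output the cheapest balanced bisection $(A,B)$ of $V(G)$ with $A_0 \subseteq A$ and $B_0 \subseteq B$. Taking the minimum over all bipartitions of $D$ yields the optimum, since every bisection of $G$ induces a unique bipartition of $D$ and is thus captured by exactly one iteration.

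Fix such an $(A_0, B_0)$ and let $e_{A_0,B_0}$ denote the number of edges of $G$ with one endpoint in $A_0$ and the other in $B_0$. Every bisection extending $(A_0, B_0)$ cuts these edges, so they contribute a fixed additive term. Since \autoref{lem:XP_wrt_cw} is stated under the assumption that there are no edges between $A_0$ and $B_0$, I would first ignore them, invoke the lemma (the removal leaves $G - D$ unchanged, so $\varphi$ remains a valid $q$-expression for it), and then add $e_{A_0,B_0}$ back at the end. The lemma returns $\stdcut$ for every pair $(\vec{a},\vec{b})$ with $a_i + b_i = n_i(\varphi)$; I then minimize over those pairs that also satisfy the balancedness constraint $|A_0| + \sum_{i=1}^{q} a_i \in \{\lfloor n/2 \rfloor, \lceil n/2 \rceil\}$.

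For the running time, I first apply \autoref{obs:expr_size} to preprocess $\varphi$ into an equivalent expression of length $O(q^2 n)$. Each invocation of \autoref{lem:XP_wrt_cw} then runs in time $O(n^{2q}\cdot q \cdot |\varphi|) = O(n^{2q+1} q^3)$; the filtering across the $O(n^q)$ vectors $(\vec{a}, \vec{b})$ and the one-time computation of $e_{A_0, B_0}$ fit comfortably within this bound. Multiplying by the $2^{|D|}$ outer iterations yields the claimed $O(2^{|D|}\cdot n^{2q+1} q^3)$.

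The point that deserves care, more than being a genuine obstacle, is that \autoref{lem:XP_wrt_cw} must correctly account for the edges between $D$ and $V(G)\setminus D$, even though $\varphi$ describes only $G - D$. This is handled at the $\intro_i$ base case: when the single label-$i$ vertex $v$ is introduced and placed in $A$, the base case contributes $|N(v)\cap B_0|$, and analogously $|N(v)\cap A_0|$ when placed in $B$. These contributions propagate additively through the joins, recolorings, and unions, so the final value of $\stdcut$ correctly counts every cut edge outside $D$, to which we only need to add $e_{A_0, B_0}$ as described.
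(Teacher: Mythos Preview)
Your proposal is correct and follows essentially the same approach as the paper: enumerate the $2^{|D|}$ bipartitions of~$D$, delete and later re-add the $A_0$--$B_0$ edges so that \autoref{lem:XP_wrt_cw} applies, filter the resulting table by the balancedness constraint, and bound the running time via \autoref{obs:expr_size}. Your additional remark about how the $D$-to-$(V\setminus D)$ edges are accounted for in the $\intro_i$ base case is a helpful clarification that the paper leaves implicit.
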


\begin{proof}
It is enough to find the minimum of $\cut(\varphi,\vec{a}, \vec{b})$ over all partitions~$A_0,B_0$ and pairs of $q$-dimensional vectors~$\vec{a}, \vec{b}$ with $|A_0|+\sum_{i=1}^q a_i$ equal to $|B_0|+\sum_{i=1}^q b_i$. 
Since \autoref{lem:XP_wrt_cw} only applies when there are no edges between $A_0$
and $B_0$, we delete them and add the number of them to the sum.
As the size of $\varphi$ is $O(q^2 \cdot n)$ by \autoref{obs:expr_size}, the
running time follows from \autoref{lem:XP_wrt_cw}.
\end{proof}%
 Given~$D$, a~$(2^{2 + 3q} - 1)$-expression for~$G - D$ can be computed in $O(n^4)$~time, where~$q$ is the cliquewidth of~$G - D$~\cite{Oum08}. Thus, \BP{} is 
FPT with respect to the size of any constant-cliquewidth vertex-deletion set 
that is obtainable in FPT time.

\begin{corollary}\label{oft-fpt}
  \BP{} is fixed-parameter tractable with respect to the size of a feedback vertex set, the size of a cluster vertex deletion set, and the size of a treewidth-$t$ vertex deletion set.
\end{corollary}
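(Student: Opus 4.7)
The proof plan is to reduce each of the three cases to \autoref{cor:constant-cliquewidth-deletion-set} by observing that all three parameters upper-bound the cliquewidth-$q$ deletion number for a small constant~$q$, and that the corresponding deletion set can be found in FPT time.

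First I would fix a parameter $k$ and a parameterized input instance. For each of the three deletion parameters, I would invoke a known FPT algorithm to obtain an appropriate deletion set $D$ of size at most $f(k)$ for some function $f$: the classical FPT algorithms for \textsc{Feedback Vertex Set}, \textsc{Cluster Vertex Deletion}, and \textsc{Treewidth\nobreakdash-$t$ Vertex Deletion} (cited in the introduction) all compute such a set in time $g(k)\cdot n^{O(1)}$. After removing $D$, the resulting graph $G-D$ lies in a class of constant cliquewidth: forests have cliquewidth at most $3$, cluster graphs (disjoint unions of cliques) have cliquewidth~$2$, and treewidth\nobreakdash-$t$ graphs have cliquewidth at most $2^{t+1}+1$, as noted in the introduction of the paper.

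Next I would apply the cited algorithm of~\citet{Oum08} to $G-D$, which, since the cliquewidth of $G-D$ is a constant $q$ depending only on the parameter type (and on $t$ in the last case), produces in $O(n^4)$ time a $q'$-expression $\varphi$ for $G-D$ with $q'=2^{2+3q}-1 = O(1)$. Then I would feed $G$, $D$, and $\varphi$ into the algorithm of \autoref{cor:constant-cliquewidth-deletion-set}. The running time is $O(2^{|D|}\cdot n^{2q'+1} q'^3)$, which is of the form $h(k)\cdot n^{O(1)}$ since $|D|\leq f(k)$ and $q'$ is a constant. Putting all three ingredients together yields an FPT algorithm in each of the three cases.

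The only potentially delicate issue is that the constants $q$ for cluster graphs ($q=2$) and for treewidth-$t$ graphs ($q \leq 2^{t+1}+1$) depend on the structural class but not on the deletion set size, so in the treewidth-$t$ case $t$ must be treated as a fixed constant; the parameter of the FPT result is then the size of the deletion set, with $t$ built into the running-time constants. Given that every individual step is just a citation or a direct application of an earlier result, I do not expect any genuine obstacle — the whole argument is a short chaining of: find~$D$ in FPT time, compute a constant-cliquewidth expression for $G-D$ in polynomial time, and invoke \autoref{cor:constant-cliquewidth-deletion-set}.
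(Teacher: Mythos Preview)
Your proposal is correct and follows essentially the same approach as the paper: the paper's justification is the short paragraph preceding the corollary, which notes that a $(2^{2+3q}-1)$-expression for $G-D$ can be computed in $O(n^4)$ time via \citet{Oum08}, so \BP{} is FPT for any constant-cliquewidth deletion set obtainable in FPT time. Your write-up simply spells out explicitly that each of the three deletion sets is indeed obtainable in FPT time and that the residual graph class has bounded cliquewidth, which is precisely what the paper leaves implicit.
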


\section{The Hardness of \BB{}}\label{appendix:W-hard}

In this section we consider the \BB{} problem, for which the vertices of a graph 
need to be partitioned into $d$ parts of equal size. As before, the cut size, 
i.e.\ the number of edges connecting vertices of different parts, needs to be 
minimized. The formal definition is stated in \autoref{sec:intro}. It is easy to 
generalize \autoref{thm2} and \autoref{cor:constant-cliquewidth-deletion-set} to 
\BB{} as follows. At the heart of each of these algorithms is a dynamic program 
which recurses on the structure of the given graph. It fills a table with an 
entry for each subgraph on which the algorithm recurses, and all integers~$a$ 
and~$b$ such that~$a+b$ is the size of the subgraph. Every entry contains the 
optimal way to partition the vertices of a subgraph into two parts of sizes~$a$ 
and~$b$. By expanding the table to store the best way to partition the vertices 
of a subgraph into~$d$ parts of sizes~$a_1,\ldots,a_d$, where now~$\sum_{i=1}^d 
a_i$ is the size of the subgraph, the optimum solution to \BB{} can be found. 
This results in algorithms with additional running time factors in the order of 
$n^{O(d)}$. In particular, the running time achieved by adapting 
\autoref{cor:constant-cliquewidth-deletion-set} to \BB{} is~$O(d^{|D|+1}\cdot 
n^{2(d-1)q+1}q^3 )$. Therefore it is a natural question to ask whether 
corresponding FPT algorithms can be found. We note however, that even for 
forests (that is, for graphs of cliquewidth at most~3~\cite{CO00}) any 
algorithm optimally solving \BB{} has to include a running time factor 
of~$n^{f(d)}$ unless FPT$\phantom{}=\phantom{}$W[1].
\begin{theorem}\label{thm:bb-whard-forests}
The \BB{} problem is W[1]-hard with respect to the number~$d$ of parts in the 
partition, even on forests with maximum degree two.
\end{theorem}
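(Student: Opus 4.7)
The plan is to give a parameterized reduction from \UBP{}, which is known to be W[1]-hard with respect to the number of bins. An instance of \UBP{} consists of item sizes~$s_1,\ldots,s_r$ encoded in unary, a bin capacity~$B$, and an integer~$k$ specifying the number of bins; the question is whether the items can be packed into the $k$ bins without exceeding their capacity. The overall strategy is to turn items into paths and to set the desired cut size to zero, so that ``not cutting any edges'' becomes ``keeping every item in a single bin''.

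Given such an instance, I would first pad it by appending items of size~$1$ until the total item size equals exactly~$kB$. Since all sizes are in unary, this enlarges the instance only polynomially, and it is clearly equivalent to the original since items of size~$1$ can always be distributed arbitrarily among bins with remaining slack. From the padded instance I construct a graph~$G$ as the disjoint union of $r'\geq r$ paths, where the $i$-th path has exactly~$s_i$ vertices. The graph~$G$ has $|V(G)|=kB$ vertices, is a forest of maximum degree two, and can be built in polynomial time. I then output the \BB{} instance with graph~$G$, desired cut size~$0$, and $d:=k$.

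For correctness, note that $\lceil |V(G)|/d \rceil = B$, so the \BB{} instance asks for a partition of $V(G)$ into~$d$ parts of size at most~$B$ each with no edges cut. Since every cut edge would lie inside some path, demanding zero cut edges is equivalent to requiring each path to lie entirely inside a single part. Hence valid partitions are in bijection with assignments of the items~$s_1,\ldots,s_{r'}$ to~$k$ bins of capacity~$B$. The parameter~$d$ equals~$k$, so the reduction is a parameterized reduction, yielding W[1]-hardness of \BB{} with respect to~$d$ on forests of maximum degree two.

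There is no genuine technical obstacle: once one observes that forcing cut size zero on a forest means each connected component lies in a single part, the reduction from \UBP{} is essentially immediate. The only points requiring slight care are that item sizes in \UBP{} are encoded in unary (so that building a path of $s_i$ vertices stays polynomial), and that the padding step is included to make the total number of vertices equal to $kB$, so that the ceiling bound $\lceil |V(G)|/d\rceil$ in the definition of \BB{} coincides with the bin capacity of the \UBP{} instance.
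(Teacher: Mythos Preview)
Your proof is correct and essentially identical to the paper's: both reduce from \UBP{}, pad with unit-size items so that the total equals $dB$, represent each item by a path of the corresponding length, and set the desired cut size to~$0$ and~$d$ equal to the number of bins. The only minor omission is that you do not explicitly dispose of the trivial case where the total item size already exceeds~$kB$ (so padding is impossible), but this is a no-instance and is handled by outputting any fixed no-instance of \BB{}.
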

We give a reduction from \UBP{} which is defined as follows.
\decprob{\UBP{}}{Positive integers~$w_1, \ldots, w_\ell, b, C$ each 
encoded in unary.}{Is there an assignment of $\ell$~items with weights $w_1, 
\ldots, w_\ell$ to at most~$b$ bins such that none of the bins exceeds 
weight~$C$?}
\citet{JKMS13} showed that \UBP{} is W[1]-hard with respect to the number~$b$ of bins.
\begin{proof}[Proof of \autoref{thm:bb-whard-forests}]
Let us construct an instance of \BB{} from a \UBP{} instance~$(w_1, \ldots, 
w_\ell, b, C)$. It is clear that~$W := \sum_{i=1}^\ell w_i \leq b \cdot C$ since 
otherwise we may output a trivial no-instance. We may furthermore assume that~$W 
= b \cdot C$ since otherwise we may add $b \cdot C - W$~items of weight~1 each. 
Now the task given by the \UBP{} instance is to find a partition of the items 
into $b$~sets such that each set has weight at most~$C = W/b = \lceil W/b 
\rceil$. Hence, an equivalent instance~$(G, k, d)$ of \BB{} is created by 
taking~$G$ to be the disjoint union of~$\ell$ paths with $w_1, \ldots, 
w_\ell$~vertices respectively, setting~$d = b$ and~$k = 0$.
\end{proof}

As mentioned above, also \autoref{thm2} can be generalized to \BB{}, yielding a running time of $h(c, k) \cdot n^{O(c)}$ to find a balanced $d$\nobreakdash-partition with cut size at most~$k$ that cuts into $c$~connected components. (Note that~$c \geq d$.) We already showed that for \VBP{} an $h(c, k) \cdot n^{O(1)}$\nobreakdash-time algorithm is out of reach (\autoref{vbswhard}). We next show that this is also true for \BB{}.

\begin{theorem}\label{prop:bb-whard-d-treewidth-k}
\BB{} is W[1]-hard with respect to the combined parameter~$(k, c)$, where $k$~is 
the desired cut size and~$c$ is the maximum number of components after removing any set of at most~$k$ 
edges from the input graph.
\end{theorem}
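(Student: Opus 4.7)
The plan is to adapt the reduction used for \autoref{vbswhard}, which establishes W[1]-hardness of \VBP{} with respect to~$(k,c)$, into a reduction for \BB{} by a vertex-to-clique lifting that turns vertex separators into edge cuts. Given a \textsc{Clique} instance~$(G,k)$ with $k$~even, I would first construct the graph~$G^*$ essentially as in \autoref{conswhard}. Then I would construct a \BB{} instance~$(G',k',d)$ as follows: replace every vertex $u\in V(G^*)$ by a clique $K_u$ of size~$L$, chosen large enough that cutting any~$K_u$ across parts is prohibitive (say $L>k'+1$); replace every edge $\{u,v\}\in E(G^*)$ by a single edge between representatives of~$K_u$ and~$K_v$; and add balancer cliques so that the total number of vertices equals~$d\cdot N$ for some target part size~$N$. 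I would set $d=k+2$, mirroring the ``$k$~separator vertices plus the two sides $A,B$'' structure used in the \VBP{} proof, and~$k'$ equal to the total number of $G^*$-edges incident to a hypothetical $k$-clique of $V(G)\subseteq V(G^*)$.

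The correctness direction mirrors the proof of \autoref{vbswhard}. A $k$-clique~$C$ in~$G$ yields a balanced separator in~$G^*$ whose lifting gives a balanced $d$-partition of~$G'$ that places each~$K_u$ with $u\in C$ into a dedicated part and distributes the remaining $K_u$'s into two sides corresponding to~$A$ and~$B$, padded by balancer cliques. The cut edges in~$G'$ are precisely those of~$G^*$ incident to~$C$, lifted. Conversely, the large choice of~$L$ forces every optimal $d$-partition of~$G'$ with cut size at most~$k'$ to keep each~$K_u$ intact in a single part; the induced partition of~$V(G^*)$ is then a balanced separator of size at most~$k$, which by the analysis of \autoref{vbswhard} witnesses a $k$-clique in~$G$.

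The main obstacle will be bounding~$k'$ as a function of~$k$ alone rather than of~$|V(G)|$, because separator vertices in~$G^*$ have degrees that grow with~$|V(G)|$ due to the clique on~$V(G)$ in Construction~\ref{conswhard}. I would address this by sparsifying the clique on~$V(G)$ to a $(k+1)$-vertex-connected bounded-degree graph (for example, an expander of degree $O(k)$), which preserves the key property exploited in the \VBP{} reduction---namely that $V(G)\setminus C$ remains a single component after removing any~$k$ vertices---while ensuring that $\deg_{G^*}(u)=\poly(k)$ for every $u\in V(G)$. This sparsification also requires re-deriving the size bounds in the analysis of \autoref{vbswhard}, but does not affect its structure. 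Combined with the observation that the lifted graph~$G'$ is connected whenever~$G^*$ is, the resulting cut bound yields $c\leq k'+O(1)=\poly(k)$, as required. Hence the constructed \BB{} instance has both $k'$~and~$c$ bounded by a function of~$k$, transferring W[1]-hardness of \textsc{Clique} to \BB{} parameterized by~$(k,c)$.
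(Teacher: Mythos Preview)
Your proposal has a genuine gap. You claim that sparsifying the clique on $V(G)$ makes $\deg_{G^*}(u)=\poly(k)$ for every $u\in V(G)$, but in the graph~$G^*$ of \autoref{conswhard} each $u\in V(G)$ is also adjacent to an edge vertex~$v_{u,w}$ for every $w\in N_G(u)$. Hence even after the sparsification $\deg_{G^*}(u)\ge \deg_G(u)$, which can be $\Theta(n)$; isolating the $k$ vertices of~$C$ into their own parts then cuts at least $\sum_{u\in C}\deg_G(u)-O(k^2)$ edges, so $k'$ is not bounded by any function of~$k$. You cannot simply assume $\deg_G$ is small, since \textsc{Clique} becomes FPT on bounded-degree graphs, and dropping the edge vertices altogether would destroy the counting argument that forces~$C$ to be a clique in the reverse direction. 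A secondary issue is that your reverse direction is only sketched: a balanced $(k{+}2)$-partition of~$G'$ with small cut is not a priori a size-$k$ balanced \emph{separator} of~$G^*$; you would still have to argue that exactly $k$ parts pick up one~$K_u$ each (together with a balancer) and that the remaining two parts have no $G^*$-edges between them, and the \VBP{} analysis does not give this for free.

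For comparison, the paper takes an entirely different route: it reduces from \textsc{Multicolored Clique} via the \ECP{} construction of Enciso et al., setting $d=2s(s-1)$ and $k=3s(s-1)$. The crux is a short local argument that in the constructed (connected) graph every part of a balanced $d$-partition with at most~$k$ cut edges must itself be connected; the \BB{} instance then coincides with an \ECP{} instance whose correctness is already established. This sidesteps any need to control vertex degrees in the source instance.
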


\newcommand{\ECP}{\textsc{Eq\-ui\-table Con\-nect\-ed Par\-ti\-tion}}

We use a slight modification of the construction used by \citet{EFGKRS09} to 
show hardness for
\ECP{}. This problem is defined as follows.
\pagebreak[3]
\decprob{\ECP{}}{A graph~$G$ and a positive integer~$d$.}{Is there a partition of the vertices of~$G$ into $d$~parts~$C_1, \ldots, C_d$ such that for all~$i, j$, $1 \leq i < j \leq d$, we have~$||C_i| - |C_j|| \leq 1$ and $G[C_i]$ is connected?}

We show that for graphs
constructed by the corresponding hardness reduction with some small tweaks, each 
balanced $d$-partition
with cut size at most~$3d/2$ consists of parts that are connected and, thus, we
obtain a hardness reduction for \BB{}. The difference of the construction
of~\citet{EFGKRS09} and \autoref{cons:bb-whard} below lies in making the part sizes precise and giving the upper bound on the cut size. Therefore, in the correctness proof, we may rely on the correctness of
the reduction to \ECP{} and use it to prove correctness also for \BB{}.

The reduction for hardness of \ECP{} is from the W[1]-hard~\MCC{} problem~\cite{FHRV09}. In \MCC{} one
is given a graph~$G = (V, E)$, where each vertex~$v \in V$ is colored by a
color~$c(v) \in \{1, \ldots, s\}$ and the question is whether~$G$ contains a
clique with $s$~vertices
such that each vertex it contains has a distinct color.

\newcommand{\choice}[2]{\ensuremath{(#1, #2)}\nobreakdash-choice}
\begin{construction}\label{cons:bb-whard}
Let $(G = (V, E), c, s)$ be an instance of \MCC{} and denote~$V_i = \{v \in V \mid c(v) = i\}$. We construct an instance of \BB{}  in
four steps. First, we construct a ``skeleton'' graph and then we successively
replace vertices and edges by more complicated gadgets. In each step, we will
refer to the skeleton graph as the graph with all previous
substitutions. The number of parts we are looking for in the instance of \BB{} will be set to~$d := 2s(s - 1)$ and the the cut size to~$k := 3s(s - 1)$ yielding an average of three cut edges incident with each part.

{\bf Step 1:} The skeleton graph contains~$s$ cycles, each with~$2(s - 1)$
vertices. For the~$i$'th cycle, let us denote its vertices ``clock-wise'' by 
$N^i_1,
P^i_1, N^i_2, P^i_2, \ldots, N^i_{s}, P^i_{s}$, where we omit~$N^i_i, P^i_i$ in
the sequence. We call these vertices \emph{anchors}. We sometimes need to refer
to the next index in the sequence; for this we define~\[\suc_i(j) :=
\begin{cases} (i+1) \bmod s, &  j = i - 1\\ 1, & j = s\\ j + 1, &
\text{otherwise.}\end{cases} \] The~$i$'th cycle corresponds to the $i$th color
in the \MCC{} and will serve as a ``vertex-chooser'', choosing a vertex with
color~$i$ to be in the clique. For all~$i,j$ with~$1 \leq i < j \leq s$ we
connect the $i$th and the $j$th cycle by the edges~$\{N^i_j, P^j_i\}$
and~$\{N^j_i, P^i_j\}$. These connections will ``transmit'' the choices of the
clique-vertices to the neighboring cycles and ensure that the chosen vertices
are adjacent. This concludes the description of the skeleton graph. We will call
the~$s$ cycles \emph{vertex chooser}s and the inter-cycle edges
\emph{transmitter}s.

  {\bf Step 2:} We now replace some edges in the vertex choosers by a ``choice''
gadget. 
Let $(A, b)$ be a tuple of an integer~$b > 0$ and a set~$A = \{a_1,
\ldots, a_t\}$ of integers such that~$0 \leq a_i < a_{i + 1}\le b$, $1
\leq i \leq t$.  An \emph{\choice{A}{b}} is a path~$v_1, \ldots, v_{t
  + 1}$ with~$t+1$ vertices, each possibly having additional vertices
pending on it (that is, each vertex can have additional degree-one
neighbors). The number of pending vertices on~$v_i$,~$1 \leq i \leq t$
is determined by
\[\begin{cases}
    a_1, & \text{if } i = 1\text{,} \\
    a_i - a_{i - 1} - 1, & \text{if } 2 \leq i \leq t\text{, and} \\
    b - a_t, & \text{if } i = t + 1\text{.}
  \end{cases}\]

The choices will be cut exactly once between vertex~$v_p$ and
vertex~$v_{p+1}$ by any feasible partition and if they are cut in this way,
observe that, except for the first and the last vertex, they contribute~$a_1 +
\sum_{i = 1}^p((a_i - a_{i - 1} - 1) + 1) = a_p$ vertices to one of the parts
and~$b - a_t + \sum_{i = p + 1}^t((a_i - a_{i - 1} - 1) + 1) = b - a_p$ to the
other part.

Using the definition of choice, for every~$1 \leq i \leq s$, we replace each
edge $\{P^i_j, N^i_{\suc_i(j)}\}$ in the skeleton graph by an~\choice{A_i}{\max 
A_i}: we identify $N^i_{\suc_i(j)}$ and the first vertex of the choice (the 
vertex with the lowest index in the path), and we identify~$P^i_j$ and the last
vertex of the choice. For the definition of~$A_i$, first let~$z_0 = 2 |E| + 10$.
Then,~$A_i = \{p \cdot z_0 \mid 1 \leq p \leq |V_i|\}$. %

Intuitively, the vertex choosers in the skeleton graph choose vertices as
follows. Assume that also each edge~$\{N^i_j, P^i_j\}$ is replaced by an~\choice{A_i}{\max A_i} (in order to represent adjacency between the chosen vertices,
the actual choices used will be more complicated). Furthermore, assume that each
part of any solution partition is connected and contains exactly one of any of
the anchors (vertices~$N^i_j, P^i_j$). Let us ignore the transmitters and
consider the~$i$th vertex chooser. Then, in order to have equal-size parts\footnote{Parts in a feasible partition for \BB{} may not be of equal size; we will consider this issue more closely below.}  the
cut between the parts of two ``neighboring'' anchors~$N^i_j, P^i_j$ has to be
at the same position for every choice in the vertex chooser, that is, every
choice is cut exactly once between the vertex~$v_p$ and~$v_{p+1}$ for some~$p$.
The position of the cut in the choices of vertex chooser~$i$ corresponds to the
chosen clique-vertex for color~$i$ in the \MCC{} instance~$(G, c, s)$.

{\bf Step 3:} In order to represent the adjacency of two chosen vertices, we
now substitute choices for both the transmitter edges and some further edges in
the vertex choosers. Fix
arbitrary one-to-one mappings~$\phi_i : V_i \to \{1, \ldots, |V_i|\}$ and~$\psi
: E \to \{1, \ldots, |E|\}$. We replace the edge between the anchors~$N^i_j,
P^i_j$ in each vertex chooser~$i$,~$1 \leq i \leq s$, and for each~$j$,~$1 \leq
j \leq s, i \neq j$, by the \choice{A^i_j}{|V_i| \cdot z_0 + |E|} by
identifying~$N^i_j$ and the first vertex of the choice and identifying~$P^i_j$
and the last vertex of the choice. The choice is supposed to choose vertices as
the choice used in Step~2 but shall also choose an edge incident with the chosen
vertex and a vertex with color~$j$. Thus, we define~$A^i_j = \{p \cdot z_0 +
\psi(\{u, v\}) \mid \{u, v\} \in E \wedge \phi_i(v) = p \wedge c(u) = j \}$. To
ensure that vertex choosers for different colors agree on the chosen edge, we 
replace each
transmitter edge~$\{N^i_j, P^j_i\}$ by a~\choice{\{1, \ldots, |E|\}}{|E|} by
identifying~$P^j_i$ with the first vertex of the choice and identifying~$N^i_j$
with the last vertex of the choice.

{\bf Step 4:} We now add further pending vertices to anchors (vertices~$N^i_j,
P^i_j$) in order to ensure that the parts of the desired partition are connected
and have equal size, and to ensure that the choices work as intended. 
 We add~$10\cdot z_0 \cdot (|V| + |E|) - z_0 \cdot |V_i| - |E|$ pending vertices to 
each
of~$N^i_j, P^i_j$ for~$i, j, 1 \leq i, j \leq s, i \neq j$. This concludes the construction of the \BB{} instance~$(G', k = 3s(s - 1), d = 2s(s - 1))$.

\end{construction}
\noindent Let us prove that \autoref{cons:bb-whard} is the promised parameterized reduction.
\begin{proof}[Proof of \autoref{prop:bb-whard-d-treewidth-k}]
First, we derive the total number of vertices in~$G'$. The number of pending vertices added to anchors in Step~4 is
\begin{align*}
& \phantom{={}} \sum_{i = 1}^s 2(s - 1) \cdot (10 \cdot z_0 \cdot (|V| + |E|) - z_0 \cdot |V_i| - |E|)\\
&= 2(s-1) \cdot (s \cdot 10 \cdot z_0 \cdot (|V| + |E|) -  z_0 \cdot |V| - s \cdot |E|)\text{.}
\end{align*}
Note that any \choice{A}{b} contains exactly $b + 2$~vertices. Let us count the number of vertices contained in choices in vertex chooser~$i$, $1 \leq i \leq s$, without the anchors:
\begin{align*}
  (s - 1) \cdot z_0 \cdot |V_i| + (s - 1) \cdot (|V_i| \cdot z_0 + |E|)\text{.}
\end{align*}
The number of vertices in choices of vertex choosers thus totals at
\begin{align*}
  2(s - 1) \cdot z_0 \cdot |V| + s \cdot |E|\text{.}
\end{align*}
Counting the number of vertices in transmitters without the anchors yields~$s(s - 1)\cdot |E|$. Adding all the vertices and the anchors we thus obtain~$2 s(s - 1) \cdot (10 \cdot z_0 \cdot (|V| + |E|) + |E|/2 + 1)$. Without loss of generality, we may assume that~$|E|$ is even because otherwise we may simply add an isolated edge to~$G$. Then, note that the number of vertices in~$G'$ is a multiple of~$2s(s-1)$. Thus, any partition of the vertices into $d = 2s(s-1)$~parts such that each part has size at most~$\lceil |V(G')| / d \rceil$ contains only parts of size exactly~$$n_0 := 10 \cdot z_0 \cdot (|V| + |E|) + |E|/2 + 1\text{.}$$ This implies that the required upper bound on the part sizes of \BB{} and the equal-size requirement of \ECP{} coincide on the instances created by \autoref{cons:bb-whard}.

\citet{EFGKRS09} proved that \autoref{cons:bb-whard} is correct if every part
in the desired partition is connected and the cut size can be arbitrary. We
simply prove that setting the cut size to~$3(s-1)s$ ensures connectivity of
each part. For this, we consider one part and the number of cut edges it
contributes. We prove the following.
\begin{quote}
   {\bf Claim:} Each part in a balanced partition has at least three incident
cut edges and if it has exactly three, then it is connected.
\end{quote}
This claim implies that each of the $2(s - 1)s$~parts is connected in a
yes-instance, since otherwise the cut size would be at least~$3(s-1)s + 1$.
Let us now consider a connected component in one of the parts. If it contains
exactly one vertex, then we call it \emph{small} and it has at least one 
incident
cut edge. If it contains at least two vertices but none of the anchors
(vertices~$N^i_j, P^i_j$), then we call it \emph{medium}. Observe that medium 
connected
components have at least two incident cut edges. Note also that medium connected
components have size at most~$\max\{z_0|V_i| + |E| \mid 1 \leq i \leq s\}$
because this number is the maximum number of vertices in a choice. If the
connected component contains at least one of the anchors, then we call it
\emph{large}. It is not hard to see that large connected components have at 
least three incident cut edges. This is clear if the component contains only 
one anchor. If it contains at least two anchors, then, since each part of the 
partition has size exactly~$n_0$ and each anchor has at 
least~$9 z_0 (|V| + |E|)$ pending vertices, there are at least three pending 
vertices cut off. Note also that every connected
component of size greater than~$\max\{z_0|V_i| + |E| \mid 1 \leq i \leq s\}$ is
large.

For the first part of the claim, assume that there is a part with at most two
incident cut edges. Thus, it can either contain two small connected components
or one medium one. In both cases, the part is smaller than~$n_0$ for every large-enough \MCC{} instance which
contradicts the balancedness of the partition.

For the second part of the claim, assume that there is a part with exactly
three incident cut edges and at least two connected components. Since the part size is exactly~$n_0$ we conclude that it contains at
least one large connected component. This contradicts the fact that this part
has at most three incident cut edges.

This finishes the proof of the above claim and, thus, \autoref{cons:bb-whard}
is a reduction from \MCC{} to \BB{}. It is easy to verify that it is computable
in polynomial time. It is also clear that~$k$ is bounded by some function of~$s$ and also $c$~is because the graph obtained by \autoref{cons:bb-whard} is connected. Hence, \autoref{cons:bb-whard} is also a parameterized reduction with respect to these parameters.
\end{proof}

We conjecture that this hardness result can be extended to planar graphs using a 
similar technique as \citet{EFGKRS09} and
even to two-dimensional grid graphs using a specific kind of planar embedding.

Interestingly, it seems pivotal that the treewidth is~$\Omega(s^2)$ for the above construction. Hence, we can not trivially infer that \BB{} is W[1]-hard with respect to~$k$ and~$d$ on trees, for example. This is left as an interesting open question.

\section{\BB{} and the Vertex Cover Number}\label{appendix:BB-FPT}

In the following we present an FPT algorithm for \BB{} and parameter \vcn{}, 
which is the size of a minimum vertex cover of the graph. Recently 
\citet{GanianO13} gave an FPT algorithm for the combined parameters~$\tau$ 
and~$d$ with a running time of~$2^{2^{O(d+\tau)}}+2^\tau n$. We improve on this 
by removing the dependence on~$d$: our algorithm has a running time 
of~$O(\tau^\tau n^3)$. Some of the ideas of our algorithm are inspired 
by~\citet{DouchaK12}.

\begin{theorem}\label{bb-vc-fpt}
  \BB{} is fixed-parameter tractable with respect to the size of a minimum 
vertex cover of the input graph.
\end{theorem}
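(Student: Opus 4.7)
First I would compute a minimum vertex cover $C$ of $G$, say in $O(1.2738^{\vcn{}} + \vcn{} n)$ time by standard FPT algorithms, so that $|C|\le\vcn{}$ and $I := V(G)\setminus C$ is independent. Next I classify vertices of $I$ into \emph{types}: $u,v\in I$ have the same type iff $N(u)\cap C = N(v)\cap C$. There are at most $2^{\vcn{}}$ types; for each type $t$ write $N_t\subseteq C$ for the common neighborhood and $n_t$ for the number of type-$t$ vertices. The decisive structural observation is that any two vertices of the same type are interchangeable, so a solution is determined, up to symmetry, by how $C$ is partitioned among the parts and by how many type-$t$ vertices go to each part.

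\textbf{Mixed versus pure parts.} Call a part of the desired partition \emph{mixed} if it meets $C$ and \emph{pure} otherwise. Since $|C|\le\vcn{}$, there are at most $\vcn{}$ mixed parts. I would enumerate all set partitions of $C$ into nonempty blocks $C_1,\dots,C_r$ with $1\le r\le\vcn{}$, giving the $C$-intersections of the mixed parts; there are at most $B_{\vcn{}}\le \vcn{}^{\vcn{}}$ such guesses. For each guess, the number of cut edges with both endpoints in $C$ is already fixed, since it equals the number of edges of $G[C]$ whose endpoints lie in different blocks; edges inside $I$ never contribute because $I$ is independent. All remaining cut contributions come from $C$--$I$ edges, i.e.\ from the placement of type-$t$ vertices.

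\textbf{Placing $I$ vertices.} Given the guessed blocks, placing a type-$t$ vertex in mixed part $i$ contributes exactly $|N_t\setminus C_i|$ to the cut, while placing it in any pure part contributes $|N_t|$. Crucially, all pure parts are interchangeable for the objective, so it suffices to track how many vertices of each type go to each mixed part and to the ``pool'' of pure parts, subject to the capacity $\lceil n/d\rceil - |C_i|$ on mixed part $i$ and a total capacity $(d-r)\lceil n/d\rceil$ on the pool (feasibility of redistributing unit items into $d-r$ bins of capacity $\lceil n/d\rceil$ is immediate from the totals, and nonemptiness, if required, is a simple additional inequality). This is a transportation problem with at most $2^{\vcn{}}\cdot(\vcn{}+1)$ variables and integer right-hand sides, which can be solved optimally in polynomial time via min-cost flow (alternatively, as a small ILP using Lenstra's algorithm in $f(\vcn{})\cdot\poly(n)$ time).

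\textbf{Putting it together.} I would return the minimum total cut over all $\vcn{}^{\vcn{}}$ guesses of the $C$-partition, yielding a running time of $\vcn{}^{\vcn{}}\cdot\poly(n)$. The main obstacle I anticipate is precisely the pure parts: naively one might fear that the dependence on $d$ reappears because $d-r$ can be huge, but this is avoided by observing that pure parts contribute a constant per-vertex cost $|N_t|$ independent of which pure part is chosen, so they collapse into a single sink of the transportation instance. A secondary care point is enforcing the upper bound $\lceil n/d\rceil$ simultaneously on mixed and pure parts and (if the definition demands) nonemptiness of all $d$ parts; both translate into linear constraints on the same transportation model and do not break polynomial solvability.
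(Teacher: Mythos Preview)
Your proposal is correct and follows essentially the same high-level strategy as the paper: enumerate the at most $\vcn{}^{\vcn{}}$ ways to distribute the vertex cover $C$ among parts, then optimally place the independent set $I$ by a polynomial-time assignment computation. The difference is only in how the assignment step is realized. The paper does it more directly: it builds a bipartite graph with one node per vertex of $I$ on one side and one node per \emph{slot} in each of the $d$ parts on the other side (the $j$th part contributing $s_j = \lceil n/d\rceil - |C_j|$ slots), with edge cost $c_v(j)$, and solves a min-cost maximum matching; since $\sum_j s_j \le 2n$, this graph has $O(n)$ nodes and $O(n^2)$ edges regardless of $d$, yielding the stated $O(\vcn{}^{\vcn{}} n^3)$ bound. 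You instead aggregate $I$-vertices by neighborhood type and collapse all pure parts into a single pool sink, obtaining a transportation instance with $O(2^{\vcn{}})$ sources and $O(\vcn{})$ sinks. Both formulations are equivalent and give the same FPT classification; your aggregation is a natural refinement that could improve the polynomial factor (or, via Lenstra, make it depend only on $\vcn{}$), while the paper's version avoids the extra $2^{\vcn{}}$ in the type enumeration and keeps the argument shorter.
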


\begin{proof}
Let a vertex cover $C$ of size \vcn{} be given. We consider all partitions of~$C$ into $d$~sets. Up to isomorphism, there are at most~$\vcn{}^\vcn{}$ possible
such partitions. This can be seen as follows. If~$d\geq\vcn{}$ we can pick
\vcn{} representatives of the $d$ sets, since the maximal number of sets into
which the \vcn{} vertices can be partitioned is~\vcn{}. Otherwise $d$~is upper
bounded by \vcn{}. In both cases each of the \vcn{} vertices can be put into one
of at most \vcn{} sets, which gives the upper bound on the number of partitions.

For each partition of~$C$ we now check whether some set has more than $\lceil
n/d\rceil$ vertices. If this is the case, the partition is discarded. Otherwise
we need to partition the vertex set~$I=V\setminus C$ not belonging to the vertex
cover. For~$v\in I$ let~$c_v(j)$ denote the number of edges that are cut when
putting~$v$ into set~$j$. Since~$I$ is the complement of a vertex cover, it
induces an independent set. Hence the cost~$c_v(j)$ is solely determined by the
partition of~$C$ which at this point is fixed. Also depending on this partition,
each set~$j$ can still hold at most some $s_j$ vertices until it has reached its
full capacity of~$\lceil n/d \rceil$. 

We need to compute an assignment of the vertices in $I$ to sets such that the
capacities of the sets are not exceeded and the total introduced cost is
minimal. This can be done using a minimum cost maximum matching in an auxiliary
graph as follows. Introduce a vertex $w_v$ for each vertex $v\in I$, and $s_j$
vertices $w_j^1,\ldots, w_j^{s_j}$ for each set $j\in\{1,\ldots d\}$. Now for
each $v\in I$, $j\in\{1,\ldots d\}$, and $l\in\{1,\ldots,s_j\}$, connect vertex
$w_v$ with vertex $w_j^l$ using an edge of cost $c_v(j)$.

 Now a min-cost maximum matching in the auxiliary graph corresponds to an
assignment of each vertex $v$ to a set $j$. The resulting partition of the graph
does not have sets containing more than $\lceil n/d \rceil$ vertices. Moreover
this partition has minimum cut size for the fixed partition of the vertex cover,
since the costs of the edges in the auxiliary graph reflect the incurred cut
edges due to the partition of~$I$.

By going through the above steps and picking the best solution among all
partitions of $C$ that are not discarded, the minimum cut size can be computed.
Note that the auxiliary graph is bipartite with at most $3n$ vertices and
$O(n^2)$ edges. Hence the algorithm runs in time $O(\vcn^\vcn\cdot n^3)$, using 
Dijkstra's algorithm in combination with Fibonacci heaps to solve the matching 
problem~\cite{Fredman:1987}.
\end{proof}

\section{Open Problems}\label{sec:concl}

We presented a $h(k,c)\cdot n^{9+c}$-time algorithm for finding a
$c$-component bisection of size at most $k$. However the function
$h(k,c)$ we gave is doubly exponential in $k$, whereas \citet{CLPPS14}
give a $2^{O(k^3)}\cdot n^3 \log^3n$-time algorithm for \BP{}. Hence
even for constant $c$ the latter algorithm improves over ours. However
it is quite conceivable that better running times should be achievable
when combining the parameters $k$ and $c$.  Since $c$ is a small
constant in most practical applications, it would be of real practical
value to find improved FPT algorithms for the \BP{} problem
parameterized by~$k$ and with constant~$c$.

Concerning the \BB{} problem, 
it was already known that it is considerably harder than \BP{}. 
Even for simple graph classes such as trees or grids, the problem is NP-hard to 
approximate~\cite{AF12_J}. It was therefore asked in~\cite{AF12_J} whether 
practical algorithms beyond the standard deterministic worst-case scenario 
exist. In this article we ruled out FPT algorithms for several parameters. 
However the general question remains only partially answered by our negative 
results. One possible direction for further research is finding fixed parameter 
approximation algorithms~\cite{marx2008parameterized} for the problem.

\paragraph{Acknowledgments.} 
René van Bevern and Manuel Sorge gratefully acknowledge support by the DFG, research project DAPA, NI 369/12. Ondřej Such\'y is also grateful for support by the DFG, research project AREG, NI 369/9. Part of Ondřej Such\'y's work was done while with TU Berlin.

The authors thank Bart M. P. Jansen, Stefan Kratsch, Rolf 
Niedermeier and the anonymous referees 
for helpful suggestions. 

\setlength{\bibsep}{0pt}
\bibliographystyle{abbrvnat}
\bibliography{balance}

\newpage
\appendix

\end{document}